\newcommand{\typeof}{1} %
\newcommand{\longversion}[1]{\ifthenelse{\equal{\typeof}{0}}{}{#1}\ignorespaces}
\newcommand{\shortversion}[1]{\ifthenelse{\equal{\typeof}{0}}{#1}{}\ignorespaces}
\newcommand{\longshortversion}[2]{\ifthenelse{\equal{\typeof}{0}}{#2}{#1}}
\title{The Lambda Calculus is Quantifiable} 
\author{Valentin Maestracci,\\ {\small I2M, Universit\'e d'Aix-Marseille  }
\and 
Paolo Pistone,\\ {\small LIP, Universit\'e Claude Bernard Lyon 1}}
\titlespacing*{\section}{0ex}{1ex}{1.5ex}
\titlespacing*{\paragraph}{0ex}{1ex}{1.3ex}
\titlespacing*{\optionalsubsection}{0ex}{1ex}{1ex}
\newtheorem{example}{Example}[section]
\newtheorem{definition}{Definition}[section]
\newtheorem{remark}{Remark}[section]
\newtheorem{theorem}{Theorem}[section]
\newtheorem{lemma}[theorem]{Lemma}
\newtheorem{proposition}[theorem]{Proposition}
\newcommand{\modd}[1]{\llbracket#1\rrbracket }
\newcommand{\B}[1]{\mathbf{#1}}
\newcommand{\model}[1]{\modd{#1}}
\newcommand{\TT}[1]{\mathtt{#1}}
\newcommand{\C}[1]{\mathcal{#1}}
\newcommand{\BB}[1]{\mathbb{#1}}
\newcommand{\To}{\Rightarrow}
\newcommand\twoheaduparrow{\mathrel{\rotatebox{90}{$\twoheadrightarrow$}}}
\newcommand{\intv}{\mathbf{I}(\mathbb{R})}
\begin{document}

\maketitle

\begin{abstract}
In this paper we introduce several quantitative methods for the lambda-calculus based on partial metrics,
a well-studied variant of standard metric spaces that have been used to metrize non-Hausdorff topologies, like those arising from Scott domains. First, we study quantitative variants, based on program distances, of sensible equational theories for the $\lambda$-calculus, like those arising from B\"ohm trees and from the contextual preorder. Then, we introduce applicative distances capturing higher-order Scott topologies, including reflexive objects like the $D_\infty$ model. Finally, we 
 provide a quantitative insight on the well-known connection between the B\"ohm tree of a $\lambda$-term and its {Taylor expansion}, by showing that the latter can be presented as an isometric transformation.

\end{abstract}

\section{Introduction}

\subparagraph{Two notions of program approximation}
One of the fundamental goals of program semantics is to understand when two different programs compute \emph{the same} function. This is why, since its origins, the semantics of 
 the $\lambda$-calculus, the mathematical foundation for higher-order programming languages, has been focused on the problem of \emph{program equivalence}. Indeed, \emph{$\lambda$-theories}, the equational theories of the $\lambda$-calculus, constitute one of the pillars of the mathematical theory behind this much studied language, ranging from more operational theories, like $\beta$-equivalence, to more observational ones, like contextual equivalence. 

Actually, several well-known denotational models of the $\lambda$-calculus are not just the source for some $\lambda$-theory, but they also provide a \emph{topological} point of view on them: the interpretations of the $\lambda$-calculus via B\"ohm trees, Scott domains or the Taylor expansion, involve spaces whose objects can be seen as limits of "finite" approximants, as well as \emph{continuous} functions between such spaces, that is, functions commuting with such limits. In this way, the $\lambda$-theory induced by a topological model is associated with a notion of approximation, in the sense that a program is equivalent to another program whenever the net of finite approximants of the first converges to the second.
%

However, in general computer science, the approximation of a program is more commonly thought as the fact of computing values which are \emph{close} (possibly \emph{up to} some probability) to those produced by the program itself. By the way,
 the replacement of computationally expensive algorithms by more efficient, but somehow inaccurate, ones, is pervasive in all domains involving probabilistic or numerical methods. 
This has motivated, in the last few years, a rise of interest towards semantic approaches to functional languages focused, rather than on program equivalence, on notions of \emph{program similarity} \cite{Reed_2010, Gaboardi2017, DalLago2017, dallago, Hoshino2023}. In these approaches, each type is endowed with a \emph{pseudo-metric}, measuring the amount to which two programs behave in a similar, although non necessarily equivalent, way, and thus providing ways to estimate the errors produced by approximated optimization methods. At the same time, since any pseudo-metric induces an equational theory over programs, namely the one formed by all the pairs of programs which are at \emph{no} distance the one from the other, this approach
can be seen as a way to enrich, or "topologize", well-established notions of program equivalence.

\subparagraph{Quantifying $\lambda$-theories via partial metrics}

In a sense, the overall goal of this paper is to reconcile these two, apparently different, ways to look at program approximations, by developing metric counterparts to well-established methods for the $\lambda$-calculus, thus providing  ways to enrich $\lambda$-theories with notions of program similarity.

One reason why one could wish to approximate $\lambda$-theories by metrics is {computational}: while equational theories are generally undecidable, equivalences and, as we'll see, distances of finite approximants can often be computed effectively. Could one thus express the equivalence between two terms as the fact that the distance between their respective approximants gets closer and closer to zero?
This amounts to requiring that the limits in the topology $T_1$ generating the $\lambda$-theory are \emph{also} limits for the topology $T_2$ generated by some program pseudo-metric. In other words, that $T_1$ is \emph{finer} than $T_2$.

At the same time, since program metrics are generally undecidable as well, could the distances between two programs be themselves approximated by looking at the (computable) distances between their approximants? 
This amounts to requiring, conversely, that the metric limits, that is, the limits in $T_2$, are \emph{also} limits in the topology $T_1$ inducing the $\lambda$-theory. In other words, that $T_2$ is \emph{finer} than $T_1$.

All this sums up to the following question: can we make the topology arising from the semantics and the topology arising from the metric \emph{coincide}? 
At first, one would tend to answer no: for instance, while the topology of a metric space is \emph{always} Hausdorff, the topologies arising from the semantics of the $\lambda$-calculus (e.g.~Scott domains) are not. Nevertheless, there is a well-known reply to this answer, namely \emph{partial metrics} \cite{matthews, Bukatin1997, Stubbe2018, Waskiewicz2001,Schellekens2003, Valero2011}, a well-studied variant of standard metrics developed in connection with ideas from program semantics. A partial metric differs from a standard metric in that the self-distances $p(x,x)$ need \emph{not} be zero; correspondingly, one has a \emph{stronger} triangular law of the form $p(x,y)\leq p(x,z)+p(z,y)-p(z,z)$, taking into account the self-distance of the middle point $z$. As a consequence, distinct points will \emph{not} have disjoint neighborhoods, as soon as the self-distance of one makes it "too thick", so to say, to separate it from the other.

In fact, \emph{any} continuous domain with a countable basis is \emph{quantifiable} (a term we borrow from \cite{Schellekens2003}) by a partial metric. This means that its Scott topology does coincide with the topology induced by the metric \cite{Bukatin1997, ONeill, Waskiewicz2001, Schellekens2003, Smyth2006}, so that the limits in the Scott topology agree with the metric limits and viceversa.

While the quantification of domains via partial metrics has been well-known for a while, the application of such results   
to the study of higher-order languages has not been much explored so far. We do it in this paper: we introduce quantitative variants for well-known methods like B\"ohm trees, Scott domains and the Taylor expansion, based on partial metrics, at the same time providing ways to approximate their associated $\lambda$-theories.

\subparagraph{Contributions}

%
%
%
%

%

In this paper we show that several well-known approaches to the study of the $\lambda$-calculus can be \emph{quantified}, that is, enriched with metric reasoning on program similarity.
Our contributions can be summarized as follows:
\begin{itemize}
\item We introduce a partial metric variant of the notion of \emph{sensible} $\lambda$-theory \cite{Baren95} and we explore quantitative versions of well-known theories like those arising from B\"ohm trees and the contextual preorder.

\item We introduce \emph{applicative} partial metrics, and we illustrate their use to quantify higher-order Scott domains as well as reflexive objects, like Scott's model $D_\infty$. 
This opens the way to apply metric techniques to typed or non-typed higher-order languages.

\item Finally, we study the \emph{Taylor expansion} of $\lambda$-terms \cite{ER, Regnier2006, Barbarossa2019}, a powerful technique inspired by ideas from linear logic, and show that it can be presented as an isometric transformation from B\"ohm trees to sets of resource $\lambda$-terms, thus refining the well-known \emph{commutation theorem} \cite{Regnier2008}, that relates the corresponding $\lambda$-theories. 

\end{itemize}

\subparagraph{Outline}

In Section \ref{sec2} we recall basic notions about partial metric spaces.
In Section \ref{sec3} we introduce quantitative variants of sensible $\lambda$-theories. In Section \ref{sec4} we investigate the quantification of higher-order Scott domains via applicative distances, and in Section \ref{sec5} we apply these ideas to the quantification of reflexive objects. In Section \ref{sec6} we discuss the Taylor expansion.
Finally, in Section \ref{sec7} we indicate related work as well as a few future directions.

\section{Partial Metric Spaces}\label{sec2}


In this section we introduce partial metric spaces and we illustrate a few examples. 

\begin{definition}
	A function $p:X\times X\to [0,+\infty]$ is called a \emph{partial metric} (PM) when it satisfies the following axioms:
	\begin{description}
		\item[(P1)] $p(x,x) \leq p(x,y)$,		
		\item[(P2)] If $p(x,x) = p(x,y) = p(y,y)$ then $x=y$,
		\item[(P3)] $p(x,y) = p(y,x)$,
		\item[(P4)] $p(x,y)  \leq p(x,z) + p(z,y)-p(z,z)$.
	\end{description}
	$p$ is called a \emph{partial pseudo-metric} (PPM) when it satisfies P1,P3 and P4, and
	a \emph{partial ultra-metric} (PUM) when it satisfies P1, P3 and
	\begin{description}
		\item[(P4U)] $p(x,y)\leq \max\{p(x,z),p(z,y) \}$.
	\end{description}

\end{definition}

While in a standard (pseudo-)metric space each point is at distance $0$ from itself, 
condition P1 states that the distance of a point from itself is only required to be \emph{smaller} than its distance from any other point. Condition P2 adapts the usual separation condition 
$d(x,y)=0\To x=y$ to non-zero self-distances, and distinguishes PMs from PPMs. 
Condition P3 is the usual symmetry, while P4 is a strengthening of the triangular law of metric spaces, that also takes into account the possibly non-zero self-distance of the middle point $z$. 
P4U is as for standard ultra-metric spaces. Notice that P4U implies P4, so PUMs are indeed PPMs.
Notice that a PPM (resp.~a PUM, a PM) $p$ always induces a pseudo-metric (resp.~a ultra-metric, a metric) by the formula  $d_p(x,y):=2p(x,y)-p(x,x)-p(y,y)$.

A PPM $p$ induces a preorder on $X$ defined by $x \leq_p y$ iff $p(x,y) \leq p(x,x)$.
Notice that this implies by P1 that $p(x,y) = p(x,x)$. When $p$ is a PM the preorder $\leq_p$ is indeed an order. 	
With respect to this preorder, $p$ is \emph{antimonotonic} in the sense that $x\leq_p x'$ implies $p(x',y)\leq p(x,y)$.
Intuitively, the higher points are those with smaller self-distance.

The symmetrization of the preorder $\leq_p$ yields an equivalence relation $\simeq_p$.
 In the next section we will indeed explore the use of partial metrics as ways of approximating preorders or equivalence relations  on $\lambda$-terms.
We will say that a PPM $p$ \emph{quantifies} an order (resp.~an equivalence) relation over $X$ when this relation coincides with $\leq_p$ (resp.~$\simeq_p$).

Let us now talk about the topology induced by a PPM. 
\begin{definition}[open balls, topology]
Let $p$ be a PPM on $X$. For any $x\in X$ and $\epsilon\in (0,+\infty)$, the \emph{open ball of center $x$ and radius $\epsilon$} is the set $B_\epsilon^p(x)=\{ y\in X\mid p(y,x)< p(x,x)+\epsilon\}$.
The \emph{topology of $p$}, noted $\C O_p(X)$, is formed by all subsets $U\subseteq X$ which are unions of open balls.
\end{definition}

Recall that, by P1, the distance between two points $x,y$ is always greater or equal than the self-distances $p(x,x),p(y,y)$. 
We could equivalently define open balls as for standard metric spaces, i.e.~$B_\epsilon^p(x)=\{ y\in X\mid p(y,x)< \epsilon\}$, but this would make $B_\epsilon^p(x)$ \emph{empty} whenever $\epsilon\leq p(x,x)$. Open balls are upper: 
if $y\in B_\epsilon^p(x)$ and $y\leq_p y'$, by antimonotonicity we deduce 
$p(y',x)\leq p(y,x)<p(x,x)+\epsilon$, whence $y'\in B_\epsilon^p(x)$.
As a consequence, all open sets $U\in \C O_p(X)$ are upper.

Contrarily to standard metric spaces, the topology $\C O_p(X)$ is not in general Hausdorff: suppose $x,y$ are distinct points such that $x\leq_p y$; since any open set containing $x$ must also contain $y$, there can be no \emph{disjoint} open sets $U,V$ such that $x\in U$ and $y\in V$.  In some cases, as we'll see, $\C O_p(X)$ may coincide with the Scott topology induced by the order $\leq_p$.

In Sections 4 and 5 we will explore the use of partial metrics as ways of approximating (Scott) topologies on $\lambda$-terms.
We will say that a PPM $p$ \emph{quantifies} a topology $\C O(X)$ over $X$ when $\C O(X)=\C O_p(X)$.

\emph{Continuous} functions between PPMs can be defined in the usual topological sense:
given PPMs $p,p'$, respectively on $X$ and $X'$, a function $f:X\to X'$ is \emph{$p,p'$-continuous} when 
 $f^{-1}$ sends open sets in $\C O_{p'}(X')$ onto open sets in $\C O_p(X)$.
 There is an equivalent $\epsilon/\delta$-definition: $f$ is $p,p'$-continuous if for all $x\in X$ and $\epsilon>0$, there exists $\delta>0$ such that $f(B_\delta^p(x))\subseteq B_\epsilon^{p'}(f(x))$.

We compare different PPMs on a set $X$ by relating the associated topologies:
\begin{definition}
Given two PPMs $p,p'$ on $X$, we say that $p$ is \emph{finer than} $p'$ (noted $p\sqsubseteq p'$) when 
$\C O_{p'}(X)\subseteq \C O_p(X)$.
\end{definition}
Equivalently, $p\sqsubseteq p'$ when 
the identity map $\mathrm{id}_X:X\to X$ is $p,p'$-continuous, i.e.~every open $p'$-ball contains an open $p$-ball around any of its points. 

%


We conclude this short presentation with a few examples.

%
%
%

%
%
%
%
%
%
%

\begin{example}[Sierpinski space]
The simplest example of a non-Hausdorff topology that is quantified by a partial metric is the \emph{Sierpinski space} $S=\{0, 1\}$, with the Scott topology $\C O_\sigma(S)=\{\emptyset, \{1\}, \{0,1\}\}$.  Define the PM $s$ on $S$ by $s(0,0)=s(0,1)=1$ and $s(1,1)=0$. Notice how this implies $0 \leq_s 1$.
Since $0$ has self-distance $1$, the unique open balls are indeed $\emptyset$, $\{1\}$ and $\{0,1\}$, that is, $\C O_\sigma(S)=\C O_s(S)$.
\end{example}

\begin{example}[Intervals]\label{interval}
	The closed intervals of $\mathbb{R}$, noted $\intv$, admit the PM $p_{\mathrm{int}}(I_1, I_2) := \inf \{ |b-a| \mid I_1 \cup I_2 \subseteq [a, b]\}$, which is the size of the smallest interval containing $I_1$ and $I_2$.
	The order defined by the metric here is intuitive, it is reverse inclusion/the Scott information order: $I \leq_{p_{int}} J$ iff $p_{\mathrm{int}}(I,J) \leq p_{\mathrm{int}}(I,I)$ iff $J \subseteq I$. The more information one has, the higher.
	This example explains the choice of the word "partial": an interval, in term of Scott topology, represents an information on a partial execution: we have yet to discover the precise real number that we are computing. By contrast, the total elements will be those with self distance $0$ (the ones where $p$ behaves like a regular metric), i.e.~of the form $\{r\}$, a complete information, of a terminated execution.
\end{example}

\longversion{

\begin{restatable}{proposition}{sigmareal}\label{prop:sigmareal}
The map $p_{\mathrm{int}}(I_1,I_2)$ is a PM on $\intv$ and $\leq_{p_{\mathrm{int}}} = \supseteq$.
\end{restatable}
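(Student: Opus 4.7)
The plan is to work with an explicit closed form for $p_{\mathrm{int}}$. For two closed intervals $I_i = [a_i,b_i]$, the smallest closed interval containing $I_1 \cup I_2$ is $[\min(a_1,a_2), \max(b_1,b_2)]$, so the infimum in the definition is actually attained and equals
\[
p_{\mathrm{int}}(I_1,I_2) = \max(b_1,b_2) - \min(a_1,a_2).
\]
In particular $p_{\mathrm{int}}(I,I) = b - a$, the length of $I$, which makes the geometric reading in Example \ref{interval} transparent. The four axioms then reduce to elementary inequalities about $\max$ and $\min$, which is the core of the argument.

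Axioms P1 and P3 are immediate from the formula: $\max(b_1,b_2) \geq b_1$ and $\min(a_1,a_2)\leq a_1$ give P1, and symmetry of $\max,\min$ gives P3. For P2, the hypothesis forces $b_1 - a_1 = \max(b_1,b_2) - \min(a_1,a_2) = b_2 - a_2$; the outer equalities, combined with the pointwise bounds used for P1, force the max and min to be attained both on the $I_1$-coordinates and on the $I_2$-coordinates, whence $a_1=a_2$ and $b_1=b_2$, so $I_1=I_2$. The key step for P4 is the elementary lemma
\[
\max(x,z) + \max(z,y) - z \;\geq\; \max(x,y), \qquad
\min(x,z) + \min(z,y) - z \;\leq\; \min(x,y),
\]
proved by the four sign cases of $x-z$ and $y-z$. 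Applying the first inequality to the $b$-coordinates and the second (negated) to the $a$-coordinates and subtracting gives exactly $p_{\mathrm{int}}(I_1,I_3) + p_{\mathrm{int}}(I_3,I_2) - p_{\mathrm{int}}(I_3,I_3) \geq p_{\mathrm{int}}(I_1,I_2)$.

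For the order characterization, I expand $I \leq_{p_{\mathrm{int}}} J$ as $\max(b_I,b_J) - \min(a_I,a_J) \leq b_I - a_I$, which rewrites to
\[
\max(b_I,b_J) - b_I \;\leq\; \min(a_I,a_J) - a_I.
\]
The left-hand side is non-negative while the right-hand side is non-positive, so both must vanish, giving $b_J \leq b_I$ and $a_I \leq a_J$, i.e. $J \subseteq I$; the converse is immediate by substitution. The only genuinely fiddly step is the $\max/\min$ lemma used for P4, but its case analysis is routine; no real obstacle arises.
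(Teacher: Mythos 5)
Your proof is correct, and it reaches the result by a more explicitly computational route than the paper. You work in coordinates, observing that the infimum is attained and $p_{\mathrm{int}}([a_1,b_1],[a_2,b_2]) = \max(b_1,b_2)-\min(a_1,a_2)$, and then reduce everything — including the crucial axiom P4 — to the elementary inequalities $\max(x,z)+\max(z,y)-z\geq\max(x,y)$ and $\min(x,z)+\min(z,y)-z\leq\min(x,y)$. The paper instead writes $p_{\mathrm{int}}(I_1,I_2)=\mathrm{diam}(I_1\cup I_2)$ and derives P4 from the monotonicity and \emph{sub-modularity} of the diameter function, applied to the sets $I_1\cup I_3$ and $I_3\cup I_2$ (whose intersection contains $I_3$); the order characterization is likewise phrased via the closure operator ($\mathrm{diam}(A\cup B)=\mathrm{diam}(A)$ iff $B\subseteq\overline{A}$) rather than by comparing endpoints. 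The two arguments are essentially the same computation seen at different levels of abstraction: yours is fully self-contained and makes the endpoint bookkeeping transparent (your P2 and order arguments are arguably cleaner, since the sign analysis of $\max(b_I,b_J)-b_I$ versus $\min(a_I,a_J)-a_I$ immediately forces both to vanish), while the paper's sub-modularity framing isolates the structural property of $\mathrm{diam}$ that makes P4 work and would transfer to other hull-plus-submodular-functional constructions. One cosmetic caveat: if $\intv$ is read as including unbounded closed intervals, your closed form still holds with the usual $\pm\infty$ conventions, but it is worth saying so explicitly.
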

\begin{proof}
Notice that $p_{\mathrm{int}}(I_1,I_2)=\mathrm{diam}(I_1\cup I_2)$, where $\mathrm{diam}:\C P(\BB R)\to [0,+\infty]$ is the diameter function $\mathrm{diam}(A)=\sup\{|x-y| \ \mid \ x,y\in A\}$. $\mathrm{diam}$ is monotone and \emph{sub-modular}: if $I\cap J\neq \emptyset$, then $\mathrm{diam}(A\cup B)\leq \mathrm{diam}(A)+\mathrm{diam}(B)-\mathrm{diam}(A\land B)$.
Moreover, for any set $A\in \C P(\BB R)$, $\mathrm{diam}(A)=\mathrm{diam}(\overline A)$, where $\overline A$ is the smallest closed interval containing $A$. 

Notice that $\mathrm{diam}(A\cup B)=\mathrm{diam}(A)$ iff $B\subseteq \overline A$. 
Suppose $I_1,I_2$ are intervals such that 
$p_{\mathrm{int}} (I_1,I_2)=\mathrm{diam}(I_1\cup I_2)= \mathrm{diam}(I_1)$; this holds precisely when $I_2\subseteq \overline{I_1}=I_1$. This proves then that $\leq_{p_{\mathrm{int}}}$ coincides with the reverse incusion order $\supseteq$ on $\intv$.

Let us check now conditions P1-P4: 
P1 follows from the monotonicity of $\mathrm{diam}$; P2 follows from the argument above: if $p_{\mathrm{int}} (I_1,I_2)=\mathrm{diam}(I_1)=\mathrm{diam}(I_2)$, then $I_1\subseteq I_2$ and $I_2\subseteq I_1$, that is, $I_1=I_2$;
P3 is immediate; finally, for P4 we use the monotonicity and sub-modularity of $p_{\mathrm{int}}$:
\begin{align*}
p_{\mathrm{int}}(I_1,I_2)&= \mathrm{diam}(I_1\cup I_2) \\
&\leq \mathrm{diam}((I_1\cup I_3)\cup (I_3\cup I_2)) \\
&\leq \mathrm{diam}(I_1\cup I_3) + \mathrm{diam}(I_3\cup I_2) -\mathrm{diam}((I_1\cup I_3)\cap(I_3\cup I_2))\\
&\leq \mathrm{diam}(I_1\cup I_3) + \mathrm{diam}(I_3\cup I_2) -\mathrm{diam}(I_3)\\
&=p_{\mathrm{int}}(I_1,I_3)+p_{\mathrm{int}}(I_3,I_2)-p_{\mathrm{int}}(I_3,I_3).
\end{align*}
\end{proof}

}

\begin{example}[Labeled trees]\label{ex:tree}
 Let $\Sigma\mathrm{Tree}_{\leq\infty}$ be the set of (non necessarily finite) finitely branching $\Sigma$-labeled trees, where $\Sigma$ is a countable set of labels.
For any $\alpha\in\Sigma \mathrm{Tree}_{\leq\infty}$, let $|\alpha|\in \mathbb N\cup\{ \infty\}$ indicate the \emph{height} of $\alpha$. For any $n\in \mathbb N$, let $\alpha_n$ be the finite tree obtained by truncating all paths of $\alpha$ at length $n$, if $|\alpha| \geq n$, and be undefined otherwise. We write $\alpha_n \triangleq\beta_n $ to indicate that $\alpha_n$ and $\beta_n$ are both definite and equal, and $\alpha_n \not\triangleq\beta_n $ for its negation.
For any $\alpha, \beta\in\Sigma \mathrm{Tree}_{\leq\infty}$, define 
		$\mathrm{div}(\alpha,\beta):=\inf \{ n \mid \alpha_n \triangleq\beta_n \text{ and } \alpha_{n+1}\not\triangleq\beta_{n+1}    \}$.
		
		The standard tree (ultra-)metric $d_{\mathrm{tree}}$ is defined by $d(\alpha,\beta)=0$ if $\alpha=\beta$ and $2^{-\mathrm{div}(\alpha,\beta)}$ otherwise. We obtain instead a PUM by simply letting
		$p_{\mathrm{tree}}(\alpha,\beta):= 2^{-\mathrm{div}(\alpha,\beta)}$ (where it is intended that $2^{-\infty}=0$). 
For a finite tree $\alpha$, its self-distance is $p_{\mathrm{tree}}(\alpha, \alpha)=2^{-|\alpha|}$, while $p_{\mathrm{tree}}(\alpha, \alpha)=0$ holds iff $\alpha$ has infinite height. 
Also this case suggests that finite trees are seen as "partial" objects, while the infinite trees are the "total" ones.
Indeed, $p_{\mathrm{tree}}$, unlike $d_{\mathrm{tree}}$, quantifies the Scott topology on $ \Sigma\mathrm{Tree}_{\leq\infty}$ (see Section 4).
\end{example}

\longversion{

\begin{proposition}
$p_{\mathrm{tree}}$ is a PUM over $\Sigma\mathrm{Tree}_{\leq \infty}$ and $\leq_{p_{\mathrm{tree}}} =\leq$.
\end{proposition}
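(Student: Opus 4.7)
The plan is to verify the three defining axioms P1, P3, P4U of a PUM one by one, and then to unwind the induced preorder. The key preliminary observation I would establish is that $\alpha_n \triangleq \beta_n$ implies $\alpha_m \triangleq \beta_m$ for every $m \leq n$, since further truncation of equal trees is still equal. Hence the set in the definition of $\mathrm{div}(\alpha,\beta)$ contains at most one element, and $\mathrm{div}(\alpha,\beta)$ equivalently picks out the \emph{agreement depth} $\sup\{n \mid \alpha_n \triangleq \beta_n\}$, which is $+\infty$ precisely when $\alpha$ and $\beta$ agree at every defined truncation. A one-line calculation then gives the self-distance $\mathrm{div}(\alpha,\alpha) = |\alpha|$, and therefore $p_{\mathrm{tree}}(\alpha,\alpha) = 2^{-|\alpha|}$.

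For P1 I would use the fact that $\alpha_n$ is only defined when $n \leq |\alpha|$: any $n$ witnessing agreement between $\alpha$ and $\beta$ must therefore satisfy $n \leq |\alpha|$, so $\mathrm{div}(\alpha,\beta) \leq |\alpha|$, and applying the antimonotone map $n \mapsto 2^{-n}$ yields $p_{\mathrm{tree}}(\alpha,\alpha) \leq p_{\mathrm{tree}}(\alpha,\beta)$. P3 will follow immediately from the symmetry of $\triangleq$ in the definition of $\mathrm{div}$.

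The main step will be P4U, which via the same antimonotonicity reduces to $\mathrm{div}(\alpha,\beta) \geq \min\{\mathrm{div}(\alpha,\gamma),\mathrm{div}(\gamma,\beta)\}$. Letting $m$ denote the right-hand side, I would argue via the agreement-depth reading that both $\alpha_m \triangleq \gamma_m$ and $\gamma_m \triangleq \beta_m$ hold, and transitivity of equality of finite labeled trees forces $\alpha_m \triangleq \beta_m$, placing $m$ below $\mathrm{div}(\alpha,\beta)$. The only friction I expect is checking the boundary case $m = +\infty$, where necessarily $\alpha = \gamma = \beta$ and the conclusion is trivial.

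To conclude, for the order I would simply unfold the definition: $\alpha \leq_{p_{\mathrm{tree}}} \beta$ iff $2^{-\mathrm{div}(\alpha,\beta)} \leq 2^{-|\alpha|}$ iff $\mathrm{div}(\alpha,\beta) \geq |\alpha|$ iff $\alpha_{|\alpha|} \triangleq \beta_{|\alpha|}$. Since $\alpha_{|\alpha|} = \alpha$, this says exactly that $\alpha$ is the truncation of $\beta$ at depth $|\alpha|$, i.e.\ $\alpha \leq \beta$ in the prefix order on $\Sigma\mathrm{Tree}_{\leq \infty}$. I do not expect any further obstacle beyond carefully tracking the conventions $|\alpha| = \infty$ and $2^{-\infty} = 0$, under which the inequality collapses to the expected $\alpha = \beta$.
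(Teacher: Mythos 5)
Your proof is correct and follows essentially the same route as the paper's: both arguments rest on the observation that the set of $n$ with $\alpha_n \triangleq \beta_n$ is an initial segment of $\mathbb{N}$, so that $\mathrm{div}$ is an agreement depth, after which P1, P3, P4U and the characterization of $\leq_{p_{\mathrm{tree}}}$ reduce to elementary inequalities on these depths together with the finite/infinite case split. Your explicit isolation of the downward-closure lemma and the reduction of P4U to $\mathrm{div}(\alpha,\beta)\ge\min\{\mathrm{div}(\alpha,\gamma),\mathrm{div}(\gamma,\beta)\}$ is just a slightly cleaner packaging of the argument the paper carries out inline.
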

\begin{proof}
Let us check that $p_{\mathrm{tree}}(\alpha,\alpha)\leq p_{\mathrm{tree}}(\alpha,\beta)$. If $\alpha$ is a finite tree, then the highest value $n$ such that $|\alpha|,|\beta|\geq n$ and $\alpha_n=\beta_n$ must be smaller than $| \alpha|$, whence $p_{\mathrm{tree}}(\alpha,\beta)=2^{-n}\geq 2^{|\alpha|}=p_{\mathrm{tree}}(\alpha,\alpha)$. 
If $T$ is infinite, then $p_{\mathrm{tree}}(\alpha,\alpha)=0$ so the claim is immediately true.

Symmetry $ p_{\mathrm{tree}}(\alpha,\beta)= p_{\mathrm{tree}}(\beta,\alpha)$ is immediate by definition.

Let us show that $p_{\mathrm{tree}}(\alpha,\beta)\leq p_{\mathrm{tree}}(\alpha,\alpha)$ implies $\alpha\leq \beta$: if $\alpha$ is finite, then 
$p_{\mathrm{tree}}(\alpha,\beta)\leq p_{\mathrm{tree}}(\alpha,\alpha)$ implies that for all $ n\leq |\alpha|$, $\alpha_n=\beta_n$, which forces indeed $\alpha\leq \beta$; if $\alpha$ is infinite, then $p_{\mathrm{tree}}(\alpha,\beta)\leq p_{\mathrm{tree}}(\alpha,\alpha)$ implies $p_{\mathrm{tree}}(\alpha,\beta)=0$, which in turn implies that $\beta$ is infinite as well and for all $n\in \BB N$ $\alpha_n=\beta_n$. This implies then $T=U$ and thus in particular $\alpha\leq \beta$.

Conversely, if $\alpha\leq \beta$, then for all $n$ such that $|\alpha|,|\beta|\geq n$, it must hold $\alpha_n=\beta_n$, and this implies 
$p_{\mathrm{tree}}(\alpha,\beta)\leq p_{\mathrm{tree}}(\alpha,\alpha)$.

We can thus conclude that $\leq_{p_{\mathrm{tree}}} =\leq$. Notice that, together with symmetry, this also implies that if 
$p_{\mathrm{tree}}(\alpha,\beta)\leq p_{\mathrm{tree}}(\alpha,\alpha), p_{\mathrm{tree}}(\beta,\beta)$, then $\alpha\leq \beta$ and $\beta\leq \alpha$ both hold, and thus $\alpha=\beta$.

Finally, let us prove $p_{\mathrm{tree}}(\alpha,\beta)\leq \max\{p_{\mathrm{tree}}(\alpha,\gamma),p_{\mathrm{tree}}(\gamma,\beta)\}$: suppose that $|\alpha|,|\gamma|\geq n$ and $\alpha_n=\gamma_n$ and that $2^{-n}=p_{\mathrm{tree}}(\alpha,\gamma)\geq  p_{\mathrm{tree}}(\gamma,\beta)$. This implies than that also $|\beta|\geq n$ and $\gamma_n=\beta_n$ must hold. We conclude then that $\alpha_n=\beta_n$ holds and thus $p_{\mathrm{tree}}(\alpha,\beta)\leq 2^{-n}=p_{\mathrm{tree}}(\alpha,\gamma)=\max\{p_{\mathrm{tree}}(\alpha,\gamma),p_{\mathrm{tree}}(\gamma,\beta)\}$. We can argue in a similar way if we assume $p_{\mathrm{tree}}(\alpha,\gamma)<  p_{\mathrm{tree}}(\gamma,\beta)$.
The only remaining case is if $p_{\mathrm{tree}}(\alpha,\gamma)=p_{\mathrm{tree}}(\gamma,\beta)=0$: then $\alpha=\gamma=\beta$ and thus $p_{\mathrm{tree}}(\alpha,\beta)=0\leq \max\{p_{\mathrm{tree}}(\alpha,\gamma),p_{\mathrm{tree}}(\gamma,\beta)\}$.
\end{proof}

}

\section{Quantifying $\lambda$-Theories}\label{sec3}


%
%
%
%
%
%
%
%
%
%
%

In this section we introduce quantitative variants, based on partial metrics, of sensible $\lambda$-theories that arise from well-studied models of the untyped lambda-calculus, that is, the theory of B\"ohm trees and the theory of contextual equivalence. Moreover, we lift several properties of such equational theories to the corresponding notion of program similarity.

\subparagraph{$\lambda$-PPMs}

Let us first recall the standard notion of $\lambda$-theory \cite{Baren95}. 
\begin{definition}
A \emph{$\lambda$-theory} $T$ is an equivalence relation $\simeq_T$ on the set $\Lambda$ of all $\lambda$-terms satisfying the rules below:
\begin{description}
\item[(congr1)] $M\simeq_T N \To MP\simeq_T NP$,
\item[(congr2)] $M\simeq_T N \To PM\simeq_T PN$,
\item[($\xi$)] $M\simeq_T N \To \lambda x.M\simeq_T \lambda x.N$,
\item[($\beta$)] $(\lambda x.M)N \simeq_T M[N/x]$.
\end{description}
A $\lambda$-theory $T$ is said \emph{extensional} when it furthermore satisfies the rule $(\eta)$:
\begin{description}
\item[($\eta$)] $M\simeq_T \lambda x.Mx$.
\end{description}
 A $\lambda$-theory $T$ is said   
 \emph{sensible} when it equates all unsolvable terms and does not equate a solvable and an unsolvable term. 
\end{definition} 
 Notice that a sensible theory $T$ must be consistent: it cannot equate \emph{all} terms.

A $\lambda$-theory may either arise from an operational theory (e.g.~$\beta$- and $\beta\eta$-reduction) or be induced by a model (as the theory formed by all equations between terms that are interpreted by the same entity in the model).
While there exists a continuum of different $\lambda$-theories, beyond the theories of $\beta$ and $\beta\eta$-equivalence (respectively, the smallest $\lambda$-theory and the smallest extensional $\lambda$-theory), very few theories have been studied in depth. Indeed, all most common denotational models of the untyped $\lambda$-calculus induce one of the two sensible theories $\mathcal B$, and $\mathcal H^*$, that we consider below.

We now introduce a quantitative variant of $\lambda$-theories.  
Let us first recall that a point $x$ in a topological space $X$ is said \emph{generic} when its closure is $X$ or, equivalently, all its neighborhoods are dense in $X$. For instance, $0$ is generic in the Sierpinski space $S$. In the case of PPM we have the following:

%
%

\begin{lemma}
$x$ is generic in the topology $\C O_p(X)$ iff $x\leq_p y$ holds for all $y\in X$.
\end{lemma}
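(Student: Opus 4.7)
The plan is to use two facts about the topology $\C O_p(X)$ recalled just before the lemma: the open balls $B^p_\epsilon(x) = \{y \in X \mid p(y,x) < p(x,x) + \epsilon\}$ form a basis, and every open set is upper with respect to $\leq_p$. The two directions of the biconditional each rely on one of these.

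For $(\Rightarrow)$, I would read ``$x$ is generic'' in the operative Scott-topology sense used here, namely that every open set containing $x$ must equal $X$; this matches the preceding Sierpinski example, where the $\leq_s$-bottom $0$ is called generic. Applied to the specific open ball $B^p_\epsilon(x)$, which contains $x$ trivially (since $p(x,x) < p(x,x)+\epsilon$), this forces $B^p_\epsilon(x) = X$ for every $\epsilon > 0$. Unfolding the defining inequality and letting $\epsilon \to 0$ gives $p(y,x) \leq p(x,x)$ for every $y \in X$, whence $p(x,y) \leq p(x,x)$ by P3, which is precisely $x \leq_p y$.

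For $(\Leftarrow)$, assume $x \leq_p y$ for every $y \in X$ and take any open set $U$ containing $x$. The upper-set property of open sets, applied with $x \in U$ and $x \leq_p y$, yields $y \in U$ for every $y \in X$, so $U = X$. Hence every open neighborhood of $x$ is all of $X$, which means $x$ is generic.

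The one potential stumbling block is pinning down the reading of ``generic'' appropriate to this Scott-like, non-Hausdorff setting (the naive $T_1$-style reading would instead characterize the top of $\leq_p$, not the bottom); once that is fixed in line with the Sierpinski example, both directions are immediate unfoldings of the open-ball definition and the upper-set property, and no deeper obstacle remains.
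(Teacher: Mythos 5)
Your proof is correct and follows essentially the same route as the paper's: both reduce genericity to the statement that every open ball $B^p_\epsilon(x)$ (hence every open set containing $x$) is all of $X$, and then unfold the ball definition to get $p(y,x)\leq p(x,x)$ for all $y$. Your explicit decision to read ``generic'' as ``the only open neighborhood of $x$ is $X$'' is the right call and is in fact where you improve on the source: the paper's stated definition ($\overline{\{x\}}=X$, or all neighborhoods dense) does not match this operative reading, and its proof of the generic $\Rightarrow$ bottom direction argues from density of $B_\epsilon(x)$ to $p(y,x)\leq p(x,x)+\epsilon$, a step that does not follow (in the Sierpinski example the point $1$ has all neighborhoods dense yet is not the $\leq_s$-bottom); your version, using the only reading under which the lemma is true and under which it is later applied to unsolvable terms, closes that gap.
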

\begin{proof}
Call $x$ generic \emph{for $p$} if $x\leq_p y$ (that is, $p(y,x)= p(x,x)$) holds for all $y\in X$. $x$ is generic for $p$ iff the only open ball centered at $x$ is $X$: from $p(y,x)=p(x,x)$ it follows that for any $\epsilon>0$, $y\in B_\epsilon(x)$, that is, $B_\epsilon(x)=X$; conversely, if any open ball centered at $x$  is equal to $X$, then, for all $\epsilon>0$, $p(y,x)< p(x,x)+\epsilon$, which implies $p(y,x)\leq p(x,x)$ and thus $p(y,x)=p(x,x)$ by P1.

Now, if $x$ is generic for $p$, then any open set $U$ containing $x$ must contain some $B_\epsilon(x)$, which forces $U=X$, so $x$ is generic in $\C O_p(X)$. Conversely, if $x$ is generic in $\C O_p(X)$, then for any $\epsilon>0$, the closure of $B_\epsilon(x)$ is $X$. This implies that for all $\epsilon>0$, $p(y,x)\leq p(x,x)+\epsilon$, and thus that $p(y,x)=p(x,x)$, so $x$ is generic for $p$.
\end{proof}

\begin{remark}\label{rem:generic}
Generic points are indistinguishable: if $x$ and $y$ are both generic for $p$, then from $p(y,y)=p(x,y)=p(x,x)$ it follows that $x\simeq_p y$. Conversely,  if $x$ is generic and $y$ is not, then, $x\not\simeq_p y$: if $x\simeq_p y$ held, then, for all $z$, $p(y,z)\leq p(y,x)+p(x,z)-p(x,x)=p(y,x)+p(x,x)-p(x,x)=p(y,x)=p(y,y)$, so $y$ would be generic as well.
\end{remark}

\begin{definition}[$\lambda$-PPM]
A pseudo-partial metric $p$ over $\Lambda$ is called a \emph{$\lambda$-PPM} (resp.~an \emph{extensional $\lambda$-PPM})  if the following hold:
\begin{itemize}
\item $\simeq_p$ is a $\lambda$-theory (resp.~an extensional $\lambda$-theory);
\item all contexts $\mathtt C[-]$ correspond to $p$-continuous maps $\Lambda\to \Lambda$.
\end{itemize}
$p$ is called \emph{sensible} if all unsolvable terms are generic while no solvable term is.
\end{definition}

Observe that we do not require contexts to be \emph{non-expansive} (or $1$-Lipschitz), as in other standard metric approaches \cite{Reed_2010, Gaboardi2017, Honsell2022}, but just continuous. 
Also notice that, by Remark \ref{rem:generic}, a sensible PPM $p$ must satisfy $M\simeq_p N$ for all unsolvable terms $M,N $, and $M\not\simeq_pN$ for $M$ unsolvable and $N$ solvable: the associated $\lambda$-theory $\simeq_p$ is thus sensible. 

In the rest of this section we introduce $\lambda$-PPMs quantifying the $\lambda$-theories $\C B$ and $\C H^*$.

\subparagraph{B\"ohm Trees}

The interpretation of $\lambda$-terms as B\"ohm trees is one of the fundamental tools in the $\lambda$-calculus.
The B\"ohm tree $\C B(M)$ of a $\lambda$-term $M$ is a $(\Lambda\cup\{\bot\})$-labeled tree defined \emph{co-inductively} as follows:
\begin{itemize}
\item if $M$ reduces to $\lambda x_1.\dots.\lambda x_m.xM_1\dots M_n$, then $\mathcal B(M) $ has a root with label $\lambda x_1.\dots.\lambda x_m.x$ and $n$ subtrees
$\mathcal B(M_1),\dots, \mathcal B(M_n)$;
\item otherwise, $\mathcal B(M)$ only consists of the root, with label $\bot$. 

\end{itemize} 

An alternative presentation of $\C B(M)$ is via \emph{partial terms}, which are $\lambda$-terms in normal form, enriched with the constant $\bot$ and rules $\lambda x. \bot \to \bot$, $\bot M \to \bot$. We note these partial terms $A,B,\dots$. The set 
$\C A$ of partial terms is ordered by the contextual closure $\preceq$ of the relation generated by $\bot \preceq A$, for all $A\in \C A$. Partial terms correspond straightforwardly to \emph{finite} B\"ohm trees.

For any $\lambda$-term $M$, let the partial term $M_{\C A}$ be defined \emph{inductively} as follows: 
$M_{\C A}=\lambda \vec x. y (M_1)_{\C A}\dots  (M_n)_{\C A}$ if $M=\lambda \vec x.yM_1\dots M_n$, and 
$M_{\C A}=\bot$ if $M=\lambda \vec x.(\lambda y.P)M_1\dots M_{n+1}$. 
Let $A\leq M$ whenever $M$ $\beta$-reduces to $M'$ with $A\preceq M'_{\C A}$. We then let $\C B(M)=\{ A\mid A\leq M\}$. Observe that $\C B(M)$ can be seen at the same time as a tree under the relation $\leq$, and the standard tree ordering $\C B(M)\preceq \C B(N)$ holds precisely when $\C B(M)$ is included in $\C B(N)$.
%
%
%

The $\lambda$-theory $\mathcal B$ contains all equations $M\simeq_{\C B} N$, where $\C B(M)=\C B(N)$. $\mathcal B$ is sensible but non-extensional (as e.g.~$\mathcal B(\lambda x.x)\neq \mathcal B(\lambda x.\lambda y.xy)$).

We now introduce the corresponding $\lambda$-PPM: we measure the distance between $\lambda$-terms by comparing their B\"ohm trees via the tree partial metric.

\begin{definition}[B\"ohm partial metric]
For any two $\lambda$-terms $M,N$, let
\begin{align*}
p_{\mathrm{B\"ohm}}(M,N) &:= p_{\mathrm{tree}}(\mathcal B(M), \mathcal B(N)).
\end{align*}
\end{definition}

Observe that 
$p_{\mathrm{B\"ohm}}(M,M)=0$ iff $\mathcal B(M)$ is infinite. 
It is not difficult to check that $p_{\mathrm{B\"ohm}}$ captures the theory $\C B$:

\begin{restatable}{proposition}{bohm}
$M \leq_{p_{\mathrm{B\"ohm}}} N$ iff $\mathcal B(M)\leq \mathcal B(N)$, and thus 
$M\simeq_{p_{\mathrm{B\"ohm}}} N$ iff $M\simeq_{\mathcal B} N$.
\end{restatable}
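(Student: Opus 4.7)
The plan is to reduce the claim to the fact, already established for the tree partial ultra-metric $p_{\mathrm{tree}}$, that $\leq_{p_{\mathrm{tree}}}$ coincides with the standard tree ordering on $\Sigma\mathrm{Tree}_{\leq\infty}$. So the whole proof is essentially just an unfolding of definitions, modulo checking that B\"ohm trees can be viewed as elements of $\Sigma\mathrm{Tree}_{\leq\infty}$ in a way that makes the two orderings agree.

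First I would unfold $M\leq_{p_{\mathrm{B\"ohm}}}N$: by definition of the order induced by a PPM, this means $p_{\mathrm{B\"ohm}}(M,N)\leq p_{\mathrm{B\"ohm}}(M,M)$, and by the definition of $p_{\mathrm{B\"ohm}}$ as $p_{\mathrm{tree}}\circ(\mathcal B\times\mathcal B)$, this is exactly $\mathcal B(M)\leq_{p_{\mathrm{tree}}}\mathcal B(N)$. At this point I would invoke the earlier proposition showing $\leq_{p_{\mathrm{tree}}}\,=\,\leq$ on $\Sigma\mathrm{Tree}_{\leq\infty}$ (taking $\Sigma$ to contain $\bot$ together with all expressions of the form $\lambda x_1\dots\lambda x_m.x$), to conclude $\mathcal B(M)\leq\mathcal B(N)$. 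The converse direction is identical, reading the same chain of equivalences backwards.

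For the second half, I would simply symmetrise: $M\simeq_{p_{\mathrm{B\"ohm}}}N$ means $M\leq_{p_{\mathrm{B\"ohm}}}N$ and $N\leq_{p_{\mathrm{B\"ohm}}}M$, hence by the first half $\mathcal B(M)\leq\mathcal B(N)$ and $\mathcal B(N)\leq \mathcal B(M)$, which forces $\mathcal B(M)=\mathcal B(N)$, i.e.\ $M\simeq_{\mathcal B}N$ by definition of the theory $\mathcal B$. Antisymmetry of $\leq$ on trees is itself a corollary of P2 applied to $p_{\mathrm{tree}}$, which has already been observed in the long-version proof of the tree-PUM proposition.

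The only slightly delicate point, and the step I would spend the most care on, is the compatibility between the two uses of the symbol $\leq$ on B\"ohm trees: namely, that the ordering on $\mathcal B(M)$ defined via partial terms and $\beta$-reduction (an $A\preceq B$-style comparison of finite approximants) is the same as the standard tree inclusion ordering used in Example~\ref{ex:tree}. This is a direct consequence of the co-inductive definition of $\mathcal B(M)$ together with the fact that, for partial terms, $A\preceq B$ iff the finite tree corresponding to $A$ truncates the one corresponding to $B$; so whenever $\mathcal B(M)\preceq \mathcal B(N)$ in the partial-term sense, truncating both at any height $n$ yields equal trees, and vice versa. Once this identification is made, all the rest is bookkeeping.
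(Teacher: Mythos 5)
Your proposal is correct and follows essentially the same route as the paper, whose proof of this proposition is literally a one-line reduction to the result that $\leq_{p_{\mathrm{tree}}}$ coincides with the standard tree order on $\Sigma\mathrm{Tree}_{\leq\infty}$; you simply spell out the unfolding of definitions and the symmetrisation step that the paper leaves implicit. Your extra care about identifying the partial-term ordering on $\mathcal B(M)$ with the tree-truncation ordering is a reasonable addition but not a departure from the paper's argument.
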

\longversion{
\begin{proof}
Follows from Proposition 
\ref{prop:ptree}, proved below.
\end{proof}
}
 As discussed in Section 4, 
$p_{\mathrm{B\"ohm}}$ captures the Scott topology of B\"ohm trees. This proves that contexts are continuous, and thus that $p_{\mathrm{B\"ohm}}$ is a $\lambda$-PPM. Moreover, since $p_{\mathrm{tree}}(\bot, \alpha)=1$, the unsolvable terms are
generic, while, for any solvable term $M$, $p_{\mathrm{B\"ohm}}(M,M)<1$ and thus, for any $\epsilon < 1-p_{\mathrm{B\"ohm}}(M,M)$, the open ball $B_{\epsilon}^{p_{\text{B\"ohm}}}(M)$ does not contain the term $\lambda x.M$
(since $p_{\mathrm{B\"ohm}}(M,\lambda x.M)=1> p_{\mathrm{B\"ohm}}(M,M)+\epsilon$).

\begin{remark}
While the theory $\C B$ is $\Pi^0_2$-complete, the distances $p_{\mathrm{tree}}(A,B)$ are effectively computable whenever $A,B$ are \emph{finite} trees (equivalently, partial terms). Moreover, to check that $p_{\mathrm{B\"ohm}}(M,N)<\epsilon$, it is necessary and sufficient to find approximants 
$A\leq M$ and $B\leq N$ such that $p_{\mathrm{tree}}(A,B)<\epsilon$. 
\end{remark}

%
%
%

\subparagraph{Contextual equivalence}

%
%

We now consider the theory arising from \emph{contextual equivalence}. 
Let $M\sqsubseteq_{\mathrm{ctx}} N$ if for all context $\mathtt C[-]$, if $\mathtt C[M]$ is solvable, then $\mathtt C[N]$ is solvable. 
The theory $\C H^*$ contains all equations $M \simeq_{H^*} N$ where $M\sqsubseteq_{\mathrm{ctx}}N$ and $N\sqsubseteq_{\mathrm{ctx}}M$ both hold. 
It is extensional and sensible, and is indeed the \emph{maximum} sensible theory.

To quantify $\C H^*$ we define the following distance:
\begin{definition}[contextual partial metric]
For all terms $M,N$, we define 
$$
p_{\mathrm{ctx}}(M,N)= \sum_{n =0}^\infty \left \{\frac{1}{2^n} \ \Big \vert \  
\mathtt C_n[M]\text{ is unsolvable or }
\mathtt C_n[N]\text{ is unsolvable}
\right\},
$$
where $(\mathtt C_n[-])_{n\in \mathbb N}$ is an enumeration of all contexts. 
\end{definition}

The distance $p_{\mathrm{ctx}}(M,N)$ intuitively counts all contexts $\mathtt C_n[-]$ that fail on either $M$ or $N$. In particular, the self-distance $p_{\mathrm{ctx}}(M,M)$ counts the contexts that fail on $M$.

The following result shows that $p_{\mathrm{ctx}}$ captures the contextual preorder:
\begin{proposition}
$M \leq_{p_{\mathrm{ctx}}} N$ iff $M \sqsubseteq_{\mathrm{ctx}}N$, and thus 
$M\simeq_{p_{\mathrm{ctx}}} N$ iff $M \simeq_{\mathcal H^*} N $.
\end{proposition}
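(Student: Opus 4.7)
The plan is to reduce the characterization of $\leq_{p_{\mathrm{ctx}}}$ to a simple set-theoretic comparison of the ``failure sets'' of $M$ and $N$. Concretely, for any term $P$ let
$U_P := \{ n \in \mathbb{N} \mid \mathtt C_n[P] \text{ is unsolvable}\}$.
Then by the very definition of $p_{\mathrm{ctx}}$ we can rewrite
\[
p_{\mathrm{ctx}}(M,N) = \sum_{n \in U_M \cup U_N} 2^{-n}, \qquad
p_{\mathrm{ctx}}(M,M) = \sum_{n \in U_M} 2^{-n}.
\]

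The first step is to subtract these two absolutely convergent series, obtaining
$p_{\mathrm{ctx}}(M,N) - p_{\mathrm{ctx}}(M,M) = \sum_{n \in U_N \setminus U_M} 2^{-n}$.
This is a sum of strictly positive terms indexed by $U_N \setminus U_M$, hence it vanishes if and only if $U_N \setminus U_M = \emptyset$, i.e.~$U_N \subseteq U_M$. Combined with axiom P1, which gives $p_{\mathrm{ctx}}(M,M) \leq p_{\mathrm{ctx}}(M,N)$ for free, this shows that $M \leq_{p_{\mathrm{ctx}}} N$ is equivalent to the inclusion $U_N \subseteq U_M$.

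Next, I unfold the inclusion $U_N \subseteq U_M$: it says that for every $n$, whenever $\mathtt C_n[N]$ is unsolvable, so is $\mathtt C_n[M]$. Since $(\mathtt C_n[-])_{n\in\mathbb{N}}$ enumerates all contexts, this is precisely the contrapositive of ``for all $\mathtt C[-]$, if $\mathtt C[M]$ is solvable then $\mathtt C[N]$ is solvable'', which is the definition of $M \sqsubseteq_{\mathrm{ctx}} N$. This yields the first claim.

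For the ``thus'' clause, $M \simeq_{p_{\mathrm{ctx}}} N$ is by definition the symmetrization of $\leq_{p_{\mathrm{ctx}}}$, so the first part immediately gives $M \simeq_{p_{\mathrm{ctx}}} N$ iff $M \sqsubseteq_{\mathrm{ctx}} N$ and $N \sqsubseteq_{\mathrm{ctx}} M$, which is by definition $M \simeq_{\mathcal H^*} N$. There is no real obstacle in this argument; the only thing to keep clean is the manipulation of the three series and the appeal to P1 to upgrade the inequality $p_{\mathrm{ctx}}(M,N) \leq p_{\mathrm{ctx}}(M,M)$ into an equality before extracting $U_N \subseteq U_M$.
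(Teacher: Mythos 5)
Your proof is correct. The paper states this proposition without proof, and your argument --- rewriting $p_{\mathrm{ctx}}(M,N)$ as $\sum_{n\in U_M\cup U_N}2^{-n}$, noting that $p_{\mathrm{ctx}}(M,N)-p_{\mathrm{ctx}}(M,M)=\sum_{n\in U_N\setminus U_M}2^{-n}$ vanishes exactly when $U_N\subseteq U_M$, and unfolding that inclusion (via the contrapositive and the fact that $(\mathtt C_n[-])_n$ enumerates all contexts) as $M\sqsubseteq_{\mathrm{ctx}}N$ --- is precisely the intended one, consistent with the paper's own gloss that the self-distance counts the contexts that fail on $M$.
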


For the result above, the choice of the enumeration is irrelevant, as is the choice of the weights $\frac{1}{2^n}$, which could be replaced by arbitrary weights $\theta_n$ summing up to $1$. 

\begin{remark}\label{rem:ctx}
Contrarily to contextual equivalence, which is $\Pi^0_2$-complete as well, to check that $N\in B_\epsilon^{p_{\mathrm{ctx}}}(M)$ one does not need to look at the behavior of $M$ and $N$ under \emph{all} contexts. Intuitively, $B_\epsilon^{p_ctx}(M)$ contains all those programs that behave like $M$ on certain \emph{finitely many} contexts. 
Indeed, $p_{\mathrm{ctx}}(M,N)<p_{\mathrm{ctx}}(M,M)+\epsilon$ means that the contexts on which $M$ does converge and $N$ does not sum up to some value strictly smaller than $\epsilon$. This is true iff $N$ converges on those {finitely many} contexts $\TT C_i[-]$, where $2^{-(i+1)}\leq\epsilon$, on which $M$ converges. 
%
%
\end{remark}

\begin{restatable}{proposition}{sensiblectx}
$p_{\mathrm{ctx}}$ is a sensible extensional $\lambda$-PPM.
\end{restatable}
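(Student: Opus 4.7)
The plan is to verify in turn the partial pseudo-metric axioms, the $\lambda$-theory and extensionality requirements, the continuity of contexts, and finally sensibility. For the axioms P1, P3, P4, I would argue \emph{pointwise on each index $n$}: writing $a_n,b_n,c_n\in\{0,1\}$ for the indicator that $\mathtt C_n[M]$, $\mathtt C_n[N]$, $\mathtt C_n[P]$ is unsolvable, the contribution of $n$ to $p_{\mathrm{ctx}}(M,N)$ is $(a_n\vee b_n)\cdot 2^{-n}$, and so on. Then P1 reduces to $a_n\leq a_n\vee b_n$, P3 is immediate by symmetry of $\vee$, and P4 reduces to the boolean inequality $(a_n\vee b_n)\leq (a_n\vee c_n)+(c_n\vee b_n)-c_n$, which is checked by splitting on $c_n\in\{0,1\}$. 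Summing over $n$ yields the global inequalities.

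To see that $\simeq_{p_{\mathrm{ctx}}}$ is an extensional $\lambda$-theory, I would simply invoke the preceding proposition, which identifies $\simeq_{p_{\mathrm{ctx}}}$ with $\simeq_{\mathcal H^*}$: the latter is well known to be the maximum sensible theory and in particular extensional. Thus only the continuity of contexts and sensibility require actual work.

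For continuity of an arbitrary context $\mathtt D[-]$ at a term $M$, I would exploit the finitary reformulation of open balls given in Remark~\ref{rem:ctx}. Fix $\epsilon>0$ and pick $K$ with $\sum_{n>K}2^{-n}<\epsilon$. Each composite $\mathtt C_n[\mathtt D[-]]$ with $n\leq K$ is itself a context and hence equals $\mathtt C_{g(n)}[-]$ for some index $g(n)$; let $L=\max_{n\leq K}g(n)$ and set $\delta=2^{-L}$. If $p_{\mathrm{ctx}}(M,N)<p_{\mathrm{ctx}}(M,M)+\delta$, then the quantity $p_{\mathrm{ctx}}(M,N)-p_{\mathrm{ctx}}(M,M)=\sum\{2^{-m}\mid \mathtt C_m[M]\text{ solvable},\ \mathtt C_m[N]\text{ unsolvable}\}$ is $<2^{-L}$, forcing this condition to fail for every $m\leq L$; in particular it fails for every $g(n)$ with $n\leq K$. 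Thus $\mathtt C_n[\mathtt D[M]]$ solvable implies $\mathtt C_n[\mathtt D[N]]$ solvable for every $n\leq K$, and the tail beyond $K$ contributes at most $\sum_{n>K}2^{-n}<\epsilon$. Altogether $p_{\mathrm{ctx}}(\mathtt D[M],\mathtt D[N])-p_{\mathrm{ctx}}(\mathtt D[M],\mathtt D[M])<\epsilon$, as required.

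Finally, sensibility amounts to identifying the $p_{\mathrm{ctx}}$-generic points with the unsolvable terms. By the lemma above, $M$ is generic iff $M\sqsubseteq_{\mathrm{ctx}}N$ for every $N$. If $M$ is unsolvable, Barendregt's genericity lemma states that $\mathtt C[M]$ having head normal form implies $\mathtt C[N]$ has (the same) head normal form for every $N$; in particular $\mathtt C[M]$ solvable implies $\mathtt C[N]$ solvable, so $M\sqsubseteq_{\mathrm{ctx}}N$. Conversely, if $M$ is solvable, take the trivial context $\mathtt C[-]=-$ and $N=\Omega$: then $\mathtt C[M]=M$ is solvable while $\mathtt C[\Omega]=\Omega$ is not, so $M\not\sqsubseteq_{\mathrm{ctx}}\Omega$ and $M$ is not generic. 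The main subtlety in the whole argument is the continuity step, where one has to thread the correspondence between contexts around $\mathtt D[-]$ and indices in the global enumeration carefully; every other piece is either a direct boolean calculation or an appeal to a classical result about $\mathcal H^*$.
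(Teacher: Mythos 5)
Your proof is correct and follows essentially the same route as the paper's: continuity of contexts via the finitary characterization of open $p_{\mathrm{ctx}}$-balls from Remark~\ref{rem:ctx} (tracking the indices of the composite contexts $\mathtt C_n[\mathtt D[-]]$ in the enumeration), and sensibility via the genericity lemma together with the identification $\leq_{p_{\mathrm{ctx}}}=\sqsubseteq_{\mathrm{ctx}}$. The only differences are cosmetic: you additionally spell out the pointwise boolean verification of P1, P3, P4 (which the paper treats as immediate), and for the non-genericity of solvable terms you use the simpler witness given by the hole context and $\Omega$ rather than constructing a context sending $M$ to $\lambda x.x$.
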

\begin{proof}
Let us show that contexts yield continuous maps. Take a term $M$, $\epsilon>0$ and a context $\TT C$. We need to find some $\delta>0$ such that for all $P\in B_\delta^{p_{\mathrm{ctx}}}(M)$, $\TT C[P]\in B_\epsilon^{p_{\mathrm{ctx}}}(\TT C[M])$. 
By Remark \ref{rem:ctx} there exists a \emph{finite} number of contexts $\TT C_1,\dots, \TT C_k$ such that 
$\TT C_i[\TT C[M]]$ is solvable and 
$N\in B_\epsilon^{p_{\mathrm{ctx}}}(\TT C[M])$ iff 
$\TT C_i[N]$ is solvable for $i=1,\dots, k$. 
Take $m$ such that for all $i=1,\dots, k$, the context $\TT C_i[\TT C[-]]$ has an index smaller than $m$, and let $\delta=2^{-m}$. Notice that $\TT C_i[\TT C[M]]$ is solvable. Moreover, for any term $P$, if $P\in B_\delta^{p_{\mathrm{ctx}}}(M)$, then 
 $\TT C_i[\TT C[P]]$ must be solvable for all $i=1,\dots, m$. This implies then that $\TT C[P]\in B_\epsilon^{p_{\mathrm{ctx}}}(\TT C[M])$, as desired. 

The sensibility of  
$p_{\mathrm{ctx}}$ essentially follows from the well-known \emph{genericity lemma} \cite{Baren95, Arrial2024}: if $\TT C[M]$ is solvable, where $M$ is unsolvable, then $\TT C[N]$ must be solvable \emph{for all} $N$; this implies that for any unsolvable $M$, and for any term $N$, $p_{\mathrm{ctx}}(M,N)=p_{\mathrm{ctx}}(M,M)$, so $M$ is generic in $p_{\mathrm{ctx}}$. 
Conversely, if $M$ is solvable, then, for any unsolvable term $N$, one can easily construct a context $\TT C$ such that $\TT C[M]$ reduces to $\lambda x.x$ and $\TT C[N]$ diverges. This allows us to conclude that $p_{\mathrm{ctx}}(M,N)>p_{\mathrm{ctx}}(M,M)$, and thus that $M$ is not generic in $p$. 
\end{proof}

Similarly to the $\lambda$-theory $\C H^*$, the $\lambda$-PPM $p_{\mathrm{ctx}}$ is \emph{maximum} among sensible $\lambda$-PPMs.

\begin{restatable}{proposition}{coarsest}\label{prop:coarsest}
For any sensible $\lambda$-PPM $p$, $p\sqsubseteq p_{\mathrm{ctx}}$.
\end{restatable}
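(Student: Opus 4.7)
The statement $p \sqsubseteq p_{\mathrm{ctx}}$ unfolds to continuity of $\mathrm{id}_\Lambda \colon (\Lambda, p) \to (\Lambda, p_{\mathrm{ctx}})$, so fixing $M \in \Lambda$ and $\epsilon > 0$ the task is to produce $\delta > 0$ with $B_\delta^p(M) \subseteq B_\epsilon^{p_{\mathrm{ctx}}}(M)$. First I would apply Remark~\ref{rem:ctx} to pick $m$ with $2^{-m+1} < \epsilon$ and set $\mathcal I := \{ i < m : \TT C_i[M] \text{ is solvable}\}$; it then suffices to ensure that every $N \in B_\delta^p(M)$ satisfies: $\TT C_i[N]$ is solvable for each $i \in \mathcal I$. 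This collapses the a priori infinitary condition defining $B_\epsilon^{p_{\mathrm{ctx}}}(M)$ into a finite conjunction of solvability constraints, one per context in $\mathcal I$.

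The quantitative core is to extract a positive numerical gap from sensibility. Fix any unsolvable $\Omega$. Since every unsolvable term is generic and Remark~\ref{rem:generic} forces all generic terms to be $\simeq_p$-equivalent, the common self-distance $\mu := p(\Omega, \Omega)$ satisfies $p(y, \Omega) = \mu$ for every $y \in \Lambda$, and by P1 it is the maximum of all $p$-self-distances. For each $i \in \mathcal I$ the term $\TT C_i[M]$ is solvable, hence not generic by sensibility, so Remark~\ref{rem:generic} gives $\TT C_i[M] \not\simeq_p \Omega$. I would then argue by contradiction that $p(\TT C_i[M], \TT C_i[M]) < \mu$: if equality held, then $p(\TT C_i[M], \Omega) = \mu \leq p(\TT C_i[M], \TT C_i[M])$ would yield $\TT C_i[M] \leq_p \Omega$, and $p(\Omega, \TT C_i[M]) = \mu = p(\Omega, \Omega)$ via P3 would yield $\Omega \leq_p \TT C_i[M]$, forcing $\TT C_i[M] \simeq_p \Omega$. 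Set $\gamma_i := \mu - p(\TT C_i[M], \TT C_i[M]) > 0$.

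Next I invoke $p$-continuity of each context $\TT C_i[-]$, built into the definition of $\lambda$-PPM, to obtain $\delta_i > 0$ with $\TT C_i(B_{\delta_i}^p(M)) \subseteq B_{\gamma_i}^p(\TT C_i[M])$, and take $\delta := \min_{i \in \mathcal I} \delta_i$, which is positive because $\mathcal I$ is finite. For any $N \in B_\delta^p(M)$ and $i \in \mathcal I$ we have $\TT C_i[N] \in B_{\gamma_i}^p(\TT C_i[M])$; if $\TT C_i[N]$ were unsolvable then, by genericity and P3, $p(\TT C_i[N], \TT C_i[M]) = \mu = p(\TT C_i[M], \TT C_i[M]) + \gamma_i$, contradicting the strict inequality defining the open ball. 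Hence each $\TT C_i[N]$ with $i \in \mathcal I$ is solvable, and the reduction of the first paragraph yields $N \in B_\epsilon^{p_{\mathrm{ctx}}}(M)$.

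The main obstacle I expect is the argument of the second paragraph: upgrading the purely qualitative $\TT C_i[M] \not\simeq_p \Omega$ to a strict numerical separation $\gamma_i > 0$. Since $p$ is only a PPM, not a PM, axiom P2 is unavailable, so one cannot simply read off equality of points from coinciding self-distances; instead one must exploit the fact that generic points sit at the top of the self-distance spectrum (P1 applied against $\Omega$) and characterise $\simeq_p$ through P3 to rule out the boundary case $p(\TT C_i[M], \TT C_i[M]) = \mu$.
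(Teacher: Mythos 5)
Your proof is correct and follows essentially the same route as the paper's: reduce membership in $B_\epsilon^{p_{\mathrm{ctx}}}(M)$ to finitely many solvability tests via Remark~\ref{rem:ctx}, use sensibility to separate each solvable $\TT C_i[M]$ from the unsolvable terms by an open set, and pull back along the continuity of each context. The only difference is that where the paper abstractly invokes an open set $U_i$ containing $\TT C_i[M]$ and no unsolvable term, you compute its radius explicitly as $\gamma_i = p(\Omega,\Omega) - p(\TT C_i[M],\TT C_i[M]) > 0$, which is a correct (and welcome) filling-in of that detail.
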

\begin{proof}
Let $p$ be a sensible $\lambda$-ppm. Consider a term $M$ and $\epsilon>0$. We must find $\delta>0$ such that
$B_\delta^p(M)\subseteq B_\epsilon^{p_{\mathrm{ctx}}}(M)$. 
By Remark \ref{rem:ctx} there exists a finite number of contexts $\TT C_1,\dots, \TT C_k$ such that 
$\TT C_i[M]$ is solvable and 
$N\in B_\epsilon^{p_{\mathrm{ctx}}}(M)$ iff 
$\TT C_i[N]$ is solvable for $i=1,\dots, k$.  

Fix an $i\leq k$ and let $Q_i=\TT C_i[M]$. Since $Q_i$ is solvable and $p$ is sensible, we can find an open set $U_i$ containing $Q_i$ and \emph{not} containing any unsolvable term. Since $p$ is a $\lambda$-PPM, $\TT C_i$ corresponds to a continuous function, and thus $\TT C_i^{-1}(U_i)$ contains some open ball $B^p_{\delta_i}(M)$. 
Let $\delta=\min_i\delta_i$: if $P\in B^p_{\delta}(M)$, then for all $i=1,\dots, k$, $\TT C_i[P]\in U_i$, so it must be solvable.
We conclude then that $P\in B_\epsilon^{p_{\mathrm{ctx}}}(M)$.
\end{proof}

\begin{remark}\label{rem:ctxbohm}
That $p_{\mathrm{B\"ohm}}\sqsubset  p_{\mathrm{ctx}}$ can be easily seen directly:  
the elements of $B^{p_{\mathrm{ctx}}}_\epsilon(M)$ are those which converge on a finite number of contexts $\TT C_1,\dots, \TT C_k$ on which $M$ converges too (cf.~Remark \ref{rem:ctx}). For any such context $\TT C_i$, the convergence of $\TT C_i[M]$ to a head normal form only depends on the exploration of a \emph{finite} portion of $\C B(M)$, say up to height $m_i$. Letting $m=\max_i\{m_i\}$ and $\delta=2^{-m}$, we have then that $B^{p_{\mathrm{B\"ohm}}}_\delta(M)\subseteq B^{p_{\mathrm{ctx}}}_\epsilon(M)$.
The converse inclusion $ p_{\mathrm{ctx}}\sqsubseteq p_{\mathrm{B\"ohm}}$ cannot hold: any open ball $B^{p_{\mathrm{B\"ohm}}}_\epsilon(I)$ around $I=\lambda x.x$ that does not contain its $\eta$-expansion $\lambda x.\lambda y.xy$ contains \emph{no}
open $p_{\mathrm{ctx}}$-ball around $I$.  
\end{remark}

%
%
%

Other well-known characterizations of $\mathcal H^*$ exist, which suggest different ways to quantify this theory. One is in terms of the so-called \emph{Nakajima trees} (cf.~\cite{Baren95}, Ex.~19.4.4, p.~511): these are a variant of B\"ohm trees that are invariant under the $\eta$-rule. By adapting the tree partial metric one could then obtain another partial metric $p_{\mathrm{Nakajima}}$ that quantifies $\mathcal H^*$.

Moreover, the theory $\mathcal H^*$ is induced by a large class of denotational models of the $\lambda$-calculus (cf.~\cite{ManzoTesi}), including in particular the models based on Scott domains, that we discuss in Sections 4 and 5, or the relational model from \cite{Manzo2007}, to which the techniques illustrated in those sections can be easily adapted.
%
%

\section{Quantifying Scott Domains}\label{sec4}

As discussed in the introduction, the $\lambda$-theories like $\mathcal B$ or $\mathcal H^*$ are induced by topological models, based on Scott domains, which
provide notions of approximant for $\lambda$-terms.
In this section, after discussing the connection between partial metrics and Scott domains, we introduce applicative PPMs as a means to capture domains of Scott-continuous functions, and we illustrate how this leads to quantify  topological models of typed $\lambda$-calculi.


\subparagraph{Scott Domains via Partial Metrics}

Let us recall some basic terminology about dcpos and Scott domains.

A partially ordered set $(X,\leq)$ is a \emph{dcpo} (directed complete partial order) if all directed subsets of $X$ admit a least upper bound. The \emph{way below} relation $\ll$ on a dcpo is defined by 
$
x \ll y $ iff for all directed subset $\Delta\subseteq X$, $y\leq\bigvee \Delta$ implies $x\leq d$, for some $d\in \Delta$.
A point $x\in X$ is said \emph{compact} if $x\ll x$.  
A \emph{basis} for a dcpo $X$ is a subset $B\subseteq X$ such that for any $x\in X$, the set 
$\Delta=\{y\in B\mid y\ll x\}$ is directed and $x=\bigvee \Delta$. A dcpo is said \emph{continuous} if it has a basis and \emph{algebraic} if it has a basis formed of compact elements.
%
A \emph{domain} is a continuous dcpo with a countable basis.
A domain $X$ is \emph{bounded complete} if for any finite set $Y\subseteq_{\mathrm{fin}} X$, if an upper bound of $Y$ exists in $X$, then $\bigvee Y$ exists in $X$. A bounded complete and algebraic domain is called a \emph{Scott domain}. 

The \emph{Scott topology} $\mathcal O_\sigma(X)$ on a partially ordered set $(X,\leq)$ has open sets being upper subsets $U\subseteq X$ which are \emph{finitely accessible}: $x\in U$ implies $y\in U$ for some $y\ll x$. 
A function $f:X\to Y$ between dcpos is said \emph{continuous} iff $f$ is monotone and commutes with the lubs of directed subsets, that is, for all directed $\Delta\subseteq X$, $f(\bigvee \Delta)=\bigvee f(\Delta)$. This is equivalent to asking $f$ to be continuous, in the usual sense, with respect to the Scott topology.
The category of bounded complete domains and continuous functions is cartesian closed (cf.~\cite{Amadio1998}).


    We will say that a dcpo $(X, \leq)$ is \emph{quantified by a partial metric $p$} when its associated Scott topology is quantified by $p$, that is, when $\mathcal O_p(X)=\mathcal O_\sigma(X)$.

\longversion{
We start by discussing the examples from Section 2.

\begin{proposition}
$p_{\mathrm{int}}(I_1,I_2)$ quantifies $\intv$.
\end{proposition}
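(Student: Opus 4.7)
The plan is to prove the two topological inclusions $\mathcal{O}_\sigma(\intv) \subseteq \mathcal{O}_{p_{\mathrm{int}}}(\intv)$ and $\mathcal{O}_{p_{\mathrm{int}}}(\intv) \subseteq \mathcal{O}_\sigma(\intv)$ separately, after first pinning down the dcpo structure of $\intv$ and its way-below relation. Recall from Example \ref{interval} that the induced order $\leq_{p_{\mathrm{int}}}$ coincides with reverse inclusion $\supseteq$, so $\intv$ is a dcpo whose directed joins are realized by intersections of nested families of intervals. A small preliminary calculation gives the key characterization of the way-below relation: $I \ll J$ iff $J \subseteq \mathrm{int}(I)$ (in $\mathbb{R}$). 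The nontrivial direction argues that if $J=[c,d]$ is strictly inside $I=[a,b]$, then for any directed family $\Delta$ of intervals with $\bigcap \Delta \subseteq J$, one can pick, using directedness, a single element whose endpoints approximate $\sup$ of left-endpoints and $\inf$ of right-endpoints closely enough to land inside $(a,b)$. From this it follows that the Scott topology has as basis the sets $\mathord{\Uparrow} K = \{ J \mid J \subseteq \mathrm{int}(K)\}$.

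To show $\mathcal{O}_\sigma \subseteq \mathcal{O}_{p_{\mathrm{int}}}$ it then suffices to verify that each basic set $\mathord{\Uparrow} K$, with $K=[a,b]$, is metric-open. Given $J=[c,d] \in \mathord{\Uparrow} K$, set $\delta := \min(c-a, b-d)>0$; the claim is that $B_\delta^{p_{\mathrm{int}}}(J) \subseteq \mathord{\Uparrow} K$. Indeed, if some $J' = [c',d']$ in this ball satisfied $c'\leq a$, then $\mathrm{diam}(J\cup J') \geq d-a = (d-c)+(c-a) \geq \mathrm{diam}(J)+\delta$, contradicting the defining inequality of the ball; an analogous argument handles $d'\geq b$, and we conclude $J'\subseteq (a,b)$.

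For the converse inclusion $\mathcal{O}_{p_{\mathrm{int}}} \subseteq \mathcal{O}_\sigma$, it suffices to show that each open ball $B_\epsilon^{p_{\mathrm{int}}}(I)$ is Scott-open. The upper-set property is already recorded in the general discussion of $p$-open balls in Section \ref{sec2}. For finite accessibility, take $J=[c,d]$ in the ball and put $\eta := \epsilon + \mathrm{diam}(I)-\mathrm{diam}(I\cup J)>0$. Let $K := [c-\eta/3,\, d+\eta/3]$, so that $J \subseteq \mathrm{int}(K)$ and therefore $K \ll J$. A direct estimate using sub-additivity of $\mathrm{diam}$ on a chain-like union gives $\mathrm{diam}(I\cup K) \leq \mathrm{diam}(I\cup J) + 2\eta/3 = \mathrm{diam}(I) + \epsilon - \eta/3 < \mathrm{diam}(I)+\epsilon$, so $K$ itself lies in the ball.

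The steps are essentially routine once the way-below relation is pinned down; the main subtle point is the $\ll$-characterization, because one must translate directed families of intervals into sup/inf estimates on endpoints and then invoke directedness to combine two approximating conditions (one per endpoint) into a single element of the family. Everything else reduces to small-perturbation bookkeeping on $\mathrm{diam}$.
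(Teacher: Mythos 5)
Your proof is correct and follows essentially the same route as the paper's: both characterize the way-below relation as containment in the interior, show the basic Scott opens $\twoheaduparrow K$ are metric-open by a small-perturbation estimate on endpoints, and show open balls are finitely accessible by slightly widening a ball element to obtain a way-below approximant still in the ball. The only difference is bookkeeping of constants (your $\eta/3$ versus the paper's $\theta/2$ split of $\epsilon$), plus your welcome sketch of the $\ll$-characterization, which the paper states without proof.
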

\begin{proof}
First observe that a countable basis for $\intv$ is formed by the rational closed intervals, and that $I\ll J$ holds iff 
$I=[a,b]$, $J=[a',b']$, with $a<a'$ and $b'<b$.

\begin{description}
\item[($\C O_\sigma(\intv)\supseteq \C O_{p_{\mathrm{int}}}(\intv)$)]
Let us show that open balls are finitely accessible. 
If $p_{\mathrm{int}}(J,I)<p_{\mathrm{int}}(I,I)+\epsilon=\mathrm{diam}(I)+\epsilon$, then we can factor $\epsilon=\theta+\delta$, with $\delta,\theta>0$ so that $p_{\mathrm{int}}(J,I)<p_{\mathrm{int}}(I,I)+\delta$; 
letting $J=[a,b]$, define $J'=[a-\frac{\theta}{2}, b+\frac{\theta}{2}]$, so that $J'\ll J$ and $p_{\mathrm{int}}(J',I)<p_{\mathrm{int}}(I,I)+\delta+\theta=p_{\mathrm{int}}(I,I)+\epsilon$; we conclude then $J'\in B_\epsilon(I)$.

\item[($\C O_\sigma(\intv)\subseteq \C O_{p_{\mathrm{int}}}(\intv)$)]
Let us show that the basic Scott open sets $\twoheaduparrow I =\{ J \mid I\subsetneq J\}$ are $p_{\mathrm{int}}$-open.
Let $I=[a,b]\ll [a',b']=J$, and let $\epsilon=\min\{ \frac{b-b'}{3}, \frac{a'-a}{3}  \}$.
Then, for all $J'\in B_\epsilon^{p_{\mathrm{int}}}(J)$, 
$p_{\mathrm{int}}(J',J)=\mathrm{diam}(J'\cup J)< \mathrm{diam}(J)+\epsilon$, which shows that $J'\subset [a'-\epsilon, b'+\epsilon]\gg [a,b]=I$. 
\end{description}
\end{proof}

\begin{restatable}{proposition}{sigmatree}\label{prop:ptree}
$p_{\mathrm{tree}}$ quantifies $\Sigma\mathrm{Tree}_{\leq\infty}$.
\end{restatable}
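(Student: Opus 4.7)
The plan is to prove both inclusions $\mathcal O_{p_{\mathrm{tree}}}(\Sigma\mathrm{Tree}_{\leq\infty}) \subseteq \mathcal O_\sigma(\Sigma\mathrm{Tree}_{\leq\infty})$ and vice versa, using three ingredients: (a) the finite $\Sigma$-labeled trees form a basis of compact elements for $\Sigma\mathrm{Tree}_{\leq\infty}$, with $\gamma \ll \beta$ iff $\gamma$ is finite and $\gamma \leq \beta$, so the basic Scott opens are exactly the sets $\twoheaduparrow\gamma$ with $\gamma$ finite; (b) the already-established fact $\leq_{p_{\mathrm{tree}}}\, =\, \leq$, which automatically makes every $p_{\mathrm{tree}}$-open ball upper in the tree order; (c) the explicit self-distance formula $p_{\mathrm{tree}}(\alpha,\alpha) = 2^{-|\alpha|}$.

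For the inclusion $\mathcal O_{p_{\mathrm{tree}}} \subseteq \mathcal O_\sigma$, it suffices by (b) to show each open ball $B_\epsilon^{p_{\mathrm{tree}}}(\alpha)$ is finitely accessible. Given $\beta \in B_\epsilon^{p_{\mathrm{tree}}}(\alpha)$, I set $n = \mathrm{div}(\alpha,\beta)$ and take the finite truncation $\gamma = \beta_m$ for $m \geq n$ sufficiently large; in the degenerate subcase $\beta = \alpha$ infinite, $m$ is chosen so that $2^{-m} < \epsilon$. The key observation is that truncating $\beta$ above its divergence level with $\alpha$ preserves that divergence, so $\mathrm{div}(\alpha,\gamma) = n$ (or $m$ in the subcase), giving $\gamma \leq \beta$ finite and lying in $B_\epsilon^{p_{\mathrm{tree}}}(\alpha)$, as required.

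For the inclusion $\mathcal O_\sigma \subseteq \mathcal O_{p_{\mathrm{tree}}}$, I show each basic Scott open $\twoheaduparrow\gamma$ with $\gamma$ finite of height $k$ is a union of $p_{\mathrm{tree}}$-balls. Given $\beta \geq \gamma$, I pick $\epsilon > 0$ with $p_{\mathrm{tree}}(\beta,\beta) + \epsilon \leq 2^{-(k-1)}$; such an $\epsilon$ exists since $p_{\mathrm{tree}}(\beta,\beta) \leq 2^{-k} < 2^{-(k-1)}$. Then any $\delta \in B_\epsilon^{p_{\mathrm{tree}}}(\beta)$ satisfies $p_{\mathrm{tree}}(\delta,\beta) < 2^{-(k-1)}$, which forces $\mathrm{div}(\delta,\beta) \geq k$, hence $\delta_k = \beta_k = \gamma$, i.e.~$\delta \in \twoheaduparrow\gamma$.

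The main obstacle I anticipate is threading the self-distances carefully through the strict inequalities defining the open balls. Unlike in a standard metric, $p_{\mathrm{tree}}(\beta,\beta)$ is generally nonzero, and the naive choice $\epsilon = 2^{-k} - p_{\mathrm{tree}}(\beta,\beta)$ vanishes in the corner case $\beta = \gamma$. One circumvents this by exploiting the discrete range of $p_{\mathrm{tree}}$: the ultrametric jump from $2^{-k}$ to $2^{-(k-1)}$ supplies a uniform positive slack sufficient for all $\beta \geq \gamma$, including $\beta = \gamma$ itself.
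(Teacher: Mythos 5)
Your proposal is correct and follows essentially the same route as the paper's proof: for $\mathcal O_{p_{\mathrm{tree}}}\subseteq\mathcal O_\sigma$ you witness finite accessibility of a ball $B^{p_{\mathrm{tree}}}_\epsilon(\alpha)$ by truncating $\beta$ at or above the divergence level (the paper truncates at $n+1$ for $\epsilon=2^{-n}$, which is the same device), and for $\mathcal O_\sigma\subseteq\mathcal O_{p_{\mathrm{tree}}}$ you fit a ball around each $\beta\in\ \twoheaduparrow\gamma$ by choosing a radius forcing agreement with $\beta$ up to height $|\gamma|$. Your uniform choice $\epsilon\leq 2^{-(k-1)}-p_{\mathrm{tree}}(\beta,\beta)$ is a mildly tidier packaging of the paper's two-case split on $|\beta|>|\alpha|$ versus $|\beta|=|\alpha|$, but the argument is the same.
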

\begin{proof}
\begin{description}
\item[($\C O_\sigma(\C B)\supseteq \C O_{p_{\mathrm{tree}}}(\C B)$)]
Let us show that $p$-balls are finitely accessible. 
  Let $ \gamma\in B_{\epsilon}(\alpha)$; we look for some $\gamma'\ll\gamma$ such that $\gamma'\in B_\epsilon(\alpha)$. We can suppose w.l.o.g.~that $\epsilon=2^{-n}$. Then either $|\alpha|<n+1$, $|\gamma|< n+1$ or 
	$|\alpha|,|\gamma|> n$ and $\alpha_{n+1}\neq \gamma_{n+1}$. In the first case then $p_{\mathrm{tree}}(\alpha, \gamma_{n+1})=p_{\mathrm{tree}}(\alpha,\gamma)$, whence 
	$ \gamma_{n+1}\in B_{\epsilon}(\alpha)$ (and notice that $\gamma_{n+1}\ll \gamma$); in the second case $\gamma$ is finite and we can choose $\gamma':=\gamma$, since $\gamma\ll \gamma$; in the third case we have again $p_{\mathrm{tree}}(\alpha, \gamma_{n+1})$ and thus we can take
	$\gamma':=\gamma_{n+1}$.

\item[($\C O_\sigma(\C B)\subseteq \C O_{p_{\mathrm{tree}}}(\C B)$)]
Let us show that for all trees $\alpha\ll \beta$ we can find $n$ such that $B_{2^{-n}}(\beta)\subseteq \ \twoheaduparrow \alpha$. 

First observe that from $\alpha\ll\beta$ and the remark that finite trees are a basis of $\C B$ it follows that there exists finitely many finite trees $\beta_1,\dots,\beta_k$ such that $\alpha\leq \bigvee_i \beta_i \ll \beta$. This shows in particular that $\alpha$ must be a finite tree as well. We consider different cases:
	\begin{itemize}
	\item $|\beta |>|\alpha|$: then $\beta\in B_{2^{-(|\alpha|+1)}}(\beta)$ implies that $|\gamma|> |\alpha| $ and that $\gamma_n=\alpha_n$ holds for all $n\leq |\alpha|$. This implies then $\alpha\ll\gamma$.

	\item $|\beta|=|\alpha|$: then $\gamma\in B_{2^{-(|\beta|)}}(\beta)$ implies that $|\gamma|\geq |\alpha| $ and that $\gamma_n=\alpha_n$ holds for all $n\leq |\alpha|$. This implies in particular that $\beta=\gamma_{|\beta|}$. From this we deduce that $\alpha\ll\gamma_{|\beta|}$ and thus in particular $\alpha\ll \gamma$. 
	\end{itemize}
\end{description}
\end{proof}

}

\shortversion{
Two simple examples are the following (proofs are in the long version):
\begin{proposition}\label{prop:ptree}
Tthe interval dcpo $\intv$ is quantified by $p_{\mathrm{int}}$ (cf.~Example \ref{interval}).
The domain $\Sigma\mathrm{Tree}_{\leq\infty}$ of $\Sigma$-trees is quantified by $p_{\mathrm{tree}}$ (cf.~Example \ref{ex:tree}).
\end{proposition}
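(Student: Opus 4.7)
For both claims, I need to establish the set equality $\mathcal O_\sigma(X) = \mathcal O_{p}(X)$, which I will break into the two inclusions $\mathcal O_{p}(X) \subseteq \mathcal O_\sigma(X)$ and $\mathcal O_\sigma(X) \supseteq \mathcal O_{p}(X)$. For the first inclusion it suffices to show that every open $p$-ball is Scott open, i.e.\ upper (which was already observed in Section 2 from antimonotonicity) and \emph{finitely accessible}: every point of the ball dominates some way-below point that still lies in the ball. For the second inclusion, it is enough to check a sub-basis of the Scott topology, namely the principal filters $\twoheaduparrow x = \{y \mid x \ll y\}$, and find, around every point of such a filter, a small $p$-ball entirely contained in it. So the whole argument reduces, in each case, to understanding how the way-below relation $\ll$ interacts with the partial metric.

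For the interval dcpo $\intv$, I first record that the way-below relation is given by strict interval containment: $[a,b] \ll [a',b']$ iff $a<a'$ and $b'<b$, with a countable basis provided by the rational intervals. For finite accessibility of $B_\epsilon(I)$, given $J=[a,b] \in B_\epsilon(I)$, I split $\epsilon = \delta+\theta$ strictly and enlarge $J$ by $\theta/2$ on each side to obtain $J' = [a-\theta/2,b+\theta/2] \gg J$; a brief calculation with $\mathrm{diam}$ shows that $p_{\mathrm{int}}(J',I)$ is still strictly below $p_{\mathrm{int}}(I,I)+\epsilon$. For the converse direction, given $I=[a,b] \ll J=[a',b']$, picking $\epsilon = \frac{1}{3}\min\{a'-a, b-b'\}$ ensures that any $J' \in B_\epsilon^{p_{\mathrm{int}}}(J)$ is contained in $[a'-\epsilon, b'+\epsilon]$, which is still strictly inside $I$, hence $I \ll J'$.

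For the tree domain $\Sigma\mathrm{Tree}_{\leq\infty}$, the key observation is that a basis of compact elements is given by the finite trees, and that $\alpha \ll \beta$ forces $\alpha$ itself to be finite. For finite accessibility of $B_{2^{-n}}(\alpha)$, given $\gamma$ in the ball, I take the truncation $\gamma_{n+1}$ (when defined, otherwise $\gamma$ itself is already finite): this is a finite, hence compact, approximant of $\gamma$, and the truncation does not alter the value of $\mathrm{div}(\alpha,\gamma)$ up to depth $n$, so $\gamma_{n+1}$ remains in the ball. For the converse, given $\alpha \ll \beta$ with $\alpha$ finite of height $h$, I claim that $B_{2^{-h}}(\beta) \subseteq \twoheaduparrow\alpha$: any $\gamma$ in this ball agrees with $\beta$ up to depth $h$, so $\alpha$ is dominated by $\gamma_h = \beta_h$, and finiteness of $\alpha$ then gives $\alpha \ll \gamma$.

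The main obstacle, which is really bookkeeping rather than a conceptual difficulty, is getting the radii and the truncation/expansion levels aligned so that the strict inequalities survive: in the tree case the jump between a self-distance $2^{-|\alpha|}$ and the next truncation level $2^{-(|\alpha|+1)}$ must be handled carefully, and in the interval case the splitting $\epsilon=\delta+\theta$ is crucial to preserve strictness after enlarging. In both cases, the cleanest route is to exploit the explicit formulas ($\mathrm{diam}$ for intervals, $\mathrm{div}$ for trees) rather than to reason purely order-theoretically.
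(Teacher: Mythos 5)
Your proposal is correct and follows essentially the same route as the paper: the same two-inclusion decomposition via finite accessibility of open balls and metric-openness of the sets $\twoheaduparrow x$, the same witnesses (the enlarged interval $[a-\theta/2,\,b+\theta/2]$ and the truncation $\gamma_{n+1}$), and the same radius choices, up to your using a single uniform radius $2^{-|\alpha|}$ in the tree case where the paper splits into the subcases $|\beta|>|\alpha|$ and $|\beta|=|\alpha|$ (your uniform choice checks out since the distances only take values in $\{2^{-k}\}\cup\{0\}$).
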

}

A very general result on the quantifiability of domains, mentioned in the Introduction, is the following:
\begin{theorem}[cf.~\cite{Schellekens2003}]\label{thm:full}
Let $(X,\leq)$ be a domain with a countable basis $(b_n)_{n\in \mathbb N}$, and let $\theta_n\in (0,1]$ be a sequence of weights such that $\sum_n^\infty\theta_n\leq 1$. Then $X$ is quantified by the partial metric 
$p^X_{b_n, \theta_n}(x,y)=\sum_{n \in N} \theta_n$, where $N := \left\{ n \  \vert \  b_n\not \ll x \ \text{or} \ b_n \not \ll y\right  \}$.
\end{theorem}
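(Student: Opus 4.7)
The plan is to reformulate the partial metric in set-theoretic terms so that both the PPM axioms and the topological coincidence become transparent. For $x\in X$ set $K(x):=\{n\in\mathbb N : b_n\ll x\}$, and for $S\subseteq \mathbb N$ put $\sigma(S):=\sum_{n\in S}\theta_n$, writing $T:=\sigma(\mathbb N)\leq 1$. Since the index set $N$ in the statement is exactly $\mathbb N\setminus(K(x)\cap K(y))$, the formula rewrites as
\[
p(x,y)\;=\;T-\sigma(K(x)\cap K(y)),\qquad p(x,x)\;=\;T-\sigma(K(x)).
\]

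Checking the PPM axioms then reduces to set algebra. P1 and P3 are immediate from monotonicity of $\sigma$ and symmetry of intersection. For P4, the modularity identity $\sigma(A)+\sigma(B)=\sigma(A\cup B)+\sigma(A\cap B)$ applied to $A:=K(x)\cap K(z)$ and $B:=K(z)\cap K(y)$ yields
\[
\sigma(A)+\sigma(B)=\sigma\bigl(K(z)\cap(K(x)\cup K(y))\bigr)+\sigma(K(x)\cap K(y)\cap K(z))\leq \sigma(K(z))+\sigma(K(x)\cap K(y)),
\]
which unpacks into P4. For P2, if $\sigma(K(x))=\sigma(K(y))=\sigma(K(x)\cap K(y))$, strict positivity of each $\theta_n$ forces $K(x)=K(y)$, and then the basis identity $x=\bigvee\{b_n:n\in K(x)\}$ forces $x=y$.

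I would then show $\C O_p(X)=\C O_\sigma(X)$. For $\C O_p(X)\subseteq \C O_\sigma(X)$, each ball $B^p_\epsilon(x)$ must be upper and finitely accessible; upper-closedness is the antimonotonicity of $\leq_p$ already noted in Section~2, so the substance lies in finite accessibility, which I expect to be the main obstacle. Given $y\in B^p_\epsilon(x)$, the inequality unfolds to $\sigma(K(y)\cap K(x))>\sigma(K(x))-\epsilon$; pick a finite $F\subseteq K(y)\cap K(x)$ with $\sigma(F)$ close enough to $\sigma(K(y)\cap K(x))$ that the strict inequality survives with $\sigma(F)$ in place. The technical step is to manufacture a \emph{single} basis element $b_m\ll y$ with $F\subseteq K(b_m)$: for each $n\in F$ one has $b_n\ll y$, so by the interpolation property of continuous dcpos there is $b'_n$ with $b_n\ll b'_n\ll y$; directedness of $\{b_k:b_k\ll y\}$ then supplies $b_m\ll y$ above all the $b'_n$, whence $b_n\ll b_m$ for each $n\in F$. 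This gives $p(b_m,x)\leq T-\sigma(F)<p(x,x)+\epsilon$, so $b_m$ is the required $\ll y$-witness inside the ball.

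For the converse $\C O_\sigma(X)\subseteq \C O_p(X)$, take a Scott open $U$ and $y\in U$. Since $\{b_n:b_n\ll y\}$ is directed with supremum $y$ and $U$ is finitely accessible, some basis element $b_k\ll y$ lies in $U$, so $\uparrow\! b_k\subseteq U$. I claim $B^p_{\theta_k}(y)\subseteq \,\uparrow\! b_k$: indeed $w\in B^p_{\theta_k}(y)$ unfolds to $\sigma(K(y)\setminus K(w))<\theta_k$, and since weights are strictly positive this forces $k\in K(w)$, i.e.~$b_k\leq w$. Hence every point of $U$ has an explicit $p$-ball around it inside $U$, finishing the proof.
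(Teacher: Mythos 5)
The paper does not actually prove this theorem: it is imported verbatim from the literature (the ``cf.~\cite{Schellekens2003}'' is doing all the work), so there is no in-paper proof to compare against. Your argument is correct and is essentially the standard quantifiability proof from that line of work (O'Neill/Schellekens-style weighting of a countable basis): the rewriting $p(x,y)=T-\sigma(K(x)\cap K(y))$ reduces P1--P4 to monotonicity and modularity of $\sigma$, and both topological inclusions go through as you describe. Two small points are worth making explicit. First, for Scott-openness of balls you need upper-closure with respect to $\leq$, not just $\leq_p$; this follows from the one-line observation that $y\leq y'$ implies $K(y)\subseteq K(y')$, hence $y\leq_p y'$, after which the antimonotonicity argument from Section~2 applies. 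Second, in the finite-accessibility step you should note that interpolation in a continuous dcpo can be performed \emph{within the basis} (i.e.\ $b_n\ll y$ yields a basis element $b'_n$ with $b_n\ll b'_n\ll y$), since you then invoke directedness of $\{b_k\mid b_k\ll y\}$ to dominate all the $b'_n$ by a single $b_m$; without $b'_n$ being basis elements that directedness is not available, and without the interpolation you would only get $b_n\leq b_m$ rather than the needed $b_n\ll b_m$. Both are standard facts and do not affect the correctness of the proof.
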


While Theorem \ref{thm:full} provides a general positive answer to the \emph{quantifiability} problem for domains, the practical usability of metrics like $p^X_{b_n, \theta_n}$ depends on whether the relation $b_n\ll x$ between a point and an approximant, and its negation, are computable (cf.~Remark \ref{rem:compu} below).

When a dcpo $X$ is quantified by a partial metric $p$, the order of $X$ coincides with the order induced by $p$.

\begin{restatable}{lemma}{Xleq}
Suppose the dcpo $(X,\leq)$ is quantified by $p$. Then $\leq$ coincides with $\leq_p$.
\end{restatable}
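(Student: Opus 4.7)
The plan is to pass through the \emph{specialization preorder} of a topological space. Recall that in any topological space $(X, \mathcal{O})$, the specialization preorder is defined by $x \preceq y$ iff every open set containing $x$ also contains $y$. A topology thus determines a preorder, so it suffices to show (i) that the specialization preorder of $\mathcal{O}_\sigma(X)$ is $\leq$, and (ii) that the specialization preorder of $\mathcal{O}_p(X)$ is $\leq_p$. The conclusion then follows from the hypothesis $\mathcal{O}_p(X) = \mathcal{O}_\sigma(X)$, since a single topology cannot induce two distinct preorders.

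For (i), one direction uses that all Scott opens are upper sets: if $x \leq y$ and $x \in U \in \mathcal{O}_\sigma$, then $y \in U$. The other direction uses that, in a continuous dcpo, the complement $X \setminus {\downarrow}x$ of any principal downset is Scott open: it is obviously upper, and if $z \not\leq x$ then some $w \ll z$ must also satisfy $w \not\leq x$, for otherwise the directed set $\{w \ll z\}$ would have supremum $z \leq x$. Hence, if every open containing $z$ contains $x$, then $z \in {\downarrow}x$, i.e.~$z \leq x$.

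For (ii), the inclusion $\leq_p \,\subseteq\, \preceq$ follows from the antimonotonicity of $p$ recorded in Section~\ref{sec2}: if $x \leq_p y$, then $p(y,z) \leq p(x,z)$ for every $z$, so any open ball containing $x$ also contains $y$, and since opens are unions of balls this propagates to arbitrary open sets. Conversely, suppose $x \not\leq_p y$, i.e.~$p(x,y) > p(x,x)$. Setting $\epsilon := p(x,y) - p(x,x) > 0$, the open ball $B^p_\epsilon(x)$ contains $x$ (since $p(x,x) < p(x,x) + \epsilon$) but not $y$, since $p(y,x) = p(x,y) = p(x,x) + \epsilon$ is not strictly below $p(x,x) + \epsilon$. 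Hence $x \not\preceq y$.

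Chaining the two characterizations through the common topology yields $\leq\,=\,\leq_p$. No real obstacle is expected: everything boils down to the separation argument in (ii), which works because $\epsilon$ is chosen exactly to place $y$ on the boundary of $B^p_\epsilon(x)$, outside the strict inequality defining open balls.
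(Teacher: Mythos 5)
Your proof is correct and follows essentially the same route as the paper's: both pass through the specialization preorders of the two topologies and conclude from $\mathcal{O}_\sigma(X)=\mathcal{O}_p(X)$. The only difference is that you spell out the two identifications (Scott specialization $=$ $\leq$, metric specialization $=$ $\leq_p$) that the paper asserts without proof, and your arguments for both are sound.
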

\begin{proof}
$\leq$ coincides with the \emph{specialization order} $x\leq^{\mathcal O_\sigma(X)}y \Leftrightarrow\forall U\in \mathcal O_\sigma(X)(x\in U\Rightarrow y\in U)$; similarly, $\leq_p$ coincides with the {specialization order} 
$x\leq^{\mathcal O_p(X)}y \Leftrightarrow\forall U\in \mathcal O_p(X)(x\in U\Rightarrow y\in U)$. From $\mathcal O_\sigma(X)=\mathcal O_p(X)$ we deduce that the two specialization orders coincide, and thus $\leq$ and $\leq_p$ as well.
\end{proof}
%

%
%
%
%
%
%

However, checking that a partial metric $p$ captures the order of the dcpo is \emph{not} in general enough to deduce that $p$ quantifies the dcpo (e.g.~the example in Remark \ref{rem:leqleqp}). 
The following proposition provides necessary (but not sufficient) conditions.


\begin{restatable}{proposition}{continuous}\label{prop:continuous}
Let $(X,\leq)$ be a continuous dcpo and $p$ a partial metric on $X$ such that $\leq$ coincides with $\leq_p$.Then the following conditions are equivalent:
\begin{enumerate}
\item $\C O_p(X)\subseteq \C O_\sigma(X)$;
\item open $p$-balls are finitely accessible;
\item $p$ is Scott-continuous (as a map towards the dcpo $([0,+\infty],\geq)$).
\end{enumerate}
\end{restatable}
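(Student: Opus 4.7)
I would prove $(1)\Leftrightarrow (2)$ directly, then establish $(2)\Leftrightarrow(3)$ by using the finite accessibility of balls as the bridge to Scott-continuity.

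For $(1)\Leftrightarrow(2)$, recall that open $p$-balls are always upper sets (as noted in the paper after the definition of the topology), hence so are arbitrary unions of them, i.e.~every set in $\C O_p(X)$ is upper. A Scott-open set is exactly an upper finitely accessible set, so $\C O_p(X)\subseteq \C O_\sigma(X)$ is equivalent to every element of $\C O_p(X)$ being finitely accessible. Since finite accessibility is preserved by arbitrary unions and $\C O_p(X)$ is generated by open balls, this reduces to asking that the open $p$-balls themselves be finitely accessible, which is $(2)$.

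For $(3)\Rightarrow(2)$: suppose $y\in B^p_\epsilon(x)$, i.e.~$p(y,x)<p(x,x)+\epsilon$. Since $X$ is a continuous dcpo, the set $\Delta:=\{z\mid z\ll y\}$ is directed with $\bigvee \Delta=y$. Scott-continuity of $p$ into $([0,+\infty],\geq)$ (jointly, hence in the first argument) yields $p(y,x)=\inf_{z\ll y}p(z,x)$, so there exists $z\ll y$ with $p(z,x)<p(x,x)+\epsilon$; that is, $z\in B^p_\epsilon(x)$, as required.

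For $(2)\Rightarrow(3)$: monotonicity from $(X,\leq)$ to $([0,+\infty],\geq)$ is the antimonotonicity of $p$ w.r.t.~$\leq_p=\leq$ which the paper has already established. Since a function between dcpos is jointly Scott-continuous iff it is monotone and Scott-continuous in each argument, and $p$ is symmetric, it suffices to show that for any directed $\Delta\subseteq X$ with $y=\bigvee\Delta$ and any $x\in X$, one has $p(y,x)=\inf_{d\in\Delta}p(d,x)$. Antimonotonicity gives $p(y,x)\leq \inf_{d\in\Delta}p(d,x)=:\beta$. Suppose, for contradiction, that $p(y,x)<\beta$. Since by P1 one has $p(x,x)\leq p(y,x)<\beta$, there exists $\eta>0$ with $p(y,x)-p(x,x)<\eta<\beta-p(x,x)$ (picking any sufficiently large finite $\eta$ if $\beta=+\infty$). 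Then $y\in B^p_\eta(x)$ while no $d\in\Delta$ lies in $B^p_\eta(x)$, since $p(d,x)\geq \beta>p(x,x)+\eta$. Applying $(2)$ to $y\in B^p_\eta(x)$ furnishes some $z\ll y$ with $z\in B^p_\eta(x)$; but $z\ll y=\bigvee\Delta$ yields a $d\in\Delta$ with $z\leq d$, and since $B^p_\eta(x)$ is upper this forces $d\in B^p_\eta(x)$, a contradiction. Hence equality holds and $p$ is Scott-continuous.

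The main obstacle is conceptually mild: the only delicate step is the choice of the radius $\eta$ in $(2)\Rightarrow(3)$, which must place $\bigvee\Delta$ strictly inside the ball while keeping every $d\in\Delta$ outside, and the verification that this $\eta$ is well-defined uses the P1 axiom together with the two edge cases $\beta<+\infty$ and $\beta=+\infty$. Once this is done, the remaining ingredients are standard: finite accessibility plus upperness characterises Scott-openness, and the separate-to-joint Scott-continuity argument is routine for dcpos.
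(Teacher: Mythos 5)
Your proof is correct and follows essentially the same route as the paper's: $(1)\Leftrightarrow(2)$ via upperness of balls, $(3)\Rightarrow(2)$ by writing $p(y,x)$ as an infimum over way-below approximants, and $(2)\Rightarrow(3)$ by using Scott-openness of a suitably chosen ball around $x$ containing $\bigvee\Delta$ to extract a way-below element and then a member of $\Delta$ inside the ball. The only cosmetic differences are that you phrase $(2)\Rightarrow(3)$ as a contradiction where the paper argues directly with radii $\delta'>\delta$, and you add a (welcome) remark on reducing joint to separate Scott-continuity.
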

\begin{proof}
\begin{description}
\item[($1\Leftrightarrow 2$)] Since the open balls are upper sets, these are Scott open iff they are finitely accessible.

\item[($3\To 2$)]
$p$ is Scott continuous when for all $x\in X$ and directed subset $\Delta \subseteq X$ one has 
$
p(x, \bigvee \delta ) = \inf_{d\in \delta}p(x,d)$.
Suppose $p$ is continuous and let $y\in B_\epsilon(x)$. We need to show that there exists $y'\ll y$ such that $y'\in B_\epsilon(x)$. 
This implies that for some $\epsilon'<\epsilon$, $p(y,x)<p(x,x)+\epsilon'$. 
Since $p$ is continuous and $y=\bigvee\{ z\mid z\ll y\}$ we have then 
$
\inf\{ p(z,x)\mid z\ll y \} = p(y,x)<p(x,x)+\epsilon'.
$
This implies in turn that for some $y'\ll y$, $p(y',x)\leq p(x,x)+\epsilon'<p(x,x)+\epsilon$, that is, 
$y'\in B_\epsilon(x)$.

\item[($2\To 3$)]
Suppose that open $p$-balls are finitely accessible, hence Scott open. Let $\Delta\subseteq X$ be a directed set and $x\in X$. We need to prove that $p(x,\bigvee \Delta)=\inf_{d\in\Delta}p(x,d)$. Observe that the "$\leq$" direction directly follows from $d\leq \bigvee \Delta$. To prove the "$\geq$" direction we argue as follows: let $p(x,\bigvee\Delta)=p(x,x)+\delta$, with $\delta \in\BB R_{\geq 0}$. 
Let $\delta'>\delta$, so that we have $\bigvee\Delta\in B_{\delta'}(x)$. Since $B_{\delta'}(x)$ is Scott-open, there exists $w\ll \bigvee\Delta$ such that $w\in B_{\delta'}(x)$. From $w\ll\bigvee\Delta$ it follows that, for some $d\in D$, $w\leq d$ holds, whence $p(d,x)\leq p(w,x)< p(x,x)+\delta'$. 
We have thus proved that for all $ \delta'>\delta$ there exists $ d\in \Delta$ such that $ p(d,x)<p(x,x)+\delta'$, 
which implies then
$
\inf_{d\in \Delta}p(d,x)\leq p(x,x)+\delta= p(x,\bigvee\Delta).$
\end{description}
\end{proof}

To check the converse condition $\C O_\sigma(X)\subseteq \C O_p(X)$, one must show that, given $x\ll y$, one can form open balls around $y$ whose elements all lie way above $x$. This corresponds to showing that the basic open sets
$\twoheaduparrow x=\{y\mid x\ll y\}$ for the Scott topology are metric open.

\begin{remark}\label{rem:leqleqp}
The order $\leq_p$ induced by a PM $p$ on $X$ induces the Scott topology $\mathcal O_{\sigma(p)}(X)$, which need not coincide with $\mathcal O_p(X)$. 
For instance, $p$ might fail to be Scott-continuous (which, by Proposition \ref{prop:continuous}, implies that open $p$-balls are not Scott open): define a variant $q$ of the tree partial metric as 
$q(\alpha,\beta)=\frac{1}{2}p_{\mathrm{tree}}(\alpha,\beta)+\frac{1}{4}$ if $\alpha\neq\beta$ or $\alpha=\beta$ is finite, and as $p_{\mathrm{tree}}(\alpha,\beta)$ if $\alpha=\beta$ is infinite. 
$q$ is still a partial metric and $\leq_q=\leq_p$; yet, letting $\alpha_n$ a directed sequence of finite trees converging to an infinite tree $\alpha$, we have $\lim_n q(\alpha_n, \alpha)=\frac{1}{4}>0= q(\bigvee_n \alpha_n, \alpha)$.
\end{remark}

\begin{remark}[computability of $p(x,y)<\epsilon$]\label{rem:compu}
An immediate and useful consequence of the fact that open balls are Scott open is
that $p(x,y)< \epsilon$ holds precisely when $p(x',y')<\epsilon$ holds for some approximants $x'\ll x$ and $y'\ll y$.
In other words, to verify that $y$ is \emph{close enough} to $x$ it is enough to check that the approximants of $y$ get close enough to the approximants of $x$. When distances between approximants, as well as the relation $b\ll x$ between a point and an approximant, are computable, the property $p(x,y)<\epsilon$ may be (semi-)decidable, even though the exact values $p(x,y)$ are as hard as computing the $\lambda$-theory (usually, $\Pi^0_2$ or worse). 
For instance, in the case of B\"ohm trees, to check that $p_{\mathrm{B\"ohm}}(M,N)<2^{-n}$, it is enough to check that $\mathcal B(M)$ and $\mathcal B(N)$ coincide up to height $n$, a property which can be semi-decided.
\end{remark}

%

%
%

\begin{example}[$\epsilon/\delta$-continuity of contexts]
As $p_{\mathrm{tree}}$ quantifies the Scott topology of trees (cf.~Proposition \ref{prop:ptree}), it quantifies the Scott topology of B\"ohm trees. From the continuity theorem for B\"ohm trees (cf.~\cite{Baren95}) we deduce then the following: for all context $\TT C[-]$ and $\lambda$-term $M$ and for all $\epsilon>0$, there exists $\delta>0$ such that, for all terms $P$,  $p_{\mathrm{B\"ohm}}(P,M)\leq \delta$
implies  $p_{\mathrm{B\"ohm}}(\TT C[P], \TT C[M])\leq \epsilon$.
Another way of stating this is that for all $\TT C[-]$ and $M$, for all $n\in \mathbb N$ there exists $m\in \mathbb N$ such that, if $\mathcal B(P)$ and $\mathcal B(M)$ are the same up to depth $m$, then 
$\mathcal B(\TT C[P])$ and $\mathcal B(\TT C[M])$ are the same up to depth $n$.
\end{example}

%
%
%
%

\subparagraph{Applicative distances and the function space}

The category $\mathsf{Scott}$ of Scott domains and continuous functions is a sub-category of $\mathrm{Top}$ that is, as is well-known, cartesian closed. 
Using Theorem \ref{thm:full} it is possible to define, on each object of $\mathsf{Scott}$, a partial metric that quantifies its topology. However, in common approaches to higher-order languages (e.g.~\cite{Gaboardi2017, Gavazzo2018, Hoshino2023}), one requires distances to be defined in a \emph{compositional} way.

For example, given metric spaces $(X,d_X)$ and $(Y,d_Y)$, a standard way to define a metric on their cartesian product is 
by letting $d_{X\times Y}(\langle x,y\rangle, \langle x',y'\rangle)=d_X(x,x')+d_Y(y,y')$. 
Indeed, a similar construction also works for PMs:

\begin{restatable}{proposition}{cartesian}
Let $X,Y$ be two Scott domains, quantified, respectively, by the partial metrics $p_X,p_Y$. Their cartesian product $X\times Y$ is then quantified by the partial metric $p_{X\times Y}:=\frac{1}{2}(p_X+p_Y)$.
\end{restatable}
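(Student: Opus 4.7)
The plan is to verify that $p_{X\times Y}$ satisfies axioms P1--P4 and then to show that the topology it induces agrees with the Scott topology of $X\times Y$.

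Axioms P1 and P3 transfer componentwise and are preserved by a non-negative linear combination of coordinate-wise (in)equalities. For P4, summing the strong triangle inequalities of $p_X$ and $p_Y$ with a common midpoint $(z_1,z_2)$ and dividing by two, the subtracted self-distance terms on the right recombine exactly into $p_{X\times Y}((z_1,z_2),(z_1,z_2))$. For P2, if $p_{X\times Y}((x_1,y_1),(x_1,y_1))=p_{X\times Y}((x_1,y_1),(x_2,y_2))$, then $(p_X(x_1,x_2)-p_X(x_1,x_1))+(p_Y(y_1,y_2)-p_Y(y_1,y_1))=0$; since both summands are non-negative by P1, each vanishes. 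Applying the same splitting to the equality with the other self-distance and then invoking P2 for $p_X$ and $p_Y$ forces $x_1=x_2$ and $y_1=y_2$. The same splitting argument shows that $\leq_{p_{X\times Y}}$ is exactly the product order on $X\times Y$.

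Rather than re-checking Scott-continuity of $p_{X\times Y}$ via Proposition \ref{prop:continuous}, my preferred route is to compare balls directly. Using P1 to isolate the two non-negative summands $p_X(x',x)-p_X(x,x)$ and $p_Y(y',y)-p_Y(y,y)$, one readily verifies the sandwich
\[
B_\epsilon^{p_X}(x)\times B_\epsilon^{p_Y}(y)\ \subseteq\ B_\epsilon^{p_{X\times Y}}((x,y))\ \subseteq\ B_{2\epsilon}^{p_X}(x)\times B_{2\epsilon}^{p_Y}(y),
\]
from which it follows that $\C O_{p_{X\times Y}}(X\times Y)$ coincides with the product topology $\C O_{p_X}(X)\times \C O_{p_Y}(Y)$.

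Since $p_X,p_Y$ quantify $X,Y$ and the Scott topology on a product of continuous dcpos coincides with the product of Scott topologies---via the decomposition $(x,y)\ll(x',y')$ iff $x\ll x'$ and $y\ll y'$, which makes the basic Scott opens $\twoheaduparrow(x,y)$ factor as $\twoheaduparrow x\times \twoheaduparrow y$---we obtain $\C O_{p_{X\times Y}}(X\times Y)=\C O_\sigma(X\times Y)$, which is the claim. The only delicate step I anticipate is the P2 verification, where the non-negativity supplied by P1 is precisely what allows the ``sum equals zero'' conclusion to propagate to the individual factors.
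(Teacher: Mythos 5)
Your proof is correct, but it takes a genuinely different route from the paper's on the topological part. The paper does not bother to re-verify axioms P1--P4 (you do, correctly --- the P2 step via ``sum of non-negatives equals zero'' is exactly right), and it establishes $\C O_{p_{X\times Y}}(X\times Y)=\C O_\sigma(X\times Y)$ \emph{directly}: it shows open $p_{X\times Y}$-balls are finitely accessible by splitting $\epsilon=\delta+\theta$, picking way-below approximants $c_X\ll b_X$, $c_Y\ll b_Y$ in $\theta$-balls componentwise, and running the P4 inequality; and it shows each $\twoheaduparrow a=\ \twoheaduparrow a_X\times \twoheaduparrow a_Y$ contains a $p_{X\times Y}$-ball via the same P1 trick you use for your second sandwich inclusion. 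You instead factor the argument through the product topology: the two-sided ball sandwich (both inclusions check out --- the first by adding the two componentwise bounds, the second because each non-negative summand $p_X(x',x)-p_X(x,x)$ and $p_Y(y',y)-p_Y(y,y)$ is bounded by their sum $<2\epsilon$) gives $\C O_{p_{X\times Y}}=\C O_{p_X}\times\C O_{p_Y}$, and you then invoke the fact that for \emph{continuous} dcpos the Scott topology commutes with binary products. That last fact is the one external ingredient your argument leans on that the paper's self-contained computation avoids: it is true (and your justification via the componentwise decomposition of $\ll$ and $\twoheaduparrow(x,y)=\ \twoheaduparrow x\times\twoheaduparrow y$ is the right one, together with the observation that $X\times Y$ is itself continuous with basis $B(X)\times B(Y)$, so these sets form a basis of $\sigma(X\times Y)$), but it fails for general dcpos, so continuity is doing real work there. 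What your route buys is brevity and the slightly stronger conclusion that $\C O_{p_{X\times Y}}$ \emph{is} the product topology; what the paper's route buys is uniformity with the (unavoidably direct) proofs for the exponential and for $D_\infty$, where no such product-decomposition shortcut is available.
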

\longversion{

\begin{proof}
Given $	a\in X\times Y$, we write $a_X,a_Y$ the elements of $X$ and $Y$ such that $a=(a_X,a_Y)$.
 
	Let us first show that the distance $p_{X\times Y}:=\frac{1}{2}(p_{X}+p_{Y})$ quantifies the order relation on $X\times Y$:
	\begin{itemize}
		
		\item If $a \leq b$, then $a_{X} \leq b_{X}$ and $a_{Y}\leq b_{Y}$, whence $p_{X}(a_{X},b_{X})\leq p_{X}(a_{X},a_{X})$, and $p_{Y}(a_{Y},b_{Y})\leq p_{Y}(a_{Y},a_{Y})$, which implies
		$p_{X\times Y}(a,b)\leq p_{X\times Y}(a,a)$.
		

		\item Suppose $p_{X\times Y}(a,b)\leq p_{X\times Y}(a,a)$; we thus have $p_{X}(a_{X},b_{X}) 
		=2p_{X\times Y}(a,b)-p_Y(a_Y,b_Y)\leq 2p_{X\times Y}(a,a) -p_Y(a_Y,b_Y)
		= p_{X}(a_{X},a_{X}) + p_{Y}(a_{Y},a_{Y}) - p_{Y}(a_{Y},b_{Y}) \leq p_{X}(a_{X},a_{X})$ since $p_{Y}(a_{Y},a_{Y}) - p_{Y}(a_{Y},b_{Y}) \leq 0$ by (P1). Hence $a_{X}\leq b_{X}$.
		
		We can proceed similarly to show $a_{Y}\leq b_{Y}$ and then conclude $a \leq b$.
		
	\end{itemize}
	
	We now show that the distance  quantifies the topology, i.e.~$\C O_{\sigma}(X\times Y)= \C O_{p_{X\times Y}}(X\times Y)$.
	
	\begin{itemize}
		\item $\C O_{\sigma}(X\times Y)\supseteq \C O_{p_{X\times Y}}(X\times Y)$, we show that every metric ball is Scott open:
		\begin{description}
			\item[Upper] Always true.
			
			\item[Finite Access] Suppose again $b \in B_{\epsilon}(a)$; we must show that there exists $b'\ll b$, such that $b' \in B_{\epsilon}(a)$.

			Since $b \in B_{\epsilon}(a)$, we can choose $\delta, \theta > 0$ such that $\epsilon = \delta + \theta$ and $p_{X\times Y}(b,a) < p_{X\times Y}(a,a) + \delta$
			
			Now by induction hypothesis, we can find $c_X \in B_{\theta}(b_X)$ such that $c_x \ll b_x$, same for $Y$. Let $c = (c_X, c_Y)$
			
			$
			\begin{WithArrows}
				p_{X\times Y}(c,a)
				&\leq p_{X\times Y}(c,b) + p_{X\times Y}(b,a) - p_{X\times Y}(b,b) \\
				&= \frac{1}{2}\big (p_{X}(c_{X},b_{X}) + p_{Y}(b_{X},a_{X}) - p_{X}(b_{X},b_{X}) \\
				&+ p_{X}(c_{Y},b_{Y}) + p_{Y}(b_{Y},a_{Y}) - p_{Y}(b_{Y},b_{Y}) \big)\qquad \Arrow{$c_X \in B_{\theta}(b_X)$} \\
				&\leq \frac{1}{2}\big (p_{X}(b_{X},b_{X}) + {\theta} + p_{Y}(b_{X},a_{X}) - p_{X}(b_{X},b_{X}) \\
				&+ p_{X}(b_{Y},b_{Y}) + {\theta} + p_{Y}(b_{Y},a_{Y}) - p_{Y}(b_{Y},b_{Y})\big) \\
				&= p_{X\times Y}(a,b) + \theta \\
				&\leq p_{X\times Y}(a,a) + \delta + \theta \\
				&= p_{X\times Y}(a,a) + \epsilon.
			\end{WithArrows}$
			
		\end{description}
		
		\item $\C O_{\sigma}(X\times Y)\subseteq \C O_{p_{X\times Y}}(X\times Y)$:
		
		We use the fact that the $\twoheaduparrow a$ form a basis for the Scott Topology.
		
		Now, suppose $a \ll b$; we must find $\epsilon >0$ such that $B_{\epsilon}(b)\subseteq \twoheaduparrow a$. Then $\twoheaduparrow a$ will be an open set as the union of all these balls.
		
		From $a \ll b$ it follows that $a_{X} \ll b_{X}$ and $a_{Y} \ll b_{Y}$.
		
		Since the topology coincide for $X$ and $Y$, there are $\epsilon_{X},\epsilon_{Y}>0$ such that $B_{\epsilon_{S}}(b_{S}) \subseteq \twoheaduparrow a_{S}$.
		
		Let $\epsilon = \frac{1}{2}\min\{\epsilon_{X},\epsilon_{Y}\}$, and suppose $c \in B_{\epsilon}(b)$;

		Now 
		$
		\begin{WithArrows}
			p_{X}(c_{X},b_{X}) 
			&=2 p_{X\times Y}(c,b)-p_Y(c_Y,b_Y) \\
			& < 2p_{X\times X}(b,b)-p_Y(c_Y,b_Y)+\min\{\epsilon_{X},\epsilon_{Y}\}\\
			&= p_{X}(b_{X},b_{X}) + p_{Y}(b_{Y},b_{Y}) - p_{Y}(c_{Y},b_{Y}) + \min\{\epsilon_{X},\epsilon_{Y}\} \Arrow{P1} \\
			&\leq p_{X}(b_{X},b_{X}) + \min\{\epsilon_{X},\epsilon_{Y}\}.
		\end{WithArrows}$
		
		So $c_{X} \in B_{\epsilon}(b_X) \subseteq B_{\epsilon_X}(b_X)$, and thus $a_X \ll c_X$. Same for $Y$.
		
		We conclude $a \ll c$.
		
	\end{itemize}
\end{proof}

}

\begin{remark}
In the following discussion we restrict attention to partial metrics valued in $[0,1]$, rather than on $[0,+\infty]$. This is not a limitation, since for any partial metric $p$ with values in $[0,+\infty]$, the partial metric $p^{\leq 1}:X\times X\to [0,1]$ defined by $p^{\leq 1}(x,y) := \frac{p(x ,y)}{1+p(x,y)}$
induces the same topology (cf.~\cite{Myronyk2022}).
\end{remark}

Let us now consider the function space. Given metric spaces $(X,d)$ and $(X',d')$, a standard compositional way to define a metric on the space $\C C(X, X')$ of continuous functions from $X$ to $X'$ is via the $\sup$-condition
$
d_{\sup}(f,g)=\sup_{x\in X}d'(f(x),f(x'))
$. 
Notably, when $X$ is compact, $d_{\sup}$ metrizes the \emph{compact-open} topology on $\C C(X, X')$.
Other compositional metrics on the space of \emph{non-expansive} functions $\mathrm{NExp}(X,X')$, depending on \emph{both} $d$ and $d'$, can be found in the literature \cite{Clementino2006, Honsell2022}. 
Similar compositional definitions are also found in more operational approaches like e.g.~\cite{Reed_2010,Gavazzo2018}.

A common intuition in all these definitions is that two functions are close when their application to close (or even identical) points produces points that are still close. We will call functional distances respecting this idea \emph{applicative distances}.

However, to define an applicative PM on the space of continuous functions, we cannot directly adapt a definition like $d_{\sup}$: unlike for standard (pseudo-)metrics, the $\sup$s of a family of PPMs does \emph{not} define a PPMs. This is due to condition P4, which relies in a \emph{contravariant} way on the medium self-distance $p(z,z)$.

Instead, we will rely on the remark that a continuous function $f:X\to Y$ is uniquely determined by its action on the (countably many) elements of a basis of $X$.
This suggests indeed the definition from the Proposition below:

\begin{restatable}{proposition}{exponential}\label{th:exponential}
Let $X,Y$ be two Scott domains, quantified, respectively, by the PMs $p_X,p_Y$, and let $(a_n)_{n\in \mathbb N}$ be an enumeration of a basis of $X$. Then, for all $0<\theta\leq\frac{1}{2}$, their exponential $X\Rightarrow Y$ is quantified by the PM 
\begin{equation}
p^\theta_{X\To Y}(f,g)= \sum_{n=1}^{\infty}\theta^n p_{Y}(f(a_{n}),g(a_{n})).
\end{equation}
\end{restatable}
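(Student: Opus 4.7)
The plan is to first verify that $p^\theta_{X\To Y}$ is a partial metric on the function space, then to show that the order it induces is the pointwise (Scott) order, then to show it is Scott-continuous, and finally to verify topological coincidence via Proposition~\ref{prop:continuous} together with a direct argument for the reverse inclusion using the step-function basis of $X \To Y$.

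For P1, P3 and P4, each summand $\theta^n p_Y(f(a_n),g(a_n))$ is a partial pseudo-metric on $X\To Y$ (since $p_Y$ is a PM and evaluation is a function), and these properties are preserved by non-negative weighted sums. The sum converges in $[0,1]$ because, by the remark preceding the statement, one may assume $p_Y$ valued in $[0,1]$, so $p^\theta_{X\To Y}(f,g) \leq \sum_{n\geq 1}\theta^n \leq 1$ when $\theta \leq \tfrac12$. For P2, if $p^\theta_{X\To Y}(f,f)=p^\theta_{X\To Y}(f,g)=p^\theta_{X\To Y}(g,g)$, then since by P1 each term satisfies $p_Y(f(a_n),f(a_n))\leq p_Y(f(a_n),g(a_n))$ and $p_Y(g(a_n),g(a_n))\leq p_Y(f(a_n),g(a_n))$, a termwise argument forces $p_Y(f(a_n),f(a_n))=p_Y(f(a_n),g(a_n))=p_Y(g(a_n),g(a_n))$ for every $n$; hence $f(a_n)=g(a_n)$ by P2 for $p_Y$, and Scott-continuity of $f,g$ together with the identity $f(x)=\bigvee\{f(a_n)\mid a_n\ll x\}$ extends this to $f=g$.

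A completely analogous termwise analysis shows that $\leq_{p^\theta_{X\To Y}}$ is the pointwise order, which, in $X\To Y$ regarded as a Scott domain, is the Scott order. To invoke Proposition~\ref{prop:continuous} and conclude $\C O_{p^\theta_{X\To Y}}\subseteq \C O_\sigma$, I would then check that $p^\theta_{X\To Y}$ is Scott-continuous as a map $(X\To Y)^2\to([0,+\infty],\geq)$: each evaluation $f\mapsto f(a_n)$ is Scott-continuous by definition of $X\To Y$, $p_Y$ is Scott-continuous by Proposition~\ref{prop:continuous} applied to $Y$, hence each composite $(f,g)\mapsto \theta^n p_Y(f(a_n),g(a_n))$ is Scott-continuous; the uniform bound $\theta^n$ lets one commute the sum with directed suprema by monotone convergence, so the whole series is Scott-continuous.

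For the reverse inclusion $\C O_\sigma\subseteq \C O_{p^\theta_{X\To Y}}$, I would exploit the explicit form of a basis of $X\To Y$: since $X$ and $Y$ are Scott domains, the compact elements of $X\To Y$ are the pointwise-bounded finite joins of step functions $(a\searrow b)$ with $a\in X$ and $b\in Y$ compact, and these generate the Scott topology via sets $\twoheaduparrow h$. Fix such a compact $h=\bigvee_{i=1}^{k}(a_{n_i}\searrow b_i)$ and $g$ with $h\ll g$; by compactness $h\ll g$ is equivalent to $h\leq g$, that is, $b_i\leq g(a_{n_i})$ for each $i$, which (since $b_i$ is compact) means $b_i\ll g(a_{n_i})$. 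Because $p_Y$ quantifies $Y$, for each $i$ there exists $\eta_i>0$ with $B^{p_Y}_{\eta_i}(g(a_{n_i}))\subseteq \twoheaduparrow b_i$. Setting $\epsilon:=\min_i\theta^{n_i}\eta_i$, any $f\in B^{p^\theta_{X\To Y}}_{\epsilon}(g)$ satisfies, for each $i$,
\[
\theta^{n_i}\bigl[p_Y(f(a_{n_i}),g(a_{n_i}))-p_Y(g(a_{n_i}),g(a_{n_i}))\bigr]\;\leq\; p^\theta_{X\To Y}(f,g)-p^\theta_{X\To Y}(g,g)\;<\;\theta^{n_i}\eta_i,
\]
so $f(a_{n_i})\in \twoheaduparrow b_i$, whence $b_i\leq f(a_{n_i})$ and $h\leq f$; compactness of $h$ upgrades this to $h\ll f$, i.e.\ $B^{p^\theta_{X\To Y}}_{\epsilon}(g)\subseteq \twoheaduparrow h$.

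The main obstacle is this last direction: it requires the specific step-function basis of the function space so that ``$h\ll g$'' can be translated into finitely many way-below conditions $b_i\ll g(a_{n_i})$ in $Y$; once that is in hand, the weights $\theta^{n_i}$ give exactly the right bookkeeping to transfer $p_Y$-closeness at each chosen $a_{n_i}$ into $p^\theta_{X\To Y}$-closeness globally.
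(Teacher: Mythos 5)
Your proof is correct, but it reaches the two topological inclusions by routes that differ from the paper's in both directions, and the comparison is instructive. For $\C O_{p^\theta}\subseteq\C O_\sigma$ you establish Scott-continuity of $p^\theta_{X\To Y}$ (termwise continuity plus the uniform tail bound $\sum_{n>N}\theta^n\to 0$ to exchange the series with directed infima) and then invoke Proposition \ref{prop:continuous}; the paper instead verifies finite accessibility of open balls directly, by exhibiting inside each ball $B_\epsilon(f)$ an approximant $h=\bigvee_{i\leq N}(\twoheaduparrow c_i\searrow b_i)\ll g$ built from $b_i\ll g(a_i)$ and $c_i\ll a_i$. These are equivalent by Proposition \ref{prop:continuous} itself, and your version is arguably the cleaner reuse of machinery already in the paper. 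For $\C O_\sigma\subseteq\C O_{p^\theta}$ the divergence is more substantive: you exploit algebraicity of Scott domains, so that the compact elements of $X\To Y$ are bounded finite joins of step functions with compact data and $h\ll g$ collapses to finitely many conditions $b_i\ll g(a_{n_i})$ in $Y$; the paper instead invokes the general characterization of the way-below relation on function spaces of \emph{continuous} domains (Lemma \ref{lemma:ll}, with the extra condition $\twoheaduparrow a_i\ll g^{-1}(\twoheaduparrow b_i)$). Your argument is more elementary but leans on algebraicity, which is indeed part of the paper's definition of Scott domain; the paper's argument via Lemma \ref{lemma:ll} would survive in merely continuous (non-algebraic) domains. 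Two small points you should make explicit: your reverse inclusion needs every compact element of $X$ to occur in the enumeration $(a_n)$, which holds because any basis of an algebraic domain contains all compact elements (the paper's proof has the same implicit reliance); and the interchange of sum and directed infimum is really a dominated-convergence argument for decreasing nets (the domination by the summable $\theta^n$ is essential, ``monotone convergence'' alone would not justify it), though you do cite the uniform bound, so the substance is there. You also verify the partial-metric axioms P1--P4, which the paper's proof omits; that is a welcome addition rather than a discrepancy.
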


In order to prove the result above, we first need to recall a few facts about the domain of Scott-continuous functions. Given  Scott domains $X,Y$, with countable bases $B(X), B(Y)$, the Scott domain 
$\C C(X,Y)$ admits a countable basis formed by all functions of the form $(\twoheaduparrow a \searrow b)$, where $a\in  B(X), b\in B(Y)$, and 
$
(\twoheaduparrow a\searrow b)(x)= b$ in case $a\ll x$, while $
(\twoheaduparrow a\searrow b)(x)= \bot$ otherwise.

Importantly, while $f \ll g$ implies $f(x)\ll g(x)$ for all $x\in X$, the converse does \emph{not} hold.
Rather, the way below relation can be characterized as follows.
\begin{lemma}[cf.~\cite{Escardo1998}]\label{lemma:ll}
For all $f,g\in \C C(X,Y)$, 
$f\ll g$ iff there exists basis elements $a_{1},\dots, a_{n}\in B(X)$ and $b_{1},\dots, b_{n}\in B(Y)$ such that 
\begin{itemize}
\item $b_i\ll g(a_i)$,
\item $\twoheaduparrow a_{i}\ll g^{-1}(\twoheaduparrow b_{i})$\footnote{
Recall that $\C O(X)$ is a continuous domain. For two open sets $U,V\in \C O(X)$, $U\ll V$ holds when any open cover of $V$ has a finite subset which covers $U$. 
}, for all $i=1,\dots,n$,
\item $f \leq \bigvee_{i=1}^{n}(\twoheaduparrow a_{i}\searrow b_{i})$.
\end{itemize}
\end{lemma}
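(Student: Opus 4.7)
The plan is to isolate the content into a sub-lemma about single step functions: $s = (\twoheaduparrow a \searrow b) \ll g$ in $\mathcal{C}(X,Y)$ if and only if $b \ll g(a)$ and $\twoheaduparrow a \ll g^{-1}(\twoheaduparrow b)$ in the continuous lattice $\mathcal{O}(X)$. Granted this sub-lemma together with the standard facts that (i) $f \leq h \ll g$ entails $f \ll g$ and (ii) finite suprema of elements way below $g$ that are bounded above by $g$ remain way below $g$ (proved by a routine directedness argument in $\mathcal{C}(X,Y)$), both directions of the main lemma follow easily. For the direction ($\Leftarrow$), I would apply the sub-lemma to each index $i$ to obtain $(\twoheaduparrow a_i \searrow b_i) \ll g$; since $b_i \ll g(a_i)$ plus monotonicity of $g$ yield $(\twoheaduparrow a_i \searrow b_i) \leq g$, fact (ii) delivers $\bigvee_i (\twoheaduparrow a_i \searrow b_i) \ll g$, and fact (i) closes the argument. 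For the direction ($\Rightarrow$), I would use that $\mathcal{C}(X,Y)$ is a continuous dcpo whose basis consists of finite suprema of step functions, so $g$ is the directed supremum of basis elements below it; $f \ll g$ therefore yields a basis element $h = \bigvee_{i=1}^n s_i$ with $f \leq h \ll g$, and since each $s_i \leq h \ll g$ entails $s_i \ll g$, the sub-lemma produces the desired $a_i, b_i$.

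The easy half of the sub-lemma is the forward one. Given a directed $\mathcal{F} \subseteq \mathcal{C}(X,Y)$ with $g \leq \bigvee \mathcal{F}$, my plan is to form the directed family of Scott-open sets $U_\phi := \phi^{-1}(\twoheaduparrow b)$ for $\phi \in \mathcal{F}$ (directedness inherited from $\mathcal{F}$ by monotonicity of the preimage). Interpolation in $Y$ together with the pointwise formula $(\bigvee \mathcal{F})(x) = \bigvee_\phi \phi(x)$ shows that $\{U_\phi\}$ covers $g^{-1}(\twoheaduparrow b)$: if $b \ll g(x)$, pick $b'$ with $b \ll b' \ll g(x) \leq \bigvee_\phi \phi(x)$, find $\phi$ with $b' \leq \phi(x)$, whence $b \ll \phi(x)$. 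The first hypothesis $b \ll g(a)$ plus monotonicity of $g$ gives $\twoheaduparrow a \subseteq g^{-1}(\twoheaduparrow b)$, so the second hypothesis $\twoheaduparrow a \ll g^{-1}(\twoheaduparrow b)$ extracts a single $\phi \in \mathcal{F}$ with $\twoheaduparrow a \subseteq U_\phi$, which is precisely $s \leq \phi$.

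The main obstacle is the reverse direction of the sub-lemma, which is the technical heart of the characterization. To extract $b \ll g(a)$ from $s \ll g$, given a directed family $D \subseteq Y$ with $g(a) \leq \bigvee D$, the plan is to build a directed family of continuous functions $(g_d)_{d \in D}$ with $g \leq \bigvee_d g_d$ arranged so that $s \leq g_d$ forces $b \leq d$, using the bounded completeness of $Y$ to form controlled pointwise modifications of $g$ near $a$; then $s \ll g$ selects some $g_d$, yielding $b \leq d$, and hence $b \ll g(a)$. A parallel construction, replacing $D$ by a directed cover $\{V_\alpha\}$ of $g^{-1}(\twoheaduparrow b)$ in $\mathcal{O}(X)$ and trimming $g$ to obtain $g_\alpha$ in which $b \ll g_\alpha(x)$ only holds on $V_\alpha$, yields the second condition $\twoheaduparrow a \ll g^{-1}(\twoheaduparrow b)$. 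The delicate issue is ensuring that these modifications remain Scott-continuous and jointly form a directed family whose supremum still dominates $g$; this is where the bounded completeness of the Scott domain $Y$ and the continuous-lattice structure of $\mathcal{O}(X)$ are used decisively, and matches the classical approach of Escard\'o.
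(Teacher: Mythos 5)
Your overall architecture is the classical one (Escard\'o-style), and several pieces are fine: the ($\Leftarrow$) direction of the main lemma, the ``easy half'' of your sub-lemma (conditions imply $(\twoheaduparrow a\searrow b)\ll g$) via the directed cover $U_\phi=\phi^{-1}(\twoheaduparrow b)$ -- which is exactly the paper's argument, and your insertion of an interpolation step $b\ll b'\ll g(x)$ actually tightens a point the paper glosses over. The genuine gap is the step you yourself flag as the ``technical heart'': the reverse direction of the sub-lemma, $(\twoheaduparrow a\searrow b)\ll g\To b\ll g(a)$ and $\twoheaduparrow a\ll g^{-1}(\twoheaduparrow b)$. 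Your entire ($\Rightarrow$) argument for the main lemma funnels through this implication, but what you offer for it is a plan, not a proof: ``controlled pointwise modifications of $g$ near $a$'' and a family $g_\alpha$ ``in which $b\ll g_\alpha(x)$ only holds on $V_\alpha$'' are precisely the objects whose Scott-continuity, directedness, and domination of $g$ constitute all the difficulty; nothing in the proposal shows they can be built. In addition, your ($\Rightarrow$) direction invokes as a black box that $\C C(X,Y)$ is a continuous dcpo whose basis consists of finite suprema of step functions; the standard proof of that fact is essentially the content of the present lemma, so leaning on it risks circularity.

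It is worth seeing how the paper sidesteps the hard implication entirely. It only proves the forward half of your sub-lemma, then lets $U$ be the set of step functions $\twoheaduparrow a\searrow b$ already satisfying the stated conditions and $U^*$ the (directed, by bounded completeness) set of their finite joins, and shows directly that $g=\bigvee U^*$ by computing $(\bigvee U)(a)=\bigvee\{b\mid b\ll g(a)\}=g(a)$. Then $f\ll g\leq\bigvee U^*$ hands you an element of $U^*$ above $f$, and that element carries the required data $a_i,b_i$ \emph{by construction} -- no converse characterization of when a single step function is way below $g$ is ever needed, and no external continuity result for $\C C(X,Y)$ is presupposed. If you want to keep your decomposition, you must actually carry out the trimming constructions for both halves of the reverse sub-lemma; otherwise, replacing your step (b$'$)--(d$'$) by the paper's ``$g=\bigvee U^*$'' argument closes the gap with much less machinery.
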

\longversion{
\begin{proof}
We only prove the only if direction, as the other one is easily checked.
Suppose $\twoheaduparrow a \ll g^{-1}(\twoheaduparrow b)$; we will show that 
$\twoheaduparrow a \searrow b \ \ll \ g$.  
Let $H$ be a directed subset of $\C C(X,Y)$ such that $g \leq \bigvee H$. For every $x\in g^{-1}(\twoheaduparrow b)$, we have that $b \ll g(x) \leq \bigvee_{h\in H}h(x)$. Hence there is some $h_{x}\in H$ with $b\ll h_{x}(x)$. Since $x\in h_{x}^{-1}(\twoheaduparrow b)$ and $x$ is arbitrary, we have that $g^{-1}(\twoheaduparrow b)\subseteq \bigcup_{h\in H}h^{-1}(\twoheaduparrow b)$. Since $\twoheaduparrow a\ll g^{-1}(\twoheaduparrow b)$, we conclude that $\twoheaduparrow a\subseteq \bigcup_{h\in I} h^{-1}(\twoheaduparrow b)$, for some finite subset $I\subseteq_{\mathrm{fin}} H$. 
Since $H$ is directed, there exists $h^*\in H$ such that for all $h\in I$, $h\leq h^*$. 
Notice that, if $h\leq h'$, then $h^{-1}(\twoheaduparrow b)\subseteq (h')^{-1}(\twoheaduparrow b)$: if $h(x)\gg b$, then $h'(x)\geq h(x)\gg b$. We deduce then that $\twoheaduparrow a\subseteq \bigcup_{h\in I} h^{-1}(\twoheaduparrow b)\subseteq (h^*)^{-1}(\twoheaduparrow b)$.

Now, for any $x\in X$, if $x\gg  a$, then $(\twoheaduparrow a \searrow b)(x)=b \leq h^*(x)$. Otherwise, $(\twoheaduparrow a\searrow b)(x)=\bot \leq h^*(x)$.
Hence we conclude that $\twoheaduparrow a \searrow b \leq h^*$, and thus that 
$\twoheaduparrow a \searrow b \ \ll \ g$.  

Let $U=\{  \twoheaduparrow a\searrow b \mid \ 
\twoheaduparrow a \ \ll \ g^{-1}(\twoheaduparrow b) \}$ and $U^{*}=\{f_{1}\vee \dots \vee f_{n}\mid n\in \BB N, f_{1},\dots, f_{n}\in U\}$. 
We will show that $g = \bigvee U$: on the one hand, we now that for any $h\in U$, $h\ll g$;
conversely, we have that
\begin{equation}\label{eq:iff}
b \ll g(a)\ \Leftrightarrow \ a \in g^{-1}( \twoheaduparrow b) \ \Leftrightarrow \  \exists a'\in B(X) \ \text{ s.t. }\ 
 a\in \twoheaduparrow a' \ \ll g^{-1}(\twoheaduparrow b) \text{ and } b\ll g(a').
\end{equation}
The last equivalence is proved as follows: if $a \in g^{-1}( \twoheaduparrow b)$, then $U=g^{-1}(\twoheaduparrow b)$ is an open set containing $a$, and thus there exists $a' \ll a $ (which we can suppose to be a basis element) such that $a'\in U$, whence $a\in \twoheaduparrow a' \subseteq U$ and $b\ll g(a')$. The converse direction is immediate.
 Now we have that 
 \begin{align*}
( \bigvee U)(a)& = \bigvee \{ b \mid \exists a'. \  a\in \  \twoheaduparrow a' \ \ll \ g^{-1}(\twoheaduparrow b)\text{ and }b\ll g(a') \} = \bigvee\{ b \mid  b  \ll  g(a) \} =g(a),
 \end{align*} 
 from which we deduce $g= \bigvee U$.
 
Since for any $h\in U$, $h\leq g$, we also have that for any $h\in U^{*}$, $h\leq g$, and from $g= \bigvee U$ we can also deduce $g= \bigvee U^{*}$.
Now, since $U^{*}$ is directed, from $f\ll g$ and $g\leq \bigvee U^{*}$, we deduce that there exists $h_1,\dots, h_n\in U^*$ such that $f\leq \bigvee_i h_i$, which concludes our proof.
\end{proof}
}

We now have all ingredients to prove Proposition \ref{th:exponential}
\begin{proof}[Proof of Proposition \ref{th:exponential}]
	Let us first show that the distance $p^\lambda_{X\To Y}$ quantifies the order relation of $X \To Y$.
	\begin{itemize}
		\item If $f\leq g$, then for all $a\in X$, $f(a)\leq g(a)$, whence $p_{Y}(f(a),g(a))\leq p_{Y}(f(a),f(a))$.
		This implies then 
			$p^\lambda_{X\To Y}(f,g)=\sum_{i=1}^{\infty} \lambda^np_{Y}(f(a_{i}),g(a_{i})) 
			 \leq \sum_{i=1}^{\infty}\lambda^np_{Y}(f(a_{i}),f(a_{i})) 
			= p^\lambda_{X\To Y}(f,f)$.
		
		\item Suppose $p^\lambda_{X\To Y}(f,g)\leq p^\lambda_{X\To Y}(f,f)$;
		if $f(a) > g(a)$ holds for some $a\in X$, then since $f$ is continuous $f(a_{i})>g(a_{i})$ must hold for some element $a_{i}$ of the basis $\C B(X)$. Now, since for all $a\in X$, we have $p_{Y}(f(a),g(a))\geq p_{Y}(f(a),f(a))$, we deduce 
		$p^\lambda_{X\To Y}(f,g)=\sum_{i=1}\lambda^n p_{Y}(f(a),g(a))
		> \sum_{i=1}\lambda^n p_{Y}(f(a),f(a))=p^\lambda_{X\To Y}(f,f),$
		against the hypothesis. We conclude that $f\leq g$.

	\end{itemize}
	
	Let us now show that the distance quantifies the topology. 
	
	\begin{description}
		\item[$\C O_{\sigma}(X\To Y)\supseteq \C O_{p_{X\To Y}}(X\To Y)$:]
		
		We have to show that open p-balls are Scott-open.
		
		\begin{description}
			\item[Upper] Always true.

			\item[Finite Access]
			
			We now show that for all $g\in B_{\epsilon}(f)$ we can find some $h\ll g$ such that $h\in B_{\epsilon}(f)$. Let then $g\in B_{\epsilon}(f)$, so that $p^\lambda_{X\To Y}(f,g)<p^\lambda_{X\To Y}(f,f) +\epsilon$.
			Observe that this implies that we can find positive reals $\theta, \delta>0$ such that $\theta+\delta\leq\epsilon$ and 
			$p^\lambda_{X\To Y}(f,g)< p^\lambda_{X\To Y}(f,f) +\delta$.
			Let $N$ be such that $\sum_{n > N}^\infty \lambda^n \leq \frac{\theta}{2}$.
			For all $n\leq N$, fix some $b_{n}\in B_{\frac{\theta}{2} }(g(a_{n}))$ such that $b_{n}\ll g(a_{n})$, and some basis element $c_{n}\ll a_{n}$. 

			Let now $h= \bigvee_{i=1}^{N}(\twoheaduparrow c_{i} \searrow b_{i})$. 
			From $b_{i}\ll g(a_{i})$ it follows that $a_{i}\in g^{-1}(\twoheaduparrow b_{i})$, and thus that 
			$\twoheaduparrow c_{i}\ll  g^{-1}(\twoheaduparrow b_{i})$. 
			By Lemma \ref{lemma:ll} this implies that $h \ll g$. 
			
			Let us show that $h\in B_{\epsilon}(f)$. For all $n< N$, we have 
			$
			p_{Y}(h(a_{n}),g(a_{n}))\leq p_{Y}(b_{n},g(a_{n})) < p_{Y}(g(a_{n}),g(a_{n}))+\frac{\theta}{2}
			$, 
			whence 
			$p_{Y}(h(a_{n}),g(a_{n}))-p_{Y}(g(a_{n}),g(a_{n}))<\frac{\theta}{2} \qquad (n\leq N).$
			The following computation show then that $h\in B_{\epsilon}(f)$:
			{\footnotesize
				\begin{align*}
					p^\lambda_{X\To Y}(h,f) &= \sum_{n=1}^{\infty}\lambda^n p_{Y}(h(a_{n}),f(a_{n})) \\
					&\leq  \sum_{n=1}^{\infty}\lambda^n \Big (
					p_{Y}(h(a_{n}),g(a_{n}))
					+
					p_{Y}(g(a_{n}),f(a_{n}))
					-
					p_{Y}(g(a_{n}),g(a_{n}))
					\Big )\\
					&\leq 
					\sum_{n=1}^{N}\lambda^n \Big (p_{Y}(h(a_{n}),g(a_{n}))-
					p_{Y}(g(a_{n}),g(a_{n}))\Big) \\
					& \qquad\qquad +
					\sum_{n>N}^{\infty}\lambda^n p_{Y}(h(a_{n}),g(a_{n}))
					+
					p^\lambda_{X\To Y}(g,f)\\
					&< 
					\sum_{n=1}^{N}\lambda^n\frac{\theta}{2}
					+ \sum_{n>N}^\infty\lambda^n +p^\lambda_{X\To Y}(f,f)+ \delta\\
					&\leq 
					\frac{\theta}{2}
					+ \frac{\theta}{2} +p^\lambda_{X\To Y}(f,f)+ \delta
					\leq
					p^\lambda_{X\To Y}(f,f)+\epsilon.
				\end{align*}
			}

		\end{description}
		
		\item[$\C O_{\sigma}(X\To Y)\subseteq \C O_{p_{X\To Y}}(X\To Y)$:]

		It suffices to show that the basic Scott open sets $\twoheaduparrow f$ contain an open p-ball $B_\epsilon(g)$ around any of its points $g\in\ \twoheaduparrow f$. So, suppose $f \ll g$:
		by Lemma \ref{lemma:ll} there exists $c_{1},\dots, c_{n}\in X$, $b_{1},\dots,b_{n}\in Y$ such that 
		$b_i\ll g(c_i)$, $\twoheaduparrow c_{i} \ll g^{-1}(\twoheaduparrow b_{i})$ and $f\leq \bigvee_{i}(\twoheaduparrow c_{i}\searrow b_{i})$. 
		From $b_{i}\ll g(c_{i})$ it follows that there exists $\epsilon_{i}>0$ such that
		$B_{\epsilon_{i}}(g(c_{i}))\subseteq \  \twoheaduparrow b_{i}$. 
		Let $N$ be such that for all $i=1,\dots, n$, $c_{i}$ has an index $\leq N$ in the enumeration $a_n$ of $\C B(X)$. Let $\epsilon= \lambda^{N}\min\{\epsilon_{1},\dots, \epsilon_{n}\}$. 
		
		We claim that $B_{\epsilon}(g)\subseteq \ \twoheaduparrow f$: let $h\in B_{\epsilon}(g)$, 
		then, from 
		$\sum_{n}\lambda^n \big(p_{Y}(h(a_{n}),g(a_{n}))- p_{Y}(g(a_{n}), g(a_{n}))\big )< 
		\epsilon$, we deduce, for $i=1,\dots,n$, $
			p_{Y}(h(c_{i}),g(c_{i})) < p_{Y}(g(c_{i}),g(c_{i}))+\lambda^{-i}\epsilon
			\leq p_{Y}(g(c_{i}),g(c_{i}))+\epsilon_{i}$, that is, $h(c_{i})\in B_{\epsilon_{i}}(g(c_{i}))$.
		We deduce that $b_{i} \ll h(c_{i})$, and thus that $f\leq \bigvee_i(\twoheaduparrow c_{i} \searrow b_{i} ) \ll h$.
		We can thus conclude that $f \ll h$.
	\end{description}
\end{proof}

The distances $p^\theta_{X\To Y}(f,g)$ are defined by infinite series, which are convergent by our assumption that $p_X,p_Y$ are bounded by 1. However, for any $\epsilon>0$, the verification that 
$p^\theta_{X\To Y}(f,g)<\epsilon$ can be reduced to a finitary test:
\begin{restatable}{lemma}{finite}\label{lemma:finite1}
For all continuous functions $f,g:X\to Y$, for all $n>0$, there exists $N\in \mathcal O(n)$ such that, if 
$p_Y(f(a_i),g(a_i))<2^{-(n+1)}$ holds for all $i=1,\dots,N$, then 
$p^\theta_{X\To Y}(f,g)<2^{-n}$.
\end{restatable}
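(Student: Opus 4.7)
The plan is to exploit the rapid decay of the geometric weights $\theta^n$ (with $\theta\leq 1/2$) to split the defining series of $p^\theta_{X\To Y}(f,g)$ into a finite head and an infinite tail, and then bound each piece by $2^{-(n+1)}$, yielding the desired $2^{-n}$ bound by the triangular-style sum.

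More precisely, writing
\[
p^\theta_{X\To Y}(f,g) \;=\; \sum_{i=1}^{N}\theta^i\, p_Y(f(a_i),g(a_i)) \;+\; \sum_{i>N}\theta^i\, p_Y(f(a_i),g(a_i)),
\]
I would first bound the head: assuming $p_Y(f(a_i),g(a_i))<2^{-(n+1)}$ for all $i\leq N$, the head is at most $\bigl(\sum_{i=1}^\infty\theta^i\bigr)\cdot 2^{-(n+1)} = \tfrac{\theta}{1-\theta}\cdot 2^{-(n+1)} \leq 2^{-(n+1)}$, where the last inequality uses $\theta\leq \tfrac12$. Since $p_Y$ is bounded by $1$ (by our standing normalization assumption), the tail is in turn bounded by $\sum_{i>N}\theta^i = \tfrac{\theta^{N+1}}{1-\theta}$.

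The only choice to make is $N$, which must be large enough that $\tfrac{\theta^{N+1}}{1-\theta}\leq 2^{-(n+1)}$. Taking logarithms, this amounts to $(N+1)\log(1/\theta)\geq (n+1)\log 2 + \log\tfrac{1}{1-\theta}$, and since $\log(1/\theta)\geq \log 2>0$, it suffices to take
\[
N \;=\; \Bigl\lceil (n+1) + \tfrac{\log(1/(1-\theta))}{\log 2}\Bigr\rceil,
\]
which is visibly in $\mathcal O(n)$ (with a constant depending on $\theta$). Summing the two $2^{-(n+1)}$ bounds gives $p^\theta_{X\To Y}(f,g)<2^{-n}$, as required.

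I do not anticipate any real obstacle: the argument is a straightforward geometric-tail estimate, and the hypothesis $\theta\leq \tfrac12$ built into Proposition \ref{th:exponential} is precisely what makes the constants work out cleanly. The only minor care is to remember that $p_Y$ is bounded by $1$ (so the tail estimate is valid without any hypothesis on $f,g$) and that $\theta<1$ keeps the geometric series finite.
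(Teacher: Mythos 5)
Your proof is correct and follows essentially the same route as the paper's: split the series at an index $N\in\mathcal O(n)$, bound the head by $2^{-(n+1)}$ using the hypothesis together with $\sum_i\theta^i\le 1$ (equivalently $\tfrac{\theta}{1-\theta}\le 1$ for $\theta\le\tfrac12$), and bound the tail by $2^{-(n+1)}$ using $p_Y\le 1$ and the geometric decay. Your explicit choice $N=\bigl\lceil (n+1)+\log(1/(1-\theta))/\log 2\bigr\rceil$ is in fact cleaner than the paper's somewhat garbled computation of the threshold, and you are more careful about the strictness of the final inequality.
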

\begin{proof}
Let $\epsilon=2^{-n}$. We must choose $N$ so that $\sum_{i>N}^\infty\lambda^i <\frac{\epsilon}{2}$. 
Since $\sum_{n=1}^\infty \theta^n\leq 1$, this corresponds to requiring 
$\sum_{n=1}^N \theta^n  > 1-\frac{\epsilon}{2}$, or, equivalently, $\frac{1-\theta^{N+1}}{1-\theta}- 1 > 1-\frac{\epsilon}{2}.$ 
A few computations yield then the condition
$
N +1 > \log\left( 3\theta +\theta^2\epsilon+\epsilon \right)\in \C O(\log \epsilon)
$. 

Let us show that under this condition the claim holds.
Suppose $p^\theta(f(a_i),g(a_i))\leq \frac{\epsilon}{2}$ holds for $i=1,\dots,N$. Then we have
\begin{align*}
p^\theta(f,g)&= \sum_{k=1}^\infty\theta^n q(f(a_k), g(a_k))\\
&=\left( \sum_{k=1}^N\theta^n q(f(a_k), g(a_k))\right )+
\left( \sum_{k>N}^\infty\theta^n q(f(a_k), g(a_k))\right )\\
&\leq\left( \sum_{k=1}^N\theta^n \frac{\epsilon}{2}\right )+
\frac{\epsilon}{2}\leq \epsilon.
\end{align*}
%
%
\end{proof}

The intuition behind the test above is that for $N$ high enough, the infinite sum $\sum_{n\geq N}^\infty\theta^n$ gets too small to actually matter in checking that $p^\theta_{X\To Y}(f,g)$ is smaller than $\epsilon$, and one is thus reduced to the \emph{finite} sum $\sum_{n=1}^N\theta^np_{Y}(f(a_{n}),g(a_{n}))$.
This is indeed a key ingredient in showing that open balls of functions are finitely accessible, and in particular, that if $g\in B_\epsilon(f)$, this only depends on {finitely} many values of $g$.

Conversely, from $p^\theta_{X\To Y}(f,g)<\epsilon$, one can deduce bounds $p_Y(f(a_{n}),g(a_{n}))<\theta^{-n}\epsilon$ for \emph{all} $n\in \BB N$, although such bounds become more and more loose as $n$ increases, due to the exponential scaling factor $\theta^{-n}$.

%

\begin{example}[RealPCF]\label{ex:realpcf}
The language RealPCF \cite{Escardo1996} is an extension of PCF with a type $\B I$ for \emph{partial real numbers} (i.e.~finite approximations of real numbers or, equivalently, computable closed intervals) and primitives for computable analysis, with a canonical Scott semantics in which $\B I$ is interpreted via the domain $\intv$. This is perfectly in line with the quantification of $\intv$ we presented in Example \ref{interval}, which sees smaller and smaller intervals as providing more and more information. Via the applicative distances just presented, we obtain then a quantification of the Scott semantics of full RealPCF.

%
%

\end{example}

\begin{example}[Scott topologies of open and closed sets]
Given a topological space $X$, one can endow the space $\mathcal O(X)$ of its open sets with the Scott topology induced by the inclusion order, as well as the (homeomorphic) space $\mathcal C(X)$ of its closed sets under the Scott topology induced by the reversed inclusion order. 

Whenever $X$ is exponentiable in $\mathrm{Top}$ (which is the case, in particular, whenever $X$ is a Scott domain), the bijection $h:\mathcal O(X)\simeq \mathrm{Top}(X,S)$, where $S$ is the Sierpinski space and $h(U)$ is the characteristic function of $U$, is a homeomorphism \cite{Escardo2001}.
Given a countable basis $(x_n)_n$ of $X$, and weights $\theta_n$ with $\sum_n\theta_n\leq 1$, 
we can then quantify $\mathcal O(X)$ and $\mathcal C(X)$ via
\begin{align*}
p^{\mathcal O}_{x_n, \theta_n}(U,V)&=\sum_{n=1}^\infty \theta_n\cdot  s(h(U)(x_n),h(V)(x_n))=
\sum\left\{ \theta_n  \  \vert \
x_n\notin U \lor x_n\notin V
\right\},\\
p^{\mathcal C}_{x_n, \theta_n}(C,D)&=p^{\mathcal O}(\overline C,\overline D)=
\sum\left\{ \theta_n  \  \vert \
x_n\in C \lor x_n\in D
\right\}.
\end{align*}
\end{example}

\begin{example}[B\"ohm trees as closed sets]\label{ex:altbohm}
Consider the poset $\C A$ of partial terms. Let $\mathrm{Ide}(\C A)$ be the dcpo of \emph{ideals} of $\C A$, that is, of lower directed subsets of $\C A$. 
Observe that any B\"ohm tree $\C B(M)\subseteq \C A$ is an element of $\mathrm{Ide}(\C A)$, and the set 
$\downarrow \C B(M)=\{U \mid U\subseteq \C B(M)\}\subseteq \mathrm{Ide}(\C A)$ is a closed set under the Scott topology of $\mathrm{Ide}(\C A)$.
%
%
Given an enumeration $A_n$ of partial terms and weights $\theta_n$, we can then define an alternative $\lambda$-PPM by letting $p^{\mathcal B}_{A_n, \theta_n}(M,N)= p^{\C C}_{\downarrow A_n, \theta_n}(\downarrow\C B(M), \downarrow \C B(N))=
\sum_n \{\theta_n\  \vert \  A_n\not \leq M \ \text{or} \ A_n \not \leq N  \}$. While they produce different distances, $p^{\mathcal B}_{A_n, \theta_n}$ and $p_{\mathrm{B\"ohm}}$ quantify the same topology, i.e.~$p_{\mathrm{B\"ohm}} \sqsupseteq \sqsubseteq p^{\mathcal B}$.

\end{example}

\begin{example}[Scott topology of the power set]\label{ex:altpower}
A countable set $X$ is (trivially) a domain for the order given by equality, and its Scott topology coincides with the indiscrete topology, i.e.~$\mathcal O(X)=\mathcal P(X)$. Given an enumeration $x_n$ of $X$, the Scott topology on $\mathcal P(X)$ is thus quantified by 
$
p^{\mathcal P}_{x_n, \theta_n}(A,B)=\sum\left\{\theta_n \  \vert \
x_n\notin A \lor x_n\notin B
\right\},
$
for $A,B\subseteq X$.
\end{example}

\section{Quantifying a Reflexive Object}\label{sec5}

The denotational models for the untyped $\lambda$-calculus correspond to the \emph{reflexive objects} within some cartesian closed category, that is, the objects $X$ satisfying the isomorphism $X\simeq X\to X$.
Within cpo-enriched categories, reflexive objects can be obtained by a direct limit construction, whose paradigmatic example is Scott's $D_\infty$ model.
In this section we show how to quantify this model via applicative distances, at the same time illustrating a technique that could be adapted to other similar constructions, like e.g.~the reflexive object within the relational model \cite{Manzo2007}.

\subparagraph{Quantifying Scott's $D_\infty$}

Let us recall the idea of the direct limit construction of a reflexive object.  
\longversion{A more detailed recall of the construction of the model $D_\infty$ is in the Appendix.}  
 One starts from some bounded complete domain $D$, and constructs a sequence of spaces $D_0:=D$, $D_{n+1}:D_n\to D_n$, together with maps $i_n:D_n\to D_{n+1}$ and $j_n:D_{n+1}\to D_n$ forming a pair $(i_n,j_n)$ called an \emph{injection/retraction pair}, that is, satisfying $j_n\circ i_n=\mathrm{id}_{D_{n+1}}$ and $i_n\circ j_n\leq \mathrm{id}_{D_n}$. 
One obtains then a reflexive object $D_\infty\simeq D_\infty\to D_\infty$ as the direct limit of the sequence $D_n \stackrel{i_n}{\to}D_{n+1}$, as well as injection-retraction pairs $i_{n\infty}:D_n\to D_\infty$, $j_{\infty n}:D_\infty\to D_n$.

Notice that an element $x$ of $D_\infty$ yields, for any $n$, a function $x_n:=j_{\infty n}(x)\in D_n= D_{n-1}\to D_{n-2}\to \dots \to D_0$; conversely, any compact element $x\in D_n$ yields a compact element $i_{n\infty}(x)\in D_\infty$, and such elements form indeed a basis of $D_\infty$.

Suppose now that the starting space $D$ is quantified by some PM $p$. 
Using the applicative metrics from the previous section we can quantify all the $D_n$ by letting $p_0:=p$ and 
$p_{n+1}(x,y)=\sum_{i=1}^\infty\frac{1}{2^i}p_n(x(a^n_i),y(a^n_i))$, where $(a^n_i)_i$ is an enumeration of the basis elements of $D_n$. 
We obtain then a PM quantifying $D_\infty$ by letting 
\begin{equation*}
\begin{split}
p_\infty(x,y)&=\sum_{n=1}^\infty\frac{1}{2^n}p_n(x_n,y_n)\\
&=
\sum_{n,k_{n-1},\dots, k_0=1}^\infty\left(\frac{1}{2^{n+k_{n-1}+\dots +k_0}}\right)\cdot p\left (x_n(a^{n-1}_{k_{n-1}})\dots (a^0_{k_0}),y_n(a^{n-1}_{k_{n-1}})\dots (a^0_{k_0})\right ).
\end{split}
\end{equation*}

Intuitively, the distance $p_\infty$ compares $x$ and $y$ by considering all possible ways of evaluating the functions $x_n,y_n\in D_n$ on $n$ basis elements of the corresponding spaces $D_{n-1},\dots, D_0$. 
As for the applicative metrics from the previous section, while the distances $p_\infty(x,y)$ are defined via infinite series, one can check that $x\in B_\epsilon^{p_\infty}(y)$ by a finitary criterion.

\begin{restatable}{lemma}{finitaryscott}
For all $x,y\in D_\infty$ and $n>0$, there exists $N\in \mathcal O(n)$ such that, if, for all $i, k_0,\dots, k_{i-1}\leq N$, 
$p (x_i a_{i-1}^{k_{i-1}}\dots a_0^{k_0}, y_i a_{i-1}^{k_{i-1}}\dots a_0^{k_0})<2^{-(n+1)}$, then 
$p_{\infty}(x,y)<2^{-n}$.
\end{restatable}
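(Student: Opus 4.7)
The plan is to mimic the strategy of Lemma \ref{lemma:finite1} but now for the doubly-indexed series defining $p_\infty$. The underlying principle is the same: split the series into a \emph{head}, consisting of finitely many terms whose indices are all bounded by $N$, and a \emph{tail}, consisting of everything else, and choose $N$ so that the tail alone has total weight less than $2^{-(n+1)}$.

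\textbf{Step 1: Bound the tail weight.} Assuming $p$ is valued in $[0,1]$ (as permitted by the remark preceding Proposition \ref{th:exponential}), each factor of the form $p(x_i a_{i-1}^{k_{i-1}}\dots a_0^{k_0}, y_i a_{i-1}^{k_{i-1}}\dots a_0^{k_0})$ is at most $1$. So it suffices to bound the sum of the weights $\frac{1}{2^{i+k_{i-1}+\dots+k_0}}$ taken over all tuples $(i;k_0,\dots,k_{i-1})$ in which at least one index exceeds $N$. I will decompose this into two contributions: tuples with $i>N$, and tuples with $i\leq N$ but some $k_j>N$. Using the identity $\sum_{k=1}^\infty 2^{-k}=1$ together with $\sum_{k>N}2^{-k}=2^{-N}$, a direct computation gives, respectively, $\sum_{i>N}2^{-i}=2^{-N}$ for the first contribution and a bound of $\sum_{i=1}^{N} i\cdot 2^{-(i+N)}\leq 2\cdot 2^{-N}$ for the second (via a union bound over the position of the exceeding index). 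Hence the total tail weight is at most $3/2^{N}$, and choosing $N=n+3$ (which is clearly in $\mathcal{O}(n)$) makes this strictly less than $2^{-(n+1)}$.

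\textbf{Step 2: Bound the head.} If the hypothesis holds, every finitely many terms indexed by $(i;k_0,\dots,k_{i-1})$ with all indices $\leq N$ contributes a value smaller than $2^{-(n+1)}$. Since the full collection of weights $\frac{1}{2^{i+k_{i-1}+\dots+k_0}}$ sums to $\sum_i 2^{-i}\prod_{j<i}\sum_{k_j}2^{-k_j}=1$, the restricted sum of head weights is at most $1$, and the head contribution to $p_\infty(x,y)$ is strictly less than $2^{-(n+1)}\cdot 1 = 2^{-(n+1)}$.

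\textbf{Step 3: Combine.} Adding the two bounds yields $p_\infty(x,y) < 2^{-(n+1)} + 2^{-(n+1)} = 2^{-n}$, as required. The only mildly delicate point is the combinatorial bookkeeping for the tail in Step 1, where one must be careful that the nested sum structure of $p_\infty$ still produces a geometric-style tail bound of order $2^{-N}$ despite the factor of $n$ picked up from the union bound over the positions $j<i$; this is what forces $N$ to be linear in $n$ rather than, say, logarithmic, but it still yields $N\in\mathcal{O}(n)$, which suffices.
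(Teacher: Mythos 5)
Your proof is correct and follows essentially the same head/tail decomposition as the paper's, which simply reduces the claim to the argument of Lemma \ref{lemma:finite1} after bounding the tail weight by $\frac{\epsilon}{2}$. You are in fact somewhat more careful than the paper's one-line tail estimate: you correctly treat the tail as the set of index tuples in which \emph{some} index exceeds $N$ (rather than all of them), and your union bound over the position of the offending index is exactly what is needed to make that step airtight while still keeping $N\in\mathcal O(n)$.
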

\begin{proof}
We must find $N$ satisfying $\sum_{n,k_{n-1},\dots, k_0>N}^\infty \frac{1}{2^{n+k_{n-1}+\dots+k_0}}  < \frac{\epsilon}{2}$.
Notice that if $N$ satisfies $\sum_{n>N}^\infty \frac{1}{2^{n}}  < \frac{\epsilon}{2}$, then it also satisfies the other condition, so we can argue as for
Lemma \ref{lemma:finite1}.
\end{proof}

\shortversion{The following result is proved in detail in the long version.}

\begin{restatable}{theorem}{metricscott}\label{th:dscott}
The partial metric $p_\infty$ quantifies the Scott topology of $D_\infty$.
\end{restatable}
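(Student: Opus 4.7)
\textbf{Proof plan for Theorem \ref{th:dscott}.} My strategy is to apply Proposition \ref{prop:continuous} in the direction $(3)\To(1)$ in order to obtain $\C O_{p_\infty}(D_\infty)\subseteq \C O_\sigma(D_\infty)$, and then to prove the reverse inclusion by showing directly that the basic Scott-opens $\twoheaduparrow c$, for compact $c$, are $p_\infty$-open. Before doing either, I note that each $p_n$ quantifies $D_n$: this holds for $n=0$ by assumption on $p$, and inductively for $n+1$ via Proposition \ref{th:exponential} applied to $D_{n+1}=D_n\To D_n$ (with $\theta=\frac12$). In particular each $p_n$ is Scott-continuous and captures the order of $D_n$. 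Since the order of $D_\infty$ is pointwise, i.e.~$x\leq y$ iff $x_n\leq y_n$ for all $n$, and since each summand $p_n(x_n,y_n)-p_n(x_n,x_n)$ is $\geq 0$ by (P1), we have $p_\infty(x,y)\leq p_\infty(x,x)$ iff every summand vanishes iff $x\leq y$. So $\leq_{p_\infty}$ coincides with $\leq$ on $D_\infty$.

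To verify that $p_\infty$ is Scott-continuous, take a directed $\Delta\subseteq D_\infty$ with sup $y$. The $\leq$ direction of $p_\infty(x,y)=\inf_{d\in\Delta}p_\infty(x,d)$ is antimonotonicity. For the $\geq$ direction, given $\epsilon>0$, I first pick $N$ large enough that $\sum_{n>N}2^{-n}<\epsilon/2$, so that the tail of the series is negligible. For each $n\leq N$, since the projection $j_{\infty n}$ is Scott-continuous one has $y_n=\bigvee_{d\in\Delta}d_n$, and Scott-continuity of $p_n$ (guaranteed by Proposition \ref{prop:continuous} since $p_n$ quantifies $D_n$) yields $d^{(n)}\in\Delta$ with $p_n(x_n,(d^{(n)})_n)<p_n(x_n,y_n)+\epsilon/(2(N+1))$. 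Using directedness, pick $d^*\in\Delta$ above all $d^{(n)}$ with $n\leq N$. By antimonotonicity $p_n(x_n,(d^*)_n)\leq p_n(x_n,(d^{(n)})_n)$, and summing across levels while bounding the tail by $\epsilon/2$ gives $p_\infty(x,d^*)<p_\infty(x,y)+\epsilon$, as required. This is the most delicate step: swapping an infimum with an infinite weighted sum, handled by truncating at $N$ and then absorbing the tail, exactly as in Lemma \ref{lemma:finite1}.

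For the remaining inclusion $\C O_\sigma(D_\infty)\subseteq \C O_{p_\infty}(D_\infty)$, it suffices to show each basic Scott-open $\twoheaduparrow c$ is $p_\infty$-open. A compact $c\in D_\infty$ is of the form $i_{n\infty}(c')$ for some compact $c'\in D_n$, and for any $y\in\ \twoheaduparrow c$ the retraction identity $j_{\infty n}\circ i_{n\infty}=\mathrm{id}_{D_n}$ gives $c'\leq y_n$, hence $c'\ll y_n$ by compactness of $c'$. Since $p_n$ quantifies $D_n$, there is $\epsilon_n>0$ with $B^{p_n}_{\epsilon_n}(y_n)\subseteq\ \twoheaduparrow c'$. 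I now claim $B^{p_\infty}_{2^{-n}\epsilon_n}(y)\subseteq\ \twoheaduparrow c$: for $z$ in the ball, isolating the $n$-th term of the defining sum yields $\tfrac{1}{2^n}(p_n(z_n,y_n)-p_n(y_n,y_n))\leq p_\infty(z,y)-p_\infty(y,y)<2^{-n}\epsilon_n$, whence $z_n\in B^{p_n}_{\epsilon_n}(y_n)\subseteq\ \twoheaduparrow c'$, so $c'\leq z_n$ and therefore $c=i_{n\infty}(c')\leq i_{n\infty}(z_n)\leq z$; since $c$ is compact in $D_\infty$, this gives $c\ll z$, i.e.~$z\in\ \twoheaduparrow c$. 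Combining the two containments yields $\C O_{p_\infty}(D_\infty)=\C O_\sigma(D_\infty)$.
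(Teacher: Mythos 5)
Your proof is correct, but it takes a genuinely different route from the paper's in both halves, so it is worth comparing. For $\C O_{p_\infty}(D_\infty)\subseteq\C O_\sigma(D_\infty)$, the paper shows directly that open $p_\infty$-balls are finitely accessible: given $y\in B_\epsilon(x)$ it picks $z_n\ll y_n$ inside small $p_n$-balls for $n\leq N$, forms $y'=\bigvee_{n\leq N}i_{n\infty}(z_n)$ (using the appendix fact that $z_n\ll y_n$ implies $i_{n\infty}(z_n)\ll y$), and checks $y'\in B_\epsilon(x)$ by a P4 computation; you instead verify Scott-continuity of $p_\infty$ and invoke Proposition \ref{prop:continuous}. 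The two are interchangeable by that very proposition, and your truncate-at-$N$-then-absorb-the-tail estimate (choosing $d^{(n)}$ per level and a common upper bound $d^*$ by directedness) is sound. For $\C O_\sigma(D_\infty)\subseteq\C O_{p_\infty}(D_\infty)$, the paper treats an arbitrary pair $x\ll y$ and needs the appendix lemmas (ii) and (iii) to locate a single level $N'$ at which finitely many approximants $i_{NN'}(w_j)$ are simultaneously way below $y_{N'}$; you exploit algebraicity instead: the compacts $i_{n\infty}(c')$ form a basis of $D_\infty$, so the sets $\twoheaduparrow i_{n\infty}(c')$ generate the Scott topology, and a single-level argument with the retraction identity $i_{n\infty}\circ j_{\infty n}\leq\mathrm{id}$ plus compactness of $c$ (to upgrade $c\leq z$ to $c\ll z$) finishes the job. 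This buys a shorter argument that bypasses the appendix lemmas entirely, at the price of relying on the algebraic structure of $D_\infty$; the paper's version would also survive for merely continuous limits. Two points you leave tacit are harmless but worth flagging: that $p_\infty$ is indeed a partial metric whose order is $\leq$ (needed before Proposition \ref{prop:continuous} applies --- your pointwise-order computation supplies the order claim, and P1--P4 for $p_\infty$ follow termwise), and that the elements $i_{n\infty}(c')$ for $c'$ compact form a basis of $D_\infty$, which the paper asserts when setting up the construction.
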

\begin{proof}
We will exploit a few properties of the maps $i_{nm}$, proved in the Appendix:
\begin{enumerate}
\item[i.] 	For all $n\in \BB N$, $x\in X_n$ and $y\in X_\infty$, 
	$x \ll y_n$ $\To$ $i_{n\infty}(x)\ll y$.
	
\item[ii.] 	For all $x,y\in X_\infty$, $x\ll y$ iff there exists $N\in \BB N$ and 
	$w_1,\dots,w_k\in X_{N}$ such that
	$
	w_1,\dots, w_k \ll y_N
	$
	 and
	$
	x \leq  i_{N\infty}(w_1\vee\dots\vee w_k)
	$.

\item[iii.] 	For all $n\in \BB N$, $x\in X_n$ and $y\in X_\infty$, 
	$i_{n\infty}(x)\ll y$ $\To$ $\exists N\forall k\geq N, i_{n(n+k)}(x)\ll y_{n+k}$.

\end{enumerate} 

	\begin{description}
		\item[$\C O_\sigma(D_\infty)\supseteq\C O_{p_\infty}(D_\infty)$:]

		Let $y\in B_\epsilon(x)$. We need to find $y'\in D_\infty$ such that $y'\in B_\epsilon(x)$ and 
		$y'\ll y$. 
		From $p_{\infty}(y,x)<p_{\infty}(x,x)+\epsilon$ it follows that we can find $\theta, \delta>0$ such that 
		$p_{\infty}(y,x)<p_{\infty}(x,x)+\delta$ and $\delta+\theta\leq \epsilon$.
		Let $N$ be such that $\sum_{n>N}^\infty\frac{1}{2^n}< \frac{\theta}{2}$. 
		Since the $p_n$-balls are Scott open, for all $n\leq N$, we can find some $z_n\in B_{\frac{\theta}{2}}(y_n)$ 		such that 
		$z_n\ll y_n$.
		Observe that by (i.) we have $i_{n\infty}(z_n)\ll y$.
		This implies in particular that the join $\bigvee_{n=1}^N i_{n\infty}(z_n)$ exists in $D_\infty$.
		Define $y':= \bigvee_{n=1}^N i_{n\infty}(z_n)$. Notice that $y'\ll y$ holds so we just have to check that 
		$y'\in B_\epsilon(x)$.

		First recall that, by antimonotonicity of $p_n$, $p_n(a\vee a', b)\leq \min\{p_n(a,b), p_n(a',b)\}$.
		Now, for all $n\leq N$, we have that $y'_n=j_{\infty n}(y')=\left( \bigvee_{k<N}i_{k,n}(z_k)\right ) \vee z_n \vee
		\left( \bigvee_{n<k\leq N}j_{k,n}(z_k)\right)$. Then we deduce $p_n(y'_n,y_n)\leq p_n(z_n, y_n)< p_n(y_n,y_n)+ \frac{\theta}{2}$.
		We can now compute
		{\footnotesize
		\begin{align*}
			p_\infty(y',x)&= \sum_{n=1}^\infty\frac{1}{2^n}p_n(y'_n, x_n) \\
			&\leq
			\sum_{n=1}^\infty\frac{1}{2^n}\Big( p_n(y'_n, y_n)+p_n(y_n, x_n)-p(y_n,y_n) \Big )\\
			&\leq
			\left(\sum_{n=1}^\infty\frac{1}{2^n} p_n(y'_n, y_n)-p_n(y_n,y_n)  \right) +p_\infty(y,x)\\
			&= \left(\sum_{n=1}^N\frac{1}{2^n} (p_n(y'_n, y_n)-p_n(y_n,y_n))\right) \\ & \qquad\qquad +\left( \sum_{n>N}^\infty\frac{1}{2^n} (p_n(y'_n, y_n)-p_n(y_n,y_n)) \right)+p_\infty(y,x)\\
			&< \left(\sum_{n=1}^N\frac{1}{2^n}\frac{\theta}{2}\right) + \frac{\theta}{2} +p_\infty(x,x) +\delta
			\leq p_\infty(x,x)+\theta+\delta
		\leq p_\infty(x,x)+\epsilon.
		\end{align*}
		}
		
		%
		%

		\item[$\C O_\sigma(X_\infty)\subseteq\C O_{p_\infty}(X_\infty)$]
		
		Suppose $ x\ll y$. We need to find $\epsilon>0$ such that $B_\epsilon(y)\subseteq\ \twoheaduparrow x$.
		By (ii.) there exists $N$ and 
		$w_1, \dots, w_k\in X_{N}$ such that
		$
		x \leq  i_{N\infty}(w_1\vee\dots\vee w_k)\ll y
		$.
		By (iii.) there exists $N'\geq N$ such that
		$
		i_{NN'}(w_j) \ll y_{N'}
		$. 
		Observe that $i_{N'\infty}(i_{NN'}(u))= i_{N\infty}(u)$, which implies that 
		$x \leq \bigvee_j  i_{N'\infty}(i_{NN'}(w_j) )  $.

		For each $j=1,\dots, k$ we can find then $\epsilon_j>0$ such that 
		$B_{\epsilon_j}(y_{N'})\subseteq \ \twoheaduparrow i_{NN'}(w_j)$.
		Let $\epsilon:= 2^{-(N'+1)}\min\{\epsilon_j\mid j=1,\dots, k\}$.
		Suppose $z\in B_\epsilon(y)$: for all $j=1,\dots,k$, from $p_\infty(z,y)\leq \epsilon$ we deduce
		$p_{N'}(z_{N'}, y_{N'})\leq 2^{N'}\epsilon < \epsilon_j$, whence 
		$z_{N'}\in B_{\epsilon_j}(y_{N'})$, which forces
		$i_{NN'}(w_j) \ll z_{N'}$. 
		By (i.) the last inequality implies 
		$i_{N'\infty}(w_j)=i_{N'\infty}(i_{NN'}(w_j) )\ll z$, and we thus obtain
		$x \leq  \bigvee_j  i_{N'\infty}(i_{NN')}(w_j) ) \ll z$, that is, $x\ll z$.
	\end{description}
\end{proof}

\subparagraph{The Scott $\lambda$-PPM }

The interpretation of closed $\lambda$-terms in the Scott model $D_\infty$, for $D$ an arbitrary algebraic domain quantified by a PM $p$, yields a PPM $p_{\mathrm{Scott}}(M,N):=p_\infty(\model M, \model N)$, where $\model M\in D_\infty$ indicates the interpretation of $M$ inside $D_\infty$. When $D$ is non-trivial (i.e.~$D\neq \{\bot\}$), using well-known properties of the Scott model, $p_{\mathrm{Scott}}(M,N)$ yields an extensional and sensible $\lambda$-PPM.

The result below relates
 $p_{\mathrm{Scott}}$ to the other $\lambda$-PPMs discussed in Section 3.

\begin{restatable}{proposition}{metscott} 
$p_{\mathrm{B\"ohm}}\sqsubset p_{\mathrm{Scott}}\sqsubset p_{\mathrm{ctx}}$.
\end{restatable}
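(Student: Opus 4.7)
The plan is to handle the two non-strict inclusions by rather different arguments, and then address strictness separately.

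For $p_{\mathrm{B\"ohm}} \sqsubseteq p_{\mathrm{Scott}}$ I would establish that the interpretation map $M \mapsto \model{M}$ is continuous from $(\Lambda, p_{\mathrm{B\"ohm}})$ to $(D_\infty, p_\infty)$. Given a term $M$ and $\epsilon > 0$, Lemma~\ref{lemma:finitaryscott} reduces the condition $p_\infty(\model{N}, \model{M}) < \epsilon$ to comparing the values of $\model{N}$ and $\model{M}$ on finitely many basis elements of the $D_n$'s at bounded depth. By the standard property of the $D_\infty$ construction that the interpretation depends continuously on the B\"ohm tree, this finite amount of information about $\model{M}$ is already determined by a finite initial fragment of $\mathcal{B}(M)$, and hence is preserved on some ball $B^{p_{\mathrm{B\"ohm}}}_\delta(M)$ by Remark~\ref{rem:compu}. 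For $p_{\mathrm{Scott}} \sqsubseteq p_{\mathrm{ctx}}$ the argument is immediate: $p_{\mathrm{Scott}}$ is a sensible extensional $\lambda$-PPM (as noted in the text, assuming $D$ is non-trivial), so Proposition~\ref{prop:coarsest} yields the inclusion directly from maximality of $p_{\mathrm{ctx}}$.

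Strictness of the first inclusion follows from the equational-theory side: two PPMs quantifying the same topology must induce the same equivalence $\simeq_p$, but $\simeq_{p_{\mathrm{B\"ohm}}} = \mathcal{B}$ is not extensional while $\simeq_{p_{\mathrm{Scott}}}$ is, as witnessed e.g.\ by the pair $\lambda x.x$, $\lambda x.\lambda y.\, xy$, which are identified by every sensible extensional theory but have distinct B\"ohm trees.

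The genuinely hard part is the second strict inclusion $p_{\mathrm{Scott}} \sqsubset p_{\mathrm{ctx}}$: here both PPMs induce the maximum sensible theory $\mathcal{H}^*$ as their equivalence, so the equational argument is unavailable and one must distinguish the two topologies directly. The intuition (and plan) is as follows: by Remark~\ref{rem:ctx} a $p_{\mathrm{ctx}}$-open ball around $M$ constrains $N$ only through finitely many binary solvability tests $\mathtt{C}_i[N]\downarrow$, while a $p_{\mathrm{Scott}}$-open ball additionally constrains the \emph{approximate values} of $\model{N}$ at finitely many compact points of $D_\infty$ — a strictly more informative condition. Concretely, I would exhibit a term $M$ together with a sequence $(N_k)$ of terms that eventually passes the same finite solvability tests as $M$ (so that $N_k$ enters every prescribed $p_{\mathrm{ctx}}$-ball around $M$) but whose interpretations $\model{N_k}$ stay bounded away from $\model{M}$ in $D_\infty$. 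Producing such a sequence — typically built by inserting dummy $\eta$-like expansions or extra $\bot$-labelled sub-approximants that do not affect solvability tests at fixed contexts yet keep $\model{N_k}$ distinct from $\model{M}$ at suitably chosen basis points — is the main technical obstacle, but it suffices to witness that a $p_{\mathrm{Scott}}$-ball around $M$ contains no $p_{\mathrm{ctx}}$-ball.
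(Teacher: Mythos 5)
Your overall decomposition matches the paper's: the non-strict inclusion $p_{\mathrm{B\"ohm}}\sqsubseteq p_{\mathrm{Scott}}$ via the approximation theorem for $D_\infty$ together with the finitary criterion for $p_\infty$-balls, the inclusion $p_{\mathrm{Scott}}\sqsubseteq p_{\mathrm{ctx}}$ by maximality of $p_{\mathrm{ctx}}$ among sensible $\lambda$-PPMs, and strictness of the first inclusion from the mismatch of the induced theories $\C B$ versus $\C H^*$ (your specialization-order phrasing is a clean way to put what the paper argues via a concrete ball around $\lambda x.x$). All of that is fine, if somewhat compressed on the first inclusion.

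The gap is exactly where you flag "the main technical obstacle": the strictness of $p_{\mathrm{Scott}}\sqsubset p_{\mathrm{ctx}}$ is the heart of the proposition, since here the two PPMs induce the same theory $\C H^*$ and must be separated purely topologically, and you do not produce the witness. Worse, both constructions you hint at would fail. Dummy $\eta$-expansions of $M$ have the \emph{same} interpretation as $M$ in the extensional model $D_\infty$, so they lie in every $p_{\mathrm{Scott}}$-ball around $M$ and separate nothing. Terms $N_k$ obtained by replacing deep subterms with $\bot$-like unsolvables form a directed family with $\bigvee_k\model{N_k}=\model M$ by the approximation theorem; since $p_\infty$-balls are Scott open and upper, every $p_{\mathrm{Scott}}$-ball around $M$ eventually contains all the $N_k$, so these cannot stay bounded away from $\model M$ either. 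The witness actually needs terms that are neither converging to $\model M$ from below nor denotationally equal to it: the paper takes $M=I=\lambda x.x$ and $N_k=P_k=\lambda y_1.\dots\lambda y_k.\lambda x.x$. Each $P_k$ absorbs $k$ arguments before returning the identity, so for any \emph{fixed} finite battery of contexts $\TT C_1,\dots,\TT C_m$ with $\TT C_i[I]$ solvable, $\TT C_i[P_k]$ is solvable once $k$ exceeds the number of arguments each $\TT C_i$ feeds to the hole; hence infinitely many $P_k$ enter every $p_{\mathrm{ctx}}$-ball around $I$. On the other hand one exhibits a compact $c_2\ll\model I$ with $c_2\not\ll\model{P_{k+2}}$ for all $k$ (this requires a nontrivial computation with the injection--retraction maps $i_{nm},j_{mn}$ and the step elements $\twoheaduparrow c_n\searrow c_n$), so some $p_{\mathrm{Scott}}$-ball around $I$ inside $\twoheaduparrow c_2$ excludes all $P_{k+2}$ and therefore contains no $p_{\mathrm{ctx}}$-ball. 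Without this (or an equivalent) concrete construction, the strictness claim is unproved.
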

\begin{proof}
\begin{description}
\item[($p_{\mathrm{B\"ohm}}\sqsubset p_{\mathrm{Scott}}$)]
We exploit the \emph{approximation theorem} for $D_\infty$ \cite{Baren95} which says that, for any closed $\lambda$-term $M$, letting $\Lambda_\bot^o$ be the set of closed partial terms and 
 $\model{-}:\Lambda^o_\bot\to D_\infty$ the interpretation function, $\model{M}=\bigvee\{ \model{A}\mid A\leq M\}$.

Consider the open ball $B_\epsilon^{p_{\mathrm{Scott}}}(M)$. Fix $p, m_1,\dots, m_p\in\BB N$ and $\delta_{i,j_1,\dots, j_i}>0$, for $i\leq p$ and $j_\ell\leq m_\ell$ such that 
$\sum_{i> p, j_\ell > m_\ell}^\infty \frac{1}{2^{i +j_1 +\dots+j_i}}\leq\frac{\epsilon}{2}$ and $\sum_{i=1}^K\sum_{j_\ell=1}^{m_\ell} \frac{\delta_{i,j_1,\dots, j_i}}{2^{i +j_1 +\dots+j_i}}<\frac{\epsilon}{2}$. 

Call $\vec a_{i,j_1,\dots, j_i}$ the sequence of basis elements corresponding to the indices $i,j_1,\dots, j_i$, so that 
$p_\infty(x,y)=\sum_{i,j_1,\dots, j_i}\frac{1}{2^{i +j_1 +\dots+j_i}}p_0(x(\vec a_{i,j_1,\dots, j_i}),y(\vec a_{i,j_1,\dots, j_i}))$. 

Then we have that, for any term $P$, if for all $i\leq p$ and $j_\ell\leq m_\ell$, 
$\model{P}(\vec a_{i,j_1,\dots, j_i})\in B_{\delta_{i,j_1,\dots, j_i}}^{p_0}(\model{M}(\vec a_{i,j_1,\dots, j_i}))$, then 
$p_\mathrm{Scott}(P,M)=p_\infty(\model P, \model M)< p_\mathrm{Scott}(P,M)+\epsilon$, whence 
$P\in B_\epsilon^{p_{\mathrm{Scott}}}(M)$.

Since $D$ is algebraic, any open ball $B_{\delta_{i,j_1,\dots, j_i}}^{p_0}(\model{M}(\vec a_{i,j_1,\dots, j_i}))$ contains some compact element $c_{i,j_1,\dots, j_i}$. From $c_{i,j_1,\dots, j_i} \ll \model{M}(\vec a_{i,j_1,\dots, j_i})$, by the approximation theorem and the fact that $c_{i,j_1,\dots, j_i}$ is compact, we deduce then that $c_{i,j_1,\dots, j_i} \ll \model{A_{i,j_1,\dots, j_i}}(\vec a_{i,j_1,\dots, j_i})$ must hold for some partial term $A_{i,j_1,\dots, j_i}\leq M$.

Let now $A=\bigvee_{i\leq p,j_\ell\leq m_\ell}A_{i,j_1,\dots, j_i}$. Notice that $A$ is a partial term, and that $A\leq M$. 
We claim that $A\in B_{\delta_{i,j_1,\dots, j_i}}^{p_0}(\model{M}(\vec a_{i,j_1,\dots, j_i}))$: for any  $i,j_1,\dots, j_i$, 
we have 
$\model{A}(\vec a_{i,j_1,\dots, j_i})\geq  \model{A_{i,j_1,\dots, j_i}}(\vec a_{i,j_1,\dots, j_i})\in B_{\delta_{i,j_1,\dots, j_i}}^{p_0}(\model{M}(\vec a_{i,j_1,\dots, j_i}))$, and we can conclude since open balls are upper closed.

We can thus conclude that $A\in B^{p_{\mathrm{Scott}}}_\epsilon(M)$. Let $n$ be the height of $A$ and $\theta=2^{-n}$; then 
for all $P$, if $P\in B_\theta^{p_{\mathrm{B\"ohm}}}(M)$,
it must hold $\C B(P)\geq A$, and thus $A\leq P$, which implies then 
$P\in B^{p_{\mathrm{Scott}}}_\epsilon(M)$.

The strictness follows from the fact that the associated $\lambda$-theories $\C B$ and $\C H^*$ are strictly included, 
as argued at the end of Remark \ref{rem:ctxbohm} for the case of $p_{\mathrm{ctx}}$.

\item[($ p_{\mathrm{Scott}}\sqsubset p_{\mathrm{ctx}}$)]
By Proposition \ref{prop:coarsest}, we only need to prove strictness. 
Recall that the basic open sets for the Scott topology are of the form $\twoheaduparrow b=\{x\mid b\ll x\}$, where $b$ is an element of the basis. 

Fix a compact element $c\neq \bot$ of $D_0$.
Let $c_1:= \twoheaduparrow c\searrow c$, a compact basis element of $D_{1}$,
$c_2:= \twoheaduparrow c_1\searrow c_1$, a compact basis element of $D_{2}$, such that $c_2(c_1)=c_1$ (using the fact that $c_1\ll c_1$). More generally, define $c_{n+1}:=\twoheaduparrow c_n\searrow c_n$ a compact basis element of $D_{n+1}$ such that $c_{n+1}(c_n)(c_{n-1})\dots (c_1)=c_1$. Observe that for all $n$, $c_n\ll \mathrm{id}_{D_n}\ll \model{I}$.

For all $k\geq 1$, let 
$P_k=\lambda y_1.\dots \lambda y_k. \lambda x.x$.
We claim that $c_{n}\not\ll \model{P_{n+k}}$ holds for all $n,k>0$.
To prove this we will define terms $e_n\in D_n$ such that 
$\model{P_{n+k}}(e_{k+n-1})\dots (e_0)= \bot$ while 
$i_{n(k+n)}(c_{n})(e_{k+n-1})\dots (e_0)\neq \bot$.

%

Define $d_0:= c$ and $d_{n+1}:=\lambda x.d_n$. Let us show, by induction on $n$, that $i_n(d_n)=d_{n+1}$ and 
$j_n(d_{n+1})=d_n$: 
we have $i_0(d_0)=\lambda x.c=d_1$ and $j_0(d_1)=d_1(\bot)=c=d_0$; for the inductive case we have
$i_{n+1}(d_{n+1})=\lambda x. i_n(d_{n+1}(j_n(x)))=\lambda x.i_n(d_n)=\lambda x.d_{n+1}=d_{n+2}$, and 
$j_{n+1}(d_{n+2})= \lambda x. j_n(d_{n+2}(i_n(x)))=\lambda x. j_n(d_{n+1})=\lambda x.d_n=d_{n+1}$. 
So, more generally, we have $i_{nm}(d_n)=d_m$ and $j_{mn}(d_m)=d_n$. 
Hence, for all $n$ and $k>0$, we have 
\begin{align*}
\Big(i_{n(k+n)}(c_{n})\Big)(d_{k+n-1})&(d_{k+n})\dots (d_{k})(\bot)\dots (\bot)\\
&= 
i_{0k}\Big(c_n(j_{(k+n-1)(n-1)}(d_{k+n-1}))\dots (j_{k0}(d_k))\Big)(\bot)\dots (\bot) \\
&= i_{0k}\Big( c_n(d_{n-1})\dots (d_0)\Big)(\bot)\dots (\bot) \\
&=i_{0k}(d_0)(\bot)\dots (\bot) \\
&= d_{k}(\bot)\dots (\bot) =c.
\end{align*}

Given $n,k>0$, let $(e_{n+k-1})\dots (e_0)$ stand for $(d_{k+n-1})\dots (d_{k})(\bot)\dots (\bot)$, with $(\bot)$ repeated $k$ times. We have, on the one hand, 
$
i_{n(k+n)}(c_{n})(\vec e) =  c$, while, on the other hand,   
$\model{P_{k+n}}(\vec e)= \bot$. This proves that $c_{n}\not\ll P_{n+k}$.

%
%
%
%
%
%

%
%

%
%
%

Now, since $\twoheaduparrow {c_2}$ is open and $I \in\  \twoheaduparrow c_2$, there exists $\epsilon>0$ such that for al $Q\in B_\epsilon^{\mathrm{Scott}}(I)$, $\model Q\in\  \twoheaduparrow c_2$.  
This implies then that for all $k>0$, $P_{k+2}\notin B_\epsilon^{\mathrm{Scott}}(I)$, since $c_2\not \ll P_{k+2}$.

We will now show that \emph{any} open $p_{\mathrm{ctx}}$-ball around $I$ contains infinitely many terms $P_k$.
This implies that $B_\epsilon^{\mathrm{Scott}}(I)$, which does not contain $P_{k}$ as soon as $k>2$, contains no open $p_{\mathrm{ctx}}$-ball around $I$, as desired.

Let $\delta>0$ and consider the ball $B_\delta^{\mathrm{ctx}}(I)$.
Then there exists finitely many contexts $\TT C_1,\dots, \TT C_m$ such that $\TT C_i[I]$ is solvable for all $i=1,\dots,m$ and for all terms $Q$, if $\TT C_i[Q]$ is solvable for all $i=1,\dots, m$, then
 $Q\in B_\delta^{\mathrm{ctx}}(I)$.

Let $\TT C_i[-]$ be a context such that $\TT C_i[I]$ is solvable. 
Observe that in the reduction of $\TT C_i[I]$, either $I$ never goes in head position, and then $\TT C_i[R]$ must be solvable for all terms $R$, or $\TT C_i[I]$ reduces to $\lambda \vec z . I M_1\dots M_p \to_\beta \lambda \vec z.  M_1 M_2\dots M_k$, the latter reducing to some head normal form. We then have that for $k_i>p$, $\TT C_i[P_k]$ reduces to $\lambda \vec z . P_k M'_1\dots M'_p \to_\beta \lambda \vec z.\lambda y_{p+1}.\dots \lambda y_{k}.\lambda x. x$, so is solvable.

%

Letting $k$ be any natural number greater than $ \max_i k_i$, we deduce that, from the convergence of all $\TT C_i[I]$, for $i=1,\dots,m$, it follows that also
$\TT C_i[P_k]$ converges for all $i=1,\dots, m$, and we conclude then that $P_k\in B_\delta^{\mathrm{ctx}}(I)$.
\end{description}
\end{proof}


Recalling that $D_\infty$ induces the theory $\C H^*$, the relation $p_{\mathrm{B\"ohm}}\sqsubset p_{\mathrm{Scott}}$ is in accordance with what happens with the corresponding $\lambda$-theories. By contrast,
 while $D_\infty$ and the contextual preorder both induce the $\lambda$-theory $\C H^*$, the first induces a $\lambda$-PPM which is \emph{finer} than the contextual partial metric.
As can be seen in the proof in the Appendix, the reason behind this is that, given terms $M \sqsubseteq_{\mathrm{ctx}} P$, there exists open
 $p_{\mathrm{Scott}}$-balls $B_\epsilon(P)$ whose elements all lie above $M$, while 
  $p_{\mathrm{ctx}}$ cannot define any such ball, since whether $M\leq Q$ cannot be tested by applying only \emph{finitely} many contexts to $Q$ (cf.~Remark \ref{rem:ctx}).

\section{Quantifying the Taylor Expansion}\label{sec6}

In this section we discuss the Taylor expansion of $\lambda$-terms \cite{ER, Regnier2006, Regnier2008}, a well-studied method that refines methods based on B\"ohm trees and Scott domains, by decomposing the non-linear behavior of a term into the \emph{linear} behavior of a set of simpler terms, called \emph{resource $\lambda$-terms}. 
Notably, several well-known properties of $\lambda$-terms (like e.g.~continuity and stability), which were originally established by topological and semantic methods, can be proved in a simpler, combinatorial way, via the Taylor expansion \cite{Barbarossa2019}.

The famous \emph{commutation theorem} \cite{Regnier2008} says that the Taylor expansion commutes with the construction of the B\"ohm tree, and shows that the associated $\lambda$-theories coincide. By presenting the Taylor expansion as an \emph{isometric} transformation, we add a quantitative flavor to this result, showing that also the corresponding notions of program similarity coincide.

%
%
%

\subparagraph{Resource terms and the Taylor expansion}

As we said, the Taylor expansion associates a $\lambda$-term with a set of terms, called \emph{resource terms}, with a linear operational semantics. The set $\Lambda^r$ of resource terms is defined by the grammar
$
t:= x\mid \lambda x.t \mid t\langle t,\dots, t\rangle
$, 
where $\langle t,\dots, t\rangle$ indicates a finite multiset of terms.
We define an order $\prec$ over resource $\lambda$-terms as the context closure of the relation $\emptyset \prec \langle t_1,\dots, t_n\rangle$. 
The operational semantics of resource terms replaces the standard $\beta$-rule with a linear monadic rule 
$\to_r$ that relates a redex $(\lambda x.t)\langle u_1,\dots, u_n\rangle$ with the set of terms 
$t[u_{\sigma(1)}/x_1,\dots, u_{\sigma(n)}/n]$, obtained by replacing each occurrence $x_i$ of $x$ in $t$ by the term $u_{\sigma(i)}$,
whenever $t$ contains exactly $n$ occurrences of $x$ and where $\sigma$ is any permutation in $\mathfrak S_n$. 
For example, the resource term $(\lambda x.x\langle x\rangle )\langle y,z\rangle$ reduces to the set of terms $\{y\langle z\rangle, z\langle y\rangle\}$ corresponding to the two possible ways of distributing $y,z$ across the two occurrences of $x$ in $x\langle x\rangle$. Instead, the resource term $(\lambda x.x\langle x\rangle)\langle y\rangle$ reduces to the empty set: as the single occurrence of $y$ cannot be duplicated, it does not suffice to replace all occurrences of $x$ in $x\langle x\rangle$. 
More generally, if $t$ contains a number of occurrences of $x$ different from $n$, then $(\lambda x.t)\langle u_1,\dots, u_n\rangle\to_r \emptyset$. Thanks to the impossibility of duplicating terms, linear reduction $\to_r^*$ is not only confluent, but also strongly normalizing (in linear time).

The \emph{Taylor expansion} of a $\lambda$-term $M$ is a set $\C T(M)\subseteq \Lambda^r$ defined inductively as
$\C T(x)= \{x \}$, 
$\C T(\lambda x.M)= \{\lambda x.t \mid t\in \C T(M)\}$ and
$\C T(MN)= \{ t\langle t_1,\dots, t_n \rangle\mid t\in \C T(M), x_n \in \BB N, t_1,\dots, t_n\in \C T(N)   \}$. 
For example, the Taylor expansion of $\lambda x.\lambda y.yx$ is composed of all resource terms of the form 
$\lambda x. \lambda y.y \langle x,\dots, x\rangle$. 
Since reduction is confluent and strongly normalizing, we can define the set $\mathrm{nf}(\C T(M))$ containing the normal forms of the resource terms in $\C T(M)$. 

The Taylor expansion extends to \emph{partial} $\lambda$-terms by letting $\C T(\bot)=\emptyset$. 
In this way, we can define the Taylor expansion of a B\"ohm tree $\alpha\in\mathrm{Ide}(\C A)$ by 
$
\C T(\alpha)= \bigcup\{\C T(A)\mid A\in \alpha \},
$. 
The aforementioned commutation theorem says then that $\C T(\C B(M)) = \mathrm{nf}(\C T(M))$; together with the injectivity of $\C T$ over B\"ohm trees (which is easily proved), this shows the equivalence of the $\lambda$-theory $\C B$ and the $\lambda$-theory generated by equating all closed terms whose Taylor expansions have the same normal form. 

%

We provide an alternative, topological, presentation of the Taylor expansion of B\"ohm trees. A natural choice would be to take the Scott topology induced by the resource term order $\preceq$. However, under this order, $\Lambda^r$ is not a dcpo: limits of directed sequences need not exist (as they would correspond, just like B\"ohm trees, to infinite terms).
This leads then to consider, just like for partial terms, the {completion} $\mathrm{Ide}(\Lambda^r)$ of $\Lambda^r$, which forms an algebraic dcpo. The elements of $\mathrm{Ide}(\Lambda^r)$ can be seen as possibly infinite resource terms, and the compact elements correspond to the finite ones, that is, to ordinary resource terms.

Recall that $\mathrm{Ide}(\C A)$ can be identified with the set of B\"ohm trees; the Taylor expansion can be presented in this setting as a map $\C T^*: \mathrm{Ide}(\C A) \to\C P(\mathrm{Ide}(\Lambda^r))$ defined by 
$
\C T^*(\alpha)=\mathrm{Ide}(\C T(\alpha)).
$
To see that it is well-defined, let us observe that $\C T(\alpha)\subseteq \Lambda^r$, so $\mathrm{Ide}(\C T(\alpha))\subseteq \mathrm{Ide}(\Lambda^r)$ is a set of ideals.
Notice that the set $\C T^*(\alpha)$ is \emph{closed} with respect to the Scott topology of $\mathrm{Ide}(\Lambda^r)$.

%
%
%
%
%

%
%
%
%
%

\subparagraph{Defining a metric on $\Lambda^r$}

We introduce a PUM on $\Lambda^r$ quantifying the order $\preceq$, which is essentially an adaptation of the tree partial metric. A normal resource term is of the form
$
t=\lambda x_1.\dots.\lambda x_n. x b_1\dots b_m
$, 
where each $b_i$ is a finite multiset $b_i=\langle t_i^1,\dots, t_i^{m_i} \rangle$.
The \emph{height} of a resource term $h(t)$ is defined recursively as
$
h(t)= \max_{ij}h(t_i^j) +1
$, 
where $t$ is as above. For any variable occurrence $z$ in $t$, we define its \emph{height in $t$} $h_t(z)$ as $h_t(z)=1$ if $z$ is as $x$ above, and as $h_t(z)=h_{t_i^j}(z)+1$ if the occurrence is in $t_i^j$.

For any normal resource term $t$ and $n\leq h(t)$, we define the resource term $t|_n$, corresponding to the "truncation" of $t$ at height $n$:
 if $h(t)\leq n$, then $t|_n=t$, and if $h(t)>n$, then we replace any subterm of $t$ of the form $ x b_1\dots b_m$, where $x$ is at height $n$, by
$x \emptyset\dots \emptyset$.
Observe that $h(t|_n)\leq n$ and $h(t|_n)=n$ holds whenever $h(t)\geq n$.

\begin{definition}[resource partial metric]
For any two resource terms $t,u\in \Lambda^r$, we define
$$
r(t,u):= \inf\{ 2^{-n}\mid h(t), h(u)\geq n \text{ and } t|_n \ = \ u|_n  \}.
$$
\end{definition}

By arguing similarly to the case of trees, it can be shown that $r$ is a PUM, and that the order $\leq_r$ coincides with $\preceq$. Notice that $r(t,t)=2^{-h(t)}$. 

%

\subparagraph{Lifting the metric to $\C P(\Lambda^r)$}
We now discuss how to lift the metric $r$ to subsets of $\Lambda^r$. 
A standard way to lift a metric $d$ from a set $X$ to its powerset $\C P(X)$ is via the \emph{Hausdorff lifting}
$
H_d(A,B)= \max\{
\sup\limits_{a\in A}\inf\limits_{b\in B}d(a,b),
\sup\limits_{b\in B}\inf\limits_{a\in A}d(a,b)
\}.
$ 
Intuitively, $H_d(A,B)$ looks, for each element of one set, for its \emph{closest} element in the other set, and then measures the distance that is obtained by this operation in the worst case. 
The same construction, when applied to a partial metric $p$, yields the \emph{partial Hausdorff metric} $H_p$ (see 
\cite{Aydi2012, Stubbe2018}) which, in spite of its name, is in fact \emph{not} a partial metric, as it satisfies a \emph{weaker} triangular law $H_p(A,B)\leq H_p(A,C)+H_p(C,B)-\inf_{c\in C}p(c,c)$. 

In any case, the Hausdorff lifting $H_r$ of the resource partial metric is not the right choice for us: suppose $\alpha$ is an infinite B\"ohm tree, so that its self-distance is $0$; then $\C T(\alpha)$ is a set of \emph{finite} terms of arbitrary depth, so that
$
H_r(\C T(\alpha),\C T(\alpha))= \sup_{t\in \C T(\alpha)}r(t,t)=\sup\{2^{-|t|}\mid t\in \C T(\alpha)\}=\frac{1}{2}>0=p_{\mathrm{tree}}(\alpha,\alpha)
$.
Beyond making the Taylor expansion non-isometric, from this we deduce that $H_r$ is constantly $\frac{1}{2}$ over \emph{all} non-empty Taylor expansions!

Instead, we introduce the following variant of the Hausdorff lifting:
\begin{definition}
For any PM $p:X\times X\to [0,1]$, let $H_p^*:\C P(X)\times \C P(X)\to [0,1]$ be:
$$
H^*_p(A,B)= \max\left\{
\sup\limits_{a\in A}\inf\limits_{
a'\geq_p a\in A, b\in B
}p(a',b), 
\sup\limits_{b\in B}\inf\limits_{
b'\geq_p b\in B, a\in A
}p(a,b')
\right\}.
$$
\end{definition}

Intuitively, on two sets $A,B$, $H^*_p(A,B)$ measures how close the elements of $A$ get to the elements of $B$ as soon as one is allowed to freely move higher within $A$ and $B$ following the order $\leq_p$.
Notice that, for $\alpha$ an infinite B\"ohm tree, we now have $H_r^*(\C T(\alpha),\C T(\alpha))=0$, as desired.
Similarly to the partial Hausdorff metric $H_p$, for a partial metric $p$, $H_p^*$ is \emph{not} in general a partial metric. Indeed, it only satisfies the following properties:
\begin{restatable}{proposition}{hausdorffone}
For any partial metric space $(X,p)$, the distance $H^*_p$ satisfies:
\begin{enumerate}
\item $H^*_p(A,A)\leq H^*_p(A,B)$;
\item $H^*_p(A,B)=H_p^*(B,A)$;
\item $H^*_p(A,B)\leq H^*_p(A,C)+H^*_p(C,B)-\inf_{c\in C}p(c,c)$.

\end{enumerate}
\end{restatable}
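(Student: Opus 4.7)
Property (2) is immediate from the manifestly symmetric form of the defining $\max$. For property (1), I would first observe that when $A=B$ the two terms inside the $\max$ collapse to the single quantity $\sup_{a\in A}\inf_{a'\geq_p a\in A,\,a''\in A}p(a',a'')$. Taking $a''=a'$ in the inner double infimum gives the upper bound $\sup_{a\in A}\inf_{a'\geq_p a}p(a',a')$, and then (P1) upgrades this: $p(a',a')\leq p(a',b)$ for every $b\in B$, so the bound is dominated by the first component of the $\max$ in $H_p^*(A,B)$.

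For property (3), the plan is to chain P4 along a carefully chosen path through $C$. Fix $a\in A$ and $\epsilon>0$. The definition of $H_p^*(A,C)$ supplies $a'\geq_p a$ in $A$ and $c\in C$ with $p(a',c)<H_p^*(A,C)+\epsilon$, and the first component of the $\max$ in $H_p^*(C,B)$, applied to this specific $c$, supplies $c'\geq_p c$ in $C$ and $b\in B$ with $p(c',b)<H_p^*(C,B)+\epsilon$. Antimonotonicity of $p$ along $c\leq_p c'$ gives $p(a',c')\leq p(a',c)$, and obviously $p(c',c')\geq \inf_{d\in C}p(d,d)$. Assembling via P4 at the middle point $c'$:
\begin{equation*}
p(a',b)\;\leq\;p(a',c')+p(c',b)-p(c',c')\;<\;H_p^*(A,C)+H_p^*(C,B)+2\epsilon-\inf_{d\in C}p(d,d).
\end{equation*}
Since $a'\geq_p a$, this $p(a',b)$ witnesses $\inf_{a''\geq_p a,\,b\in B}p(a'',b)$, and taking the supremum over $a$ and sending $\epsilon\to 0$ bounds the first term of the $\max$ in $H_p^*(A,B)$. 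The second term is bounded by the symmetric argument, starting from $b\in B$ and going backwards through $C$.

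The only delicate step is the combination of antimonotonicity with P4: one \emph{must} use a bumped-up version $c'\geq_p c$ to subtract a self-distance that can be uniformly bounded below by $\inf_{d\in C}p(d,d)$, and one \emph{must} use antimonotonicity to transfer the estimate $p(a',c)<H_p^*(A,C)+\epsilon$ to $p(a',c')$ before plugging into P4. Everything else is routine $\varepsilon$-chasing through suprema and infima.
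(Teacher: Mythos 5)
Your proof is correct and follows essentially the same route as the paper's: both reduce point (1) to $\sup_{a}\inf_{a'\geq_p a}p(a',a')$ via P1, and both prove point (3) by applying P4 at a bumped-up midpoint $c'\geq_p c$, using antimonotonicity to transfer the estimate from $c$ to $c'$ and bounding $-p(c',c')$ by $-\inf_{c\in C}p(c,c)$. The only difference is presentational: the paper chains inequalities between nested infima and suprema directly, whereas you extract explicit $\epsilon$-witnesses, which amounts to the same argument.
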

\longversion{
\begin{proof}
For point 1.~first observe that 
$
H_p^*(A,A)= \sup_{a\in A}\inf_{a'\geq a}p(a',a')$, 
and since $p(a',a')\leq p(a',b)$ holds for all $a',b$ we have
$$
H_p^*(A,A)= \sup_{a\in A}\inf_{a'\geq a}p(a',a')
\leq  \sup_{a\in A}\inf_{a'\geq a, b\in B}p(a',b)\leq H_p^*(A,B).
$$

Point 2.~follows from the definition.

Let us finally establish 3. From 
$$
p(a',b) \leq p(a',c) + p(c, b) - p(c,c)
$$
we first deduce
$$
\inf_b p(a',b) \leq  \inf_b \big ( p(a',c) + p(c, b) - p(c,c)\big )=
p(a',c) +   \big( \inf_b p(c, b)\big)  - p(c,c)
$$
Then, using the fact that $r(a',c)\geq r(a',c')$ whenever $c\leq c'$, we deduce 
$$
\inf_b p(a',b) \leq \inf_{c'\geq c} \big( p(a',c') + \inf_b p(c', b) - p(c',c')\big)\leq 
\inf_{c'\geq c} \big( p(a',c) + \inf_b p(c', b) - p(c',c')\big)
$$
and then, since $c$ is arbitrary,
\begin{align*}
\inf_b p(a',b) &\leq\inf_c \inf_{c'\geq c} \big( p(a',c) + \inf_b p(c', b) - p(c',c')\big)\\
&
=
\inf_c   p(a',c) + \inf_{c'\geq c}\inf_b\big (p(c', b) - p(c',c')\big)\\
& \leq
\inf_c   p(a',c) + \inf_{c'\geq c}\inf_b p(c', b) - \inf_c p(c,c)\\
& \leq
\inf_c   p(a',c) +\sup_{c''} \inf_{c'\geq c''}\inf_b p(c', b) - \inf_c rpc,c).
\end{align*}
From all this we now deduce 
$$
\sup_a \inf_{a'\geq a}\inf_b p(a',b)
\leq
\sup_a \inf_{a'\geq a} \inf_c   p(a',c) + \sup_{c''} \inf_{c'\geq c''}\inf_b p(c', b) - \inf_c p(c,c).
$$
which is precisely condition 4.

\end{proof}

}

However, $H_p^*$ is in fact a PM when restricted to $\mathrm{Ide}_p(X)$, the dcpo of ideals with respect to the order $\leq_p$. 

\longversion{
Let us first establish a preliminary lemma:

\begin{lemma}\label{lemma:finitedirected}
Let $A\subseteq \Lambda^r$ be directed and bounded, that is, such that, for some $k\in \BB N$ and for all $t\in A$,
$t$ has height at most $k$. Then there exists $\hat t\in A$ such that $A=\{u\mid u\leq t \}$. 
\end{lemma}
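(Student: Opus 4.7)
The plan is to prove the lemma by induction on the height bound $k$, constructing the maximum element $\hat t$ of $A$ and verifying that $\hat t\in A$. In the base case $k=1$, every element of $A$ has the form $\lambda\vec x.y\emptyset\dots\emptyset$, which admits no non-trivial refinement since turning any $\emptyset$ into a non-empty multiset would force the height to be at least $2$. Directedness then forces all elements of $A$ to coincide (otherwise two elements with mismatched skeletons would have no common upper bound), so $A$ is a singleton and the maximum is trivial.

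For the inductive step, I would first observe that directedness forces every $t\in A$ to have the form $\lambda\vec x.y\, b_1^t\dots b_m^t$ with common $\vec x$, $y$, $m$; otherwise two elements with mismatched skeletons would admit no common refinement. For each position $i\leq m$, the non-empty multisets $b_i^t$ occurring at position $i$ across $t\in A$ must share a common cardinality $n_i$: indeed, $\preceq$ preserves the cardinality of a non-empty multiset (it only turns $\emptyset$'s into non-empty multisets or refines strictly inside existing ones), so distinct positive cardinalities at the same position would preclude any common upper bound. Fixing a reference element $t_\star\in A$ realizing $n_i$ at position $i$, and transporting enumerations from other elements of $A$ via the permutations witnessed by common upper bounds, one identifies each ``$j$-th coordinate at position $i$'' with a directed family $A_{i,j}\subseteq\Lambda^r$ of height at most $k-1$. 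The inductive hypothesis then supplies a maximum $\hat u_{i,j}\in A_{i,j}$, and one sets $\hat b_i:=\langle \hat u_{i,1},\dots,\hat u_{i,n_i}\rangle$ (or $\emptyset$ if no element of $A$ filled position $i$) and $\hat t:=\lambda\vec x.y\,\hat b_1\dots\hat b_m$, which is by construction an upper bound for every element of $A$.

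To conclude, each coordinate maximum $\hat u_{i,j}$ is realized by some witness $w_{i,j}\in A$, so iterating directedness over the finitely many $w_{i,j}$'s yields some $t^*\in A$ dominating all of them. Then $t^*\succeq\hat t$ by construction, while $t^*\preceq\hat t$ since $\hat t$ is already an upper bound for $A$, so $t^*=\hat t\in A$ as desired.

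The main obstacle is the multiset bookkeeping: since multisets are unordered, recursing into their elements requires choosing coherent enumerations across all of $A$. This is handled by fixing a reference element and transporting enumerations through the permutations exhibited by common upper bounds, but verifying that these transports are globally consistent---so that each $A_{i,j}$ is genuinely a directed subset of $\Lambda^r$ of strictly smaller height---is the delicate point that makes the induction go through.
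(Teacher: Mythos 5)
Your strategy is genuinely different from the paper's. The paper enumerates $A$ as $t_0,t_1,\dots$ and uses directedness to build an increasing sequence $v'_0\preceq v'_1\preceq\cdots$ inside $A$ with $t_n\preceq v'_{n+1}$; since all terms of $A$ have height at most $k$, a strictly increasing chain must stabilize, and the stable value is the required maximum. This sidesteps any structural analysis of resource terms. You instead induct on $k$ and reconstruct the maximum coordinate by coordinate. Your skeleton analysis (common binder prefix, head variable, arity, and common cardinality of the non-empty multisets at each position) is correct, and the closing move---realizing each coordinate maximum by a witness in $A$ and taking a common upper bound of the finitely many witnesses---is a sensible way to land back inside $A$.

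The problem is that the crux of your induction is exactly the step you defer. For the families $A_{i,j}$ to be well-defined \emph{directed} subsets of $\Lambda^r$, the enumerations of the position-$i$ multisets must be transported coherently across all of $A$, not just pairwise against $t_\star$: given $t,t'\in A$, the permutation aligning $b_i^t$ with $b_i^{t_\star}$ and the one aligning $b_i^{t'}$ with $b_i^{t_\star}$ must be compatible with a single matching inside a common upper bound of $t$ and $t'$, otherwise $A_{i,j}$ need not be directed and the inductive hypothesis does not apply. This is not automatic: the matchings come from arbitrary choices of upper bounds, and different choices can induce different permutations as soon as a multiset contains repeated or $\preceq$-comparable elements; establishing the required global coherence is essentially as hard as the lemma itself. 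The same issue resurfaces in your last paragraph when you infer $t^*\succeq\hat t$ from $t^*\succeq w_{i,j}$: knowing that each $\hat u_{i,j}$ is dominated by \emph{some} element of $b_i^{t^*}$ does not by itself yield a system of distinct representatives witnessing $\hat b_i\preceq b_i^{t^*}$. So as written the proof has a real gap at its central step. (To be fair, the paper's one-line termination claim also compresses an argument---one must check that a strictly $\preceq$-increasing chain of height at most $k$ is finite, e.g.\ via a lexicographic measure on the depths of the $\emptyset$-occurrences---but that route avoids the multiset bookkeeping entirely, which is why it is the shorter one here.)
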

\begin{proof}
Consider an enumeration $t_n$ of the elements of $A$ and let $t^\dag\in A$ be an element of maximum height. 
By directedness of $A$, we obtain an increasing sequence $v'_n$ defined by $v'_0: t^\dag$ and $v'_{n+1}$ such that
$t_n,v'_n\leq v'_{n+1}$. Since the height of the $v'_n$ is constant, after a finite number of iterations the sequence $v'_n$ must become constant. This means that it reaches a maximum element $\hat t$ of $A$.
\end{proof}

}

\begin{restatable}{proposition}{Hrpm}
For any PM $p$ on $X$, $H^*_p$ is a PM on $\mathrm{Ide}_p(X)$ quantifying the order $\subseteq$.
\end{restatable}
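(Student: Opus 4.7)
The strategy hinges on a clean formula for the self-distance of an ideal, from which the four axioms and the order characterization will follow with relative ease. I would first establish that $H^*_p(A,A) = \inf_{a \in A} p(a,a)$ for any ideal $A$ by combining two observations: (i) P1 gives $p(a',a'') \geq p(a',a')$, so the double infimum $\inf_{a' \geq a \in A,\, a'' \in A} p(a',a'')$ reduces to $\inf_{a' \geq a \in A} p(a',a')$ by taking $a'' = a'$; (ii) antimonotonicity of $p$ together with directedness of $A$ makes this quantity independent of $a$ and equal to $\inf_{c \in A} p(c,c)$, since any $c \in A$ admits a common upper bound $c' \geq a, c$ in $A$ with $p(c',c') \leq p(c,c)$.

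With this formula, P1 and P3 carry over from the previous proposition, and P4 follows directly from the weaker triangle inequality there, since the subtracted term $\inf_{c \in C} p(c,c)$ now coincides with $H^*_p(C,C)$. To characterize the induced order, I plan to prove $A \subseteq B \Leftrightarrow H^*_p(A,B) \leq H^*_p(A,A)$. The forward direction uses the handy identity $p(a, b') = p(a, a)$ whenever $a \leq_p b'$ (immediate from P1 and the definition of $\leq_p$): for the first part of the max, given $a \in A$, I pick $a' \geq a$ in $A \subseteq B$ and set $b = a'$, yielding $p(a, a') = p(a,a)$ and hence $\inf_c p(c,c)$ after infimizing over $a'$; for the second part, given $b \in B$ and $a \in A$, I use directedness of $B$ (with $a \in A \subseteq B$) to find $b' \geq a, b$ in $B$, so that again $p(a, b') = p(a, a)$, and infimizing over $a$ once more recovers $\inf_c p(c,c)$.

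The backward direction is the main obstacle. Given $a_0 \in A$ and the assumption $H^*_p(A,B) \leq H^*_p(A,A)$, the bound on the first component of the max yields, for each $\epsilon > 0$, elements $a'_\epsilon \geq a_0$ in $A$ and $b_\epsilon \in B$ with $p(a'_\epsilon, b_\epsilon) < H^*_p(A,A) + \epsilon$; applying P4 with middle point $a'_\epsilon$ and using $p(a_0, a'_\epsilon) = p(a_0, a_0)$ gives $p(a_0, b_\epsilon) < p(a_0, a_0) + \epsilon$. Since $B$ is downward closed, to conclude $a_0 \in B$ it suffices to find $b \in B$ with $a_0 \leq_p b$, i.e., $p(a_0, b) = p(a_0, a_0)$; the delicate step is to promote the approximation $p(a_0, b_\epsilon) \to p(a_0, a_0)$ into this exact equality, which I expect to require leveraging the directedness of $B$ together with a suitable closure or completeness property of the ideals in $\mathrm{Ide}_p(X)$. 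Once this is in place, P2 follows immediately: if $H^*_p(A,A) = H^*_p(A,B) = H^*_p(B,B)$, the order characterization yields both $A \subseteq B$ and $B \subseteq A$, hence $A = B$.
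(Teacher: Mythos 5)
Your plan --- compute the self-distance $H^*_p(A,A)=\inf_{a\in A}p(a,a)$ first, inherit P1, P3 and the strengthened triangle law from the preceding proposition on $H^*_p$, and then reduce P2 to the order characterization $A\subseteq B\Leftrightarrow H^*_p(A,B)\leq H^*_p(A,A)$ --- is exactly the paper's plan, and your arguments for the self-distance formula and for the forward implication (using directedness of $B$ to find $b'\geq a,b$ with $p(a,b')=p(a,a)$) coincide with the paper's, up to minor bookkeeping about which of the two elements gets promoted.

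The gap you flag in the backward direction is genuine, and it is the crux of the whole statement: without it neither P2 nor the identification of $\leq_{H^*_p}$ with $\subseteq$ is established. As you observe, the hypothesis only yields, for each $a_0\in A$ and each $\epsilon>0$, some $b_\epsilon\in B$ with $p(a_0,b_\epsilon)<p(a_0,a_0)+\epsilon$, whereas downward closure of $B$ requires an exact witness $b\in B$ with $a_0\leq_p b$, and the $\epsilon$-approximations do not by themselves supply one. The paper closes this step by a different route: from $p(b,b)\leq p(a'',b)\leq p(a',a')$ it extracts, for each $a\in A$, some $b\in B$ with $p(b,b)\leq p(a,a)$, and then reads this as $a\leq_p b$. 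Note that this last inference is not a general fact about partial metrics (two $\leq_p$-incomparable points may well have comparable self-distances), so your suspicion that some additional ingredient is needed --- a completeness or closure property of the ideals, or a special feature of the concrete metric $r$ to which the proposition is later applied --- is well founded rather than a sign that you missed an easy trick. As it stands your proof is incomplete at precisely the point where the paper's own argument is most delicate; to finish, you would have to either justify that inference for the spaces actually in play or supply the extra structural property you allude to.
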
 
\longversion{
\begin{proof}
Let us first observe that for all non-empty $A\in \mathrm{Ide}_p(X)$, 
$$
H^*_p(A,A)= \sup_{a\in A}\inf\limits_{a'\geq a}p(a',a')= \inf_{a\in A}p(a,a).
$$
To prove this, let $a_1,a_2\in A$; then there exists $a_3\geq a_1,a_2$, and notice that $p(a_3,a_3)\leq p(a_1,a_1),p(a_2,a_2)$. So, on the one hand, from $a_3^\uparrow\subseteq a_2^\uparrow, a_1^\uparrow$ we deduce $\inf\limits_{a'\geq a_1}p(a',a'), \inf\limits_{a'\geq a_2}p(a',a')\geq 
\inf\limits_{a'\geq a_3}p(a',a')$; on the other hand, for any $c\in a_1^\uparrow$ we can find $d\in a_3^\uparrow\cap c^\uparrow$ (and similarly for $a_2$), so we also deduce $\inf\limits_{a'\geq a_1}p(a',a'), \inf\limits_{a'\geq a_2}p(a',a')\leq 
\inf\limits_{a'\geq a_3}p(a',a')$. 

We have thus shown that for any fixed $a_0\in A$, 
$$
\sup_{a\in A}\inf\limits_{a'\geq a}p(a',a')
= 
\inf\limits_{a'\geq a_0}p(a',a').
$$
It remains to show that $\inf\limits_{a'\geq a_0}p(a',a')=\inf_{a\in A}p(a,a)$: 
on the one hand, that $\inf\limits_{a'\geq a_0}p(a',a')\geq\inf_{a\in A}p(a,a)$ is clear; conversely, for all $a\in A$ there exists $a'\in a_0^\uparrow \cap a^\uparrow$, so that $p(a',a')\leq p(a,a),p(a_0,a_0)$, and we can thus conclude $\inf\limits_{a'\geq a_0}p(a',a')\leq\inf_{a\in A}p(a,a)$.

Using this fact, by applying the previous proposition we immediately deduce properties 1,2 and 4 of partial metric space. 
Let us now prove that $H_p^*$ captures the inclusion order. in other words:
$$
H_p^*(A,B)=H_p^*(A,A) \Leftrightarrow A\subseteq B.
$$
First observe that $H_p^*(B,B)\leq H_p^*(A,A)$, since for all $a\in A$ there exists $b\in B$ such that $b\geq a$.

Suppose $ A\subseteq B$ and let $b\in B$; then 
$$
\inf_{b'\geq b\in B}\inf_{a\in A}p(b,a)= \inf_{b'\geq b\in B}p(b,b)= H_p^*(B,B).
$$
Let now $a\in A$, then
$$
\inf_{a'\geq a\in A}\inf_{b\in B}p(a,b)= \inf_{a'\geq a\in A}p(a',a')= H_p^*(A,A).
$$
This implies then $H_p^*(A,B)=H_p^*(A,A) $. 

Conversely, suppose $H_p^*(A,B)=H_p^*(A,A)$: then $\sup_{a\in A}\inf_{a'\geq a\in A}\inf_{b\in B}p(a',b)\leq \inf_{a\in A}p(a,a)$, that is
$$
\forall a,a'\in A \exists a''\geq a\in A\exists b\in B \ \ p(a'',b)\leq p(a',a').
$$
Since $p(b,b)\leq p(a',b)$ this implies then
$$
\forall a\in A \exists b\in B \ \ p(b,b)\leq p(a,a),
$$
that is, $\forall a\in A\exists b\in B \ a\leq b$, and since $B$ is lower, this implies that also $a\in B$. We conclude then $A\subseteq B$. 

Since $H_p^*$ captures the reversed order inclusion, we deduce that also axiom 3 of partial metric spaces holds. 
\end{proof}

}

When $p=r$, the resource partial metric, $H_r^*$ indeed quantifies the Scott topology:

\begin{restatable}{proposition}{HrScott}
The PM $H_r^*$ quantifies the Scott topology on $\mathrm{Ide}_r(\Lambda^r)$.
\end{restatable}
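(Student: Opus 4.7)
The plan is to mirror the structure of the quantification of $\Sigma$-trees via $p_{\mathrm{tree}}$ (cf.~Proposition \ref{prop:ptree}), adapted to the ideal completion $\mathrm{Ide}_r(\Lambda^r)$. First I will record the relevant domain-theoretic facts: the compact elements of $\mathrm{Ide}_r(\Lambda^r)$ are the principal ideals $\downarrow t$ for $t\in \Lambda^r$; the way-below relation reads $I\ll J$ iff $I\subseteq\, \downarrow t$ for some $t\in J$; and the Scott topology has as a basis the collection of sets $U_t:=\{J\in \mathrm{Ide}_r(\Lambda^r): t\in J\}$.

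For the inclusion $\mathcal{O}_\sigma(\mathrm{Ide}_r(\Lambda^r))\subseteq \mathcal{O}_{H_r^*}(\mathrm{Ide}_r(\Lambda^r))$, I will show that each basic Scott-open $U_t$ is $H_r^*$-open. Given $J\in U_t$, I will pick $\epsilon>0$ small enough---controlled by the gap between $H_r^*(J,J)$ and $2^{-h(t)}$, together with a discreteness correction exploiting that the values of $r$ lie in $\{0\}\cup\{2^{-n} : n\in\mathbb{N}\}$---so that for any $C\in B_\epsilon^{H_r^*}(J)$ the sup-inf structure of $H_r^*$ forces some witnesses $j\in J$ and $c\in C$ whose truncations coincide at height $h(t)$; using cofinality to pick $j\succeq t$ and the lower-closure of $C$, this will give $c\succeq t$, hence $t\in C$.

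For the inclusion $\mathcal{O}_{H_r^*}(\mathrm{Ide}_r(\Lambda^r))\subseteq \mathcal{O}_\sigma(\mathrm{Ide}_r(\Lambda^r))$, upper closure of open $H_r^*$-balls will follow from the antimonotonicity of $H_r^*$. For finite accessibility, given $B\in B_\epsilon^{H_r^*}(A)$, I will construct a compact approximant $\downarrow t\ll B$ (for some $t\in B$) still lying in $B_\epsilon^{H_r^*}(A)$. Writing $B=\bigvee_{t\in B}\downarrow t$ as a directed supremum and exploiting both the cofinality of $\{b'\succeq b\}$ in directed $B$ and the antimonotonicity of $r$, the distances $H_r^*(A,\downarrow t)$ will accumulate on $H_r^*(A,B)$ as $t$ ranges over $B$, so that some $t\in B$ will witness the required inequality.

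I expect the main technical obstacle to lie in the interaction between the sup-inf structure of the Hausdorff-style lifting $H_r^*$ and the discrete, height-indexed structure of $r$ on resource terms: it is precisely this interaction that allows strict metric inequalities to be converted into exact matchings of truncations, a conversion that underpins both directions of the quantification.
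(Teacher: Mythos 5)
Your proposal is correct and takes essentially the same route as the paper's proof: the paper likewise shows that the basic Scott opens generated by compact (principal/finite) ideals are metric-open by choosing $\epsilon < 2^{-k+1}-2^{-k}$ for $k$ the height of the generating term and exploiting the discreteness of the values of $r$ to turn a strict metric bound into an exact coincidence of truncations, and it establishes finite accessibility of open balls by locating a principal ideal $\downarrow c'$ with $c'\in C$ that remains in the ball. The only cosmetic differences are that the paper splits the finite-accessibility argument into the cases $\inf_{a\in A}r(a,a)=0$ versus $>0$ (using that a bounded directed set of resource terms has a maximum) rather than phrasing it as your accumulation of $H_r^*(A,\downarrow t)$ onto $H_r^*(A,B)$, and that it works with a general finite ideal $A\ll C$ before reducing to its maximum element, which is exactly your $U_t$.
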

\longversion{
\begin{proof}
Let us first show that open balls are Scott open.
It is clear that open balls are upper sets. Suppose now $C\in B_\epsilon(A)$, that is, 
$$
\sup_{c\in C}\inf_{c'\geq c\in C}\inf_{a\in A}r(c',a),
\sup_{a\in A}\inf_{a'\geq a\in A}\inf_{c\in C}r(c,a')< 
\inf_{a\in A}r(a,a)+\epsilon.
$$
We must find $C'\subseteq_{\mathrm{fin}}C$ such that 
$C'\in B_\epsilon(A)$.

First suppose $\inf_{a\in A}r(a,a)=0$, and let $k$ be maximum such that $2^{-k}\leq\epsilon$. 
Fix some $c\in C$; then there exists $c'\in C$ and $a'\in A$ such that $c\leq c'$ and $r(c',a')<2^{-k}$. 
In other words, $c'$ is a tree that coincides with $a'$ up to height $k$. 
Let $C'=\downarrow c'=\{ d\mid d\leq c'\}$. Notice that $C'$ is finite. We must show that $H_r^*(C',A)<2^{-k}$: 
on the one hand we have that for all $d\in C'$, $d\leq c'$ and there exists $a'\in A$ such that $r(c',a')<2^{-k}$;
on the other hand, for all $e\in A$, since $A$ is directed, there exists $e'\in A$ such that $a',e\leq e'$, and we have that
$r(e',c')\leq r(a',c')<2^{-k}$. We can thus conclude $H_r^*(C',A)<2^{-k}$.

Suppose new that $\inf_{a\in A}r(a,a)>0$ and let $k$ be such that $\inf_{a\in A}r(a,a)\leq 2^{-k}$. 
We deduce that $A$ is finite and bounded, so by Lemma \ref{lemma:finitedirected} $A=\downarrow \hat a$, for some maximum element $a\in A$ of height at most $k$. In particular, $\inf_{a\in A}r(a,a)=r(\hat a, \hat a)$. 
This implies in particular that $A$ is finite. 
Fix some $c\in C$; then there exists $c'\geq c$ such that $r(c',\hat a)<r(\hat a,\hat a)+\epsilon$. 
Then, taking $C':=\downarrow c'$ one can easily check as before that $H_r^*(C',A)<r(\hat a, \hat a)+\epsilon$.

For the converse direction, suppose $A\ll C$, that is, $A\subseteq_{\mathrm{fin}}C$. We must find $\epsilon>0$ such that $D\in B_\epsilon(C)$ implies $A\subseteq D$. 

Since $A$ is finite and directed, it has a maximum element $\hat a$, and since $A$ is downward closed, this implies $A=\downarrow \hat a$. 
Let $k$ be the height of $\hat a$ and let $0<\epsilon < 2^{-k+1}-2^{-k}$; if $D\in B_\epsilon(C)=B_\epsilon(A)$ 
then any $d\in D$ is below some $d'\in D$ such that $r(d',\hat a)<r(\hat a,\hat a)+\epsilon < 2^{-k+1}$, which implies that $d'$ coincides with $\hat a$ up to height $k$, and thus that $\hat a\leq d'$.
By downward closure, we deduce then that $\hat a\in D$, and thus that $A\subseteq D$. 
\end{proof}

}

\subparagraph{Taylor is an isometry}

%

The Taylor expansion can be presented either as a map $\C T:\Lambda\to \C P(\Lambda^r)$ turning a $\lambda$-term into a set of resource terms, or as a map $\C T^*:\mathrm{Ide}(\C A)\to \C P(\mathrm{Ide}(\Lambda^r))$ turning a B\"ohm tree (i.e.~an infinitary normal $\lambda$-term) 
into a set of infinitary resource terms.

We will show that both maps are isometries, when considering $\Lambda$ with the B\"ohm PM and $\mathrm{Ide}(\C A)$ with the tree PM, and measuring sets of (finite/infinite) resource terms via the lifting $H_r^*$ of the resource partial metric.

Let the $\lambda$-PPM $p_{\mathrm{Taylor}}$ be defined by 
$p_{\mathrm{Taylor}}(M,N)=H_r^*(\mathrm{nf}(\C T(M)), \mathrm{nf}(\C T(N)))$.
As we observed, the $\lambda$-theory generated by equating all terms $M,N$ such that $\mathrm{nf}(\C T(M))=\mathrm{nf}(\C T(N))$ coincides the theory $\C B$. Our result will extend this to the corresponding quantitative theories.


Let us first consider the Taylor expansion of $\lambda$-terms. 

\longversion{
We need to establish a few preliminary lemmas.

\begin{lemma}\label{lemma:zero}
For any infinite B\"ohm tree $T$, $H_r^*(\C T(\alpha), \C T(\alpha))=0$.
\end{lemma}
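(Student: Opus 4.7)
The approach is to show that for every $t \in \C T(\alpha)$ and every $N \in \BB N$, there exists $t' \in \C T(\alpha)$ with $t \preceq t'$ and $h(t') \geq N$. Given such a refinement, I take $u = t'$ in the inner infimum in the definition of $H^*_r$, so that $r(t', u) = 2^{-h(t')} \leq 2^{-N}$; since $N$ is arbitrary, both (symmetric) suprema appearing in $H_r^*(\C T(\alpha), \C T(\alpha))$ vanish and the lemma follows.

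To establish the refinability claim, I fix $t \in \C T(\alpha)$ and $N \in \BB N$. By definition of $\C T(\alpha) = \bigcup\{\C T(A) \mid A \in \alpha\}$, there is a finite partial term $A \in \alpha$ with $t \in \C T(A)$. Since $\alpha$ is an infinite and finitely-branching ideal of partial terms, K\"onig's lemma yields an infinite path in $\alpha$; along this path the finite term $A$ must have a $\bot$ leaf at some position $\pi$, and the subtree of $\alpha$ rooted at $\pi$ is itself infinite, hence contains a finite approximant $B$ with $h(B) \geq N$. Let $A'$ be obtained from $A$ by replacing its $\bot$ at $\pi$ with $B$; then $A' \preceq \alpha$, so $A' \in \alpha$.

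The key observation is that, since $\C T(\bot) = \emptyset$ forces the multiset at every $\bot$-position of $A$ to be the empty multiset in every element of $\C T(A)$, the term $t$ has $\emptyset$ at position $\pi$. A straightforward induction on $B$ then produces a resource term $s \in \C T(B)$ with $h(s) = h(B) \geq N$: follow a maximal-height branch of $B$, placing singleton multisets at each step along the branch and empty multisets elsewhere. Defining $t'$ as $t$ with $\emptyset$ at $\pi$ replaced by $\langle s \rangle$ yields $t' \in \C T(A') \subseteq \C T(\alpha)$, $t \preceq t'$ (we replaced an empty multiset by a non-empty one), and $h(t') \geq h(s) \geq N$.

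The main point requiring care is the bookkeeping of positions across $A$, $t$, $A'$ and $t'$, together with a verification from the inductive definition of $\C T$ that the filling-in operation genuinely lands in $\C T(A')$; once this combinatorial detail is in place, the argument is routine, and the use of the infinitude of $\alpha$ enters only through the existence of an infinite path, which guarantees that some $\bot$-leaf of every finite approximant can be refined to arbitrary depth.
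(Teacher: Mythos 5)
Your proof is correct and follows the same route as the paper: both reduce the claim to the computation $H_r^*(\C T(\alpha),\C T(\alpha))=\sup_{t}\inf_{t'\succeq t}r(t',t')$ and then argue that every $t\in\C T(\alpha)$ admits extensions in $\C T(\alpha)$ of arbitrary height, so the self-distances $2^{-h(t')}$ can be driven to $0$. The only difference is that the paper asserts the refinability claim in one sentence, whereas you spell out the combinatorial construction (locating a $\bot$-leaf of the approximant $A$ on an infinite path of $\alpha$, grafting a deep approximant $B$, and replacing the corresponding empty multiset of $t$ by a singleton from $\C T(B)$), which is a faithful and correct elaboration of what the paper leaves implicit.
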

\begin{proof}
We have that 
$$H_r^*(\C T(\alpha), \C T(\alpha))= \sup_{t\in \C T(\alpha)}\inf_{u\in \C T(\alpha), u\succeq t}r(u,u)=\sup_{t\in \C T(\alpha)}\inf_n2^{-n}=0.$$
Indeed, since $T$ is infinite, for any $t\in \C T(\alpha)$ one can find $u\in \C C(\alpha)$ expanding $u$ arbitrarily deep, so that $r(u,u)$ can be made arbitrarily low.
\end{proof}

The following facts are easily established:
\begin{lemma}\label{lemma:height}
For any B\"ohm tree $\alpha$,
\begin{enumerate}
\item for all $t\in \C T(\alpha)$, $h(t)\leq |\alpha|$;
\item for all $n\leq |\alpha|$, and $t\in \C T(\alpha)$, there exists $t'\in \C T(\alpha)$ extending $t$ such that $h(t')\geq n$. 
\item for all $n\leq |\alpha|$, and $t\in \C T(\alpha)$, if $h(t)=n$, then $t\in \C T(\alpha_n)$.

\end{enumerate}
\end{lemma}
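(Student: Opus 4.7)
The plan is to prove all three items by structural inductions on partial terms and on resource-term heights, relying on the inductive definition $\C T(\bot) = \emptyset$, $\C T(x) = \{x\}$, $\C T(\lambda x. A) = \{\lambda x. t \mid t \in \C T(A)\}$, $\C T(AA') = \{t\langle t_1, \dots, t_m\rangle \mid t \in \C T(A), t_i \in \C T(A')\}$, together with $\C T(\alpha) = \bigcup_{A \in \alpha} \C T(A)$. To avoid notational clash with the depth-$n$ truncation $\alpha_n$, whenever $\alpha$ has the form $\lambda \vec x. y \beta_1 \dots \beta_k$ I will write $\beta_i$ for its $i$-th immediate subtree.

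For item 1, I first establish pointwise that $h(t) \leq |A|$ for every partial term $A$ and every $t \in \C T(A)$, by induction on $A$: the case $A = \bot$ is vacuous, $A = x$ yields $h(x) = 1 = |x|$, and $A = \lambda \vec x. y A_1 \dots A_k$ forces $t = \lambda \vec x. y b_1 \dots b_k$ whose multiset elements have height at most $|A_i|$ by IH, so $h(t) \leq \max_i |A_i| + 1 = |A|$. The claim for $\alpha$ then follows from $|A| \leq |\alpha|$ when $A \in \alpha$. For item 3, I induct on $n = h(t)$: given $t = \lambda \vec x. y b_1 \dots b_k \in \C T(\alpha)$, matching the head forces some $A = \lambda \vec x. y A_1 \dots A_k \in \alpha$ with $t \in \C T(A)$, so each $u$ appearing in a multiset $b_i$ lies in $\C T(A_i) \subseteq \C T(\beta_i)$ and has height at most $n-1 \leq |\beta_i|$; the IH places $u$ in $\C T((\beta_i)_{n-1})$, and reassembling gives $t \in \C T(\alpha_n)$ because $\alpha_n = \lambda \vec x. y (\beta_1)_{n-1} \dots (\beta_k)_{n-1}$.

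Item 2 is the substantive part. From the definition of $r$, one reads off that $t \preceq t'$ amounts to $h(t') \geq h(t)$ together with $t'|_{h(t)} = t$, so $t'$ must match $t$ exactly on the top $h(t)$ layers and may add new structure only strictly below. Given $t \in \C T(A_0)$ with $A_0 \in \alpha$, since $n \leq |\alpha|$ the directedness of the ideal $\alpha$ furnishes some $A \in \alpha$ with $A_0 \preceq A$ and $|A| \geq n$. A separate induction on $r$ shows that any partial term $B$ of height $\geq r$ admits some $s \in \C T(B)$ with $h(s) \geq r$: pick a subterm $B'$ of $B$ with $|B'| \geq r-1$, apply the IH to obtain $s'$, and form $\lambda \vec x. y \emptyset \dots \langle s'\rangle \dots \emptyset$. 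Combining these ingredients, I would locate a position at depth $h(t)$ in $t$ of shape $z \emptyset \dots \emptyset$ whose corresponding subterm of $A$ has height $\geq n - h(t)$, and insert a deep Taylor term from that subtree into one empty slot. The resulting $t'$ lies in $\C T(A) \subseteq \C T(\alpha)$, satisfies $t \preceq t'$ by construction (the modification occurs inside a multiset strictly below depth $h(t)$, so the top $h(t)$ layers are preserved under truncation), and reaches $h(t') = h(t) + h(s) \geq n$.

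The main obstacle is item 2, which requires correctly decoding the order $\preceq$ from the definition of $r$ via the truncation $(-)|_n$, producing arbitrarily deep witnesses inside each non-trivial $\C T(B)$, and threading the construction through $\alpha$ by directedness. Items 1 and 3 are routine bookkeeping on the recursive definitions of $\C T$ and $h$.
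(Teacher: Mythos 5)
The paper states this lemma without proof (``The following facts are easily established''), so there is no official argument to compare against; judged on its own terms, your treatment of items 1 and 3 is fine routine induction (in item 3 you should strengthen the hypothesis to ``$h(u)\leq n$ implies $u\in\C T(\alpha_n)$'' and invoke monotonicity of $\C T$, since the multiset elements only satisfy $h(u)\leq n-1$ rather than $h(u)=n-1$), but item 2 has a genuine gap. Your plan is to locate a position \emph{at depth $h(t)$} in $t$ of shape $z\emptyset\dots\emptyset$ whose corresponding subterm of $A$ has height $\geq n-h(t)$, and graft a deep resource term into one of its empty slots. No such position need exist: take $A=y\,z\,A_2$ with $|A_2|=n-1$ large, and $t=y\langle z\rangle\emptyset\in\C T(A)$. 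Then $h(t)=2$, the unique depth-$2$ position of $t$ is the occurrence of $z$, its corresponding subterm of $A$ is the variable $z$ of height $1<n-2$, and it has no argument slot at all. The deep branch of $A$ hangs below a position of $t$ at depth $1<h(t)$, so restricting attention to the bottom layer of $t$ loses it. The correct argument follows a path of length $\geq n$ in $A$ down through $t$ and is an induction on $n$: at each step the corresponding multiset of $t$ is either empty, in which case you fill it with a singleton $\langle s\rangle$ where $s$ is a deep element of $\C T(B)$ for the corresponding subterm $B$ of $A$ (your auxiliary claim), or nonempty, in which case you must recursively extend one of its existing elements (which lies in $\C T(B)$ with $|B|\geq n-1$) rather than add a new one, since $\preceq$ only rewrites occurrences of $\emptyset$. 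Correspondingly, $h(t')=h(t)+h(s)$ should read $h(t')\geq d+h(s)$ where $d$ is the depth of the modified position.

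A secondary issue: your decoding of $\preceq$ as ``$h(t')\geq h(t)$ and $t'|_{h(t)}=t$'' is the metric order $\leq_r$, not the contextual closure of $\emptyset\prec\langle t_1,\dots,t_k\rangle$ that the paper takes as the definition of $\preceq$; the two disagree on pairs such as $t=x\langle y\emptyset\rangle\emptyset$ and $t'=x\langle y\emptyset\rangle\langle z\rangle$, where $t\preceq t'$ but $t'|_{h(t)}=t'\neq t$. Your graft is a legitimate $\preceq$-step directly from the definition (it rewrites one occurrence of $\emptyset$), so this does not by itself break the construction, but the justification via truncation is unsound, and the distinction matters for the repaired argument above: under $\leq_r$ one would not be allowed to touch an empty multiset strictly above depth $h(t)$, which is exactly what the counterexample requires.
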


\begin{lemma}\label{lemma:erre}
For all B\"ohm trees $\alpha,\beta$, if $\alpha_n\neq \beta_n$, then there exists $t\in \C T(\alpha)$ such that for all  $t'\in \C T(\alpha)$ extending $t$ and for all $u\in \C T(\beta)$, $r(t,u)\geq 2^{-n}$.
\end{lemma}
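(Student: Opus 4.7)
The plan is to exhibit a resource term $t \in \mathcal{T}(\alpha)$ that captures enough of the structure of $\alpha$ to be distinguishable from every element of $\mathcal{T}(\beta)$ at depth at most $n$. The first step is to locate the divergence: since $\alpha_n \neq \beta_n$, there is a minimal depth $k \leq n$ and a path in the trees at which $\alpha$ and $\beta$ have different labels, the discrepancy being either (a) two non-$\bot$ subterms with distinct heads or arities, or (b) one $\bot$ versus a non-$\bot$ subterm.

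Using Lemma \ref{lemma:height}(2), I would then pick $t \in \mathcal{T}(\alpha)$ that follows $\alpha$ along the chosen path with singleton multisets that materialize the head labels of $\alpha$ at depths $0, 1, \dots, k$ wherever $\alpha$ is non-$\bot$, and with empty multisets at all branches outside this path, so as to keep $h(t) \leq k \leq n$. When $\alpha$ is $\bot$ at depth $k$, the multiset at depth $k-1$ going into this position is already forced to be empty by $\mathcal{T}(\bot) = \emptyset$, and I would keep the rest of $t$ shallow in the same way.

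I would then verify $r(t, u) \geq 2^{-n}$ for every $u \in \mathcal{T}(\beta)$. The bound follows most directly from the height control $h(t) \leq n$: since any valid witness $m$ for $r(t, u)$ must satisfy $m \leq h(t) \leq n$, no witness $m > n$ can exist, which gives $r(t, u) \geq 2^{-n}$. A more informative structural argument confirms this in the generic case: for a hypothetical witness $m > n \geq k$ one would need $t|_m = u|_m$, but truncation at height $m > k$ preserves $t$'s $\alpha$-flavoured head at depth $k$, which in case (a) and in the $\alpha$-non-$\bot$ subcase of (b) cannot match any head available to $u|_m$ at the corresponding position (because the head labels or arities disagree, or because $u$ would be forced to have an empty multiset at a place where $t$ does not).

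The hard part will be the subcase of (b) where $\alpha$ is $\bot$ at depth $k$ while $\beta$ is not, since there $t$ is forced to carry an empty multiset that any $u \in \mathcal{T}(\beta)$ may mimic by also choosing emptiness. In that situation the structural distinction is invisible to the resource terms on the $\alpha$ side, and the argument must rely entirely on the height control $h(t) \leq n$ to obtain $r(t, u) \geq 2^{-h(t)} \geq 2^{-n}$ directly from the definition of $r$. Reconciling the structural view with the height-based bound, so that the same chosen $t$ works uniformly across all subcases, is the chief delicate point of the proof.
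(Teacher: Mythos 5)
The central gap is that your argument never establishes the bound for the \emph{extensions} $t'$ of $t$, and that is the entire content of the lemma. The occurrence of $r(t,u)$ in the statement is evidently a typo for $r(t',u)$: otherwise the quantification over $t'$ is vacuous and the claim is trivially true for any $t$ of height $\leq n$ and any $\beta$ whatsoever (by P1, $r(t,u)\geq r(t,t)=2^{-h(t)}\geq 2^{-n}$), without ever using $\alpha_n\neq\beta_n$. The way the lemma is invoked in the isometry theorem confirms this: what is extracted there is $\inf_{t'\succeq t}\inf_{u}r(t',u)\geq 2^{-n}$. Your main argument is exactly the trivial one ($h(t)\leq n$ forces every witness $m$ to satisfy $m\leq n$), and it does not transfer to $t'$: when $\alpha$ is infinite, Lemma \ref{lemma:height}(2) produces extensions $t'$ of arbitrary height, for which height control gives nothing. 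The paper's proof is built precisely to survive the passage to $t'$: it takes $t$ to be a \emph{maximal} element of $\mathcal{T}(\alpha_n)$ of height exactly $n$ (so that every non-$\bot$ position of $\alpha_n$ is materialized), which guarantees $t'|_n=t$ for every extension $t'\in\mathcal{T}(\alpha)$; a hypothetical witness $m>n$ for $r(t',u)$ would then force $t=t'|_n=u|_n$, contradicting the fact that $u|_n$ cannot reproduce the head variable or arity of $\alpha$ at the position where $\alpha_n$ and $\beta_n$ disagree. Your ``thin'' path-following witness could be made to work in the head/arity case, but only by running the structural comparison on $t'|_m$ versus $u|_m$, not on $t$.

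On the other hand, your isolation of the subcase where $\alpha$ carries $\bot$ at the disagreement position while $\beta$ does not is a sharp observation, and the difficulty there is worse than you suggest: every $t'\in\mathcal{T}(\alpha)$ has an empty multiset that elements of $\mathcal{T}(\beta)$ can mimic, and when this is the only discrepancy one gets $\mathcal{T}(\alpha)\subseteq\mathcal{T}(\beta)$, so taking $u=t'$ with $h(t')$ large yields $r(t',u)=2^{-h(t')}<2^{-n}$ --- no choice of $t$ on the $\alpha$ side can work. The paper's proof silently excludes this case by assuming the disagreement concerns ``the name of some head variable and/or the number of its arguments''; the lemma is only usable on the side of the pair carrying the non-$\bot$ label at the disagreement position (which suffices for the theorem, since $H_r^*$ is a maximum of two suprema). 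Your instinct that this is the delicate point is right, but retreating to the height bound does not close it, because that bound simply does not apply to $t'$.
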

\begin{proof}
Since $\alpha_n$ is defined, $\alpha$ is non-empty. Suppose $\alpha_n$ and $\beta_n$ differ at a certain position $\pi$ at height $n$ by $R$ (with $R$ being either the name of some head variable and/or the number of its arguments).
Take $t\in \C T(\alpha_n)$. By Lemma \ref{lemma:height} 2.~and 1.~there exists $t^\sharp\in \C T(\alpha_n)$ extending $t$ and such that $h(t)=n$. We can furthermore suppose that $t^\sharp$ is \emph{maximal}, that is, that for any $t'\in \C T(\alpha_n)$ extending $t^\sharp$, $t'=t^\sharp$.

Let now $t'\in \C T(\alpha)$ extending $t^\sharp$ and let 
 $u\in\C T(\beta)$; if $h(u)\leq n$, then $r(t', u)\geq 2^{-n}$ so we are done; suppose then $h(u)> n$ and suppose $r(t', u)=2^{-m}<2^{-n}$ (so that $m>n$). Then $t'|_m = u|_m$ which implies in particular $t^\sharp=t'|_n =u|_n$.
 However, this is impossible since the maximality of $t^\sharp$ implies that $t^\sharp$ coincides with $\alpha$ at position $\pi$, while $u|_n$ cannot coincide with $\alpha$ at position $\pi$. We conclude then that $r(t', u)\geq 2^{-n}$. 
\end{proof}

We can now proceed to the main argument.

}

\begin{restatable}{theorem}{isometryone}\label{prop:isometry}
$\C T: (\Lambda,p_{\mathrm{B\"ohm}})  \longrightarrow ( \C P(\Lambda^r), H_r^*)$ is an isometry.
Thus, $p_{\mathrm{Taylor}}= p_{\mathrm{B\"ohm}}$.
\end{restatable}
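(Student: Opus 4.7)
The plan is to use the commutation theorem to reduce the claim to an equality between $H_r^*$ on Taylor expansions of B\"ohm trees and the tree partial metric on those trees. First I would invoke $\mathrm{nf}(\C T(M)) = \C T(\C B(M))$ to rewrite $p_{\mathrm{Taylor}}(M,N)$ as $H_r^*(\C T(\C B(M)), \C T(\C B(N)))$. Setting $\alpha = \C B(M)$, $\beta = \C B(N)$ and $m = \mathrm{div}(\alpha,\beta)$, the problem becomes showing $H_r^*(\C T(\alpha), \C T(\beta)) = 2^{-m}$. When $\alpha = \beta$ the identity follows either from Lemma \ref{lemma:zero} (infinite case, both sides zero) or from the observation that $H_r^*(\C T(\alpha),\C T(\alpha)) = \inf_{t \in \C T(\alpha)} r(t,t) = 2^{-|\alpha|}$ for finite $\alpha$, using the height bounds of Lemma \ref{lemma:height}.

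For the upper bound $H_r^*(\C T(\alpha),\C T(\beta)) \leq 2^{-m}$, the key observation is that $\alpha_m = \beta_m$ forces $\C T(\alpha_m) = \C T(\beta_m) \subseteq \C T(\alpha) \cap \C T(\beta)$. Given $t \in \C T(\alpha)$, I would use parts 1 and 2 of Lemma \ref{lemma:height} to produce an extension $t' \succeq t$ of height exactly $m$, and then part 3 to place $t'$ in $\C T(\alpha_m) = \C T(\beta_m)$. Taking $u = t'$ yields $r(t',u) = r(t',t') = 2^{-m}$; the symmetric half is analogous.

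The main obstacle is the lower bound $H_r^*(\C T(\alpha),\C T(\beta)) \geq 2^{-m}$, because applying Lemma \ref{lemma:erre} with $n = m+1$ only delivers the weaker $H_r^* \geq 2^{-(m+1)}$, so a finer case analysis on the shape of the divergence is needed. If $|\alpha| = m$ (symmetrically $|\beta| = m$), then Lemma \ref{lemma:height}.1 forces $h(t) \leq m$ for every $t \in \C T(\alpha)$, so the matching depth in any $r(t',u)$ with $t' \succeq t$ is capped by $m$, giving $r(t',u) \geq 2^{-m}$ unconditionally. Otherwise $|\alpha|,|\beta| > m$ and $\alpha_{m+1} \neq \beta_{m+1}$, so the two trees disagree at some position $\pi$ at depth $m+1$; I would pick $t$ in whichever of $\C T(\alpha), \C T(\beta)$ can reach $\pi$ (the rule $\C T(\bot) = \emptyset$ determines the choice when one of the trees has $\bot$ there), designing $t$ so that $h(t) = m+1$ and each multiset along the path to $\pi$ contains at least one copy, terminating at $\pi$ with the head variable from the corresponding tree. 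Any extension $t' \succeq t$ preserves this feature at $\pi$, while any $u$ from the opposite Taylor expansion either fails to reach $\pi$ or carries a different label there, so that $t'|_{m+1}$ and $u|_{m+1}$ disagree at $\pi$ and the matching depth stays strictly below $m+1$, forcing $r(t',u) \geq 2^{-m}$. Combining the two bounds closes the argument.
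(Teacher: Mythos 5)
Your proposal is correct and follows essentially the same route as the paper's proof: reduce via the commutation theorem to comparing $H_r^*(\C T(\alpha),\C T(\beta))$ with $p_{\mathrm{tree}}(\alpha,\beta)$, obtain the upper bound by locating common extensions inside $\C T(\alpha_m)=\C T(\beta_m)$, and obtain the lower bound by splitting between the boundary case $|\alpha|=m$ or $|\beta|=m$ (the paper's cases (1)/(2)) and a genuine disagreement at depth $m+1$ (the paper's case (3), handled there via Lemma \ref{lemma:erre}). Your explicit re-derivation of the $2^{-m}$ bound from a divergence at depth $m+1$ --- rather than citing Lemma \ref{lemma:erre} as stated, which literally yields only $2^{-(m+1)}$ --- is precisely the strengthening that the paper's invocation of that lemma silently assumes, so on that point you are, if anything, more careful than the text.
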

\longversion{
\begin{proof}
Let $\alpha,\beta\in \C B$. 
If $p_{\mathrm{tree}}(\alpha,\beta)=0$, then $\alpha=\beta$ is an infinite tree and by Lemma \ref{lemma:zero}, $H_r^*(\C T(\alpha), \C T(\beta))=
H_r^*(\C T(\alpha), \C T(\alpha))=0= p_{\mathrm{tree}}(\alpha,\beta)$.

Suppose that either $\alpha$ or $\beta$ is the empty tree, so that $p_{\mathrm{tree}}(\alpha,\beta)=1=2^0$. Say $\alpha$ is the empty tree, so $\C T(\alpha)=\emptyset$, and thus $H_{p_{\mathrm{tree}}}^*(\C T(\alpha), \C T(\beta))=H_{p_{\mathrm{tree}}}^*(\emptyset, \C T(\beta))=1=p_{\mathrm{tree}}(\alpha,\beta)$. This follows from the fact that the sup over the empty set is $0$, while the inf over the empty set is $1$, so their $\max$ is $1$.

Suppose now that $\alpha$ and $\beta$ are non-empty and $p_{\mathrm{tree}}(\alpha,\beta)=2^{-n}$. Then $\alpha_n=\beta_n$ while three possible cases occur: (1) $\alpha_{n+1}$ is defined while $\beta_{n+1}$ is not defined, (2) $\beta_{n+1}$ is defined while $\alpha_{n+1}$ is not defined, (3) $\alpha_{n+1}$  and $\beta_{n+1}$ 
are both defined and $\alpha_{n+1}\neq \beta_{n+1}$.

In all three cases we claim that:
\begin{itemize}
\item[i.] for any $t\in \C T(\alpha)$ there exists $t'\in \C T(\alpha)$ extending $t$ such that $h(t')\geq n$ and $t'\in \C T(\beta)$;
\item[ii.] for any $u\in \C T(\beta)$ there exists $u'\in \C T(\beta)$ extending $u$ such that $h(u')\geq n$ and $u'\in \C T(\alpha)$.

\end{itemize}
Claim i.~is proved as follows: by applying Lemma \ref{lemma:height} 2., one has that for any $t\in \C T(\alpha)$ there exists $t'\in \C T(\alpha)$ extending $t$ such that $h(t')\geq n$. Then, by applying Lemma \ref{lemma:height} 3., $t'\in \C T(\alpha_n)=\C T(\beta_n)$, when $t'\in \C T(\beta)$ holds as well. 
Claim ii.~is proved in a similar way.

Now, claims i.~and ii.~imply that
\begin{align*}
\sup_{t\in \C T(\alpha)}\inf_{t'\in \C T(\alpha), t'\succeq t}\inf_{u\in \C T(\beta)}r(t',u) &\leq 2^{-n},\\
\sup_{u\in \C T(\beta)}\inf_{u'\in \C T(\beta), u'\succeq u}\inf_{t\in \C T(\alpha)}r(t,u') &\leq 
 2^{-n}.
\end{align*}
We can thus conclude that $H_{p_{\mathrm{tree}}}^*(\C T(\alpha), \C T(\beta))\leq p_{\mathrm{tree}}(\alpha,\beta)$. 

%
%

To prove the converse direction we consider the three cases separately.

In case (1) we have that $|\beta|=n$. By Lemma \ref{lemma:height} 1.~together with Lemma \ref{lemma:height} 2., we have that for any $u\in \C T(\beta)$ there exists $u'\in \C T(\beta)$ extending $u$ and such that $h(u)=n$. 
Since $\beta$ is not the empty tree, $\C T(\beta)$ is non-empty and we deduce then that there exists $u\in \C T(\beta)$ such that $h(u)=n$. Since $|\beta|=n$, from Lemma \ref{lemma:height} 1.~we deduce that for any $u'\in \C T(\beta)$ extending $u$, it must be $u'=u$, and thus $r(u',u')= 2^{-n}$. This implies then
$$
\sup_{u\in \C T(\beta)}\inf_{u'\in \C T(\beta), u'\succeq u}\inf_{t\in \C T(\alpha)}r(t,u')\geq 
\sup_{u\in \C T(\beta)}\inf_{u'\in \C T(\beta), u'\succeq u}r(u',u')\geq 2^{-n}.
$$
From this we deduce $H_{p_{\mathrm{tree}}}^*(\C T(\alpha), \C T(\beta))\geq p_{\mathrm{tree}}(\alpha,\beta)$. 

For case (2) we can argue similarly to case (1).

Let us now consider case (3). By Lemma \ref{lemma:erre} we deduce that there exists $t\in \C T(\alpha)$ such that for all $t'\in \C T(\alpha)$ extending $t$ and for all $u\in \C T(\beta)$, $r(t',u)\geq 2^{-n}$. This claim implies then
$$
\sup_{t\in \C T(\alpha)}\inf_{t'\in \C T(\alpha), t'\succeq t}\inf_{u\in \C T(\beta)}r(t',u)\geq 2^{-n},
$$
and thus $H_{p_{\mathrm{tree}}}^*(\C T(\alpha), \C T(\beta))\geq p_{\mathrm{tree}}(\alpha,\beta)$, as desired.
\end{proof}

}

The results above states that, whenever the B\"ohm trees of two terms $M,N$ differ at height $n$, then,  
by moving higher and higher in their normalized Taylor expansions $\C T(M)$ and $\C T(N)$, one can find resource terms that differ precisely at height $n$, and can do no better.  

%

Let us now consider the map $\C T^*$. Since $\mathrm{Ide}(\Lambda^r)$ is quantified by $H_r^*$, we can consider its lifting $H^*_{H_r^*}$ to $\C P(\mathrm{Ide}_p(\Lambda^r))$. In fact, the computation of $H^*_{H_r^*}$ leads us back to $H_r^*$:

\begin{restatable}{lemma}{hausdorfftwo}
For all $\lambda$-terms $M,N$,
$H^*_r(\C T(M), \C T(N))= H^*_{H_r^*}(\C T^*(M), \C T^*(N)).$
\end{restatable}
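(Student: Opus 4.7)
The plan is to prove the equality by two separate inequalities, exploiting the embedding $t \mapsto {\downarrow}t$ from $\C T(M) \subseteq \Lambda^r$ into $\C T^*(M) = \mathrm{Ide}(\C T(M))$. This is well-defined because $\C T(M)$ is closed downward under $\preceq$, so each principal ideal ${\downarrow}t = \{s \in \Lambda^r : s \preceq t\}$ belongs to $\C T^*(M)$.

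For the direction $H^*_{H_r^*}(\C T^*(M), \C T^*(N)) \geq H^*_r(\C T(M), \C T(N))$, I would use these principal ideals as witnesses in the outer supremum on the left. Fix $t \in \C T(M)$ and take $\mathbf{a} = {\downarrow}t$; any $\mathbf{a}' \supseteq \mathbf{a}$ in $\C T^*(M)$ contains $t$. Combined with the inclusions $\mathbf{a}' \subseteq \C T(M)$ and $\mathbf{b} \subseteq \C T(N)$ for every $\mathbf{b} \in \C T^*(N)$, choosing $s = t$ in the outer supremum of the first component of the $\max$ defining $H^*_r(\mathbf{a}', \mathbf{b})$ yields
\[
H^*_r(\mathbf{a}', \mathbf{b}) \;\geq\; \inf_{t' \succeq t \in \mathbf{a}',\, u \in \mathbf{b}} r(t', u) \;\geq\; \inf_{t' \succeq t \in \C T(M),\, u \in \C T(N)} r(t', u).
\]
Taking the infimum over $\mathbf{a}', \mathbf{b}$ and the supremum over $t$ (together with the symmetric computation) closes this direction.

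For the reverse inequality, let $L = H^*_r(\C T(M), \C T(N))$. Given $\mathbf{a} \in \C T^*(M)$ and $\epsilon > 0$, the strategy is the following: for every $t \in \mathbf{a}$ the definition of $L$ supplies some $t^\dagger \succeq t$ in $\C T(M)$ and $u_t \in \C T(N)$ with $r(t^\dagger, u_t) \leq L + \epsilon$; one then sets $\mathbf{a}'$ (resp.~$\mathbf{b}$) to be the downward closure in $\C T(M)$ (resp.~$\C T(N)$) of the family $\{t^\dagger : t \in \mathbf{a}\}$ (resp.~$\{u_t : t \in \mathbf{a}\}$). Since $t \preceq t^\dagger$, we have $\mathbf{a} \subseteq \mathbf{a}'$; and a direct check at each $s \in \mathbf{a}'$ and $u \in \mathbf{b}$, using that $s \preceq t^\dagger$ (resp.~$u \preceq u_t$) for some $t \in \mathbf{a}$, yields the bound $H^*_r(\mathbf{a}', \mathbf{b}) \leq L + \epsilon$. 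Letting $\epsilon \to 0$ concludes.

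The main obstacle is ensuring that $\mathbf{a}'$ and $\mathbf{b}$ are actually directed, and hence genuinely lie in the ideal completions. I would address this by selecting the witnesses \emph{monotonically}, so that $t_1 \preceq t_2$ in $\mathbf{a}$ forces $t_1^\dagger \preceq t_2^\dagger$ and $u_{t_1} \preceq u_{t_2}$. This coherence is achievable thanks to the inductive, tree-structured definition of $\C T$, and concretely because the witnesses used in the proof of Theorem~\ref{prop:isometry} (extensions of resource terms up to the depth at which $\C B(M)$ and $\C B(N)$ agree) can be chosen compatibly along $\preceq$. With monotone witnesses in hand, the directedness of $\mathbf{a}$ transports immediately to $\mathbf{a}'$ and $\mathbf{b}$, and the construction concludes.
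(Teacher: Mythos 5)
Your first inequality, $H^*_{H^*_r}(\mathcal T^*(M),\mathcal T^*(N))\geq H^*_r(\mathcal T(M),\mathcal T(N))$ via the principal ideals ${\downarrow}t$, is exactly the paper's argument for that direction. The problem is the converse inequality. You build $\mathbf a'$ and $\mathbf b$ as downward closures of an entire family of witnesses $\{t^\dagger\}_{t\in\mathbf a}$ and $\{u_t\}_{t\in\mathbf a}$, and, as you yourself note, these sets are ideals only if the selection $t\mapsto(t^\dagger,u_t)$ can be made monotone along $\preceq$. That monotonicity is the load-bearing step and you do not prove it: ``achievable thanks to the inductive, tree-structured definition of $\mathcal T$'' is a plausibility claim, and making it precise would mean reopening the combinatorics of Theorem~\ref{prop:isometry} to show that the depth-$n$ completions of $t_1\preceq t_2$ inside $\mathcal T(M)$, together with their matched terms in $\mathcal T(N)$, can be chosen compatibly. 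Without this, $\mathbf a'$ and $\mathbf b$ need not be directed, hence need not belong to $\mathcal T^*(M)$ and $\mathcal T^*(N)$, and the bound on $H^*_{H^*_r}$ does not follow.

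The paper's proof avoids this machinery entirely by using a \emph{single} witness pair. Given $A\in\mathcal T^*(M)$, pick $t\in A$, then $t'\succeq t$ in $\mathcal T(M)$ and $u\in\mathcal T(N)$ with $r(t',u)\leq\epsilon$; set $B={\downarrow}u$ (automatically a principal ideal of $\mathcal T(N)$) and let $A'$ be an ideal of $\mathcal T(M)$ containing $A$ and $t'$. Then for \emph{every} $v\in A'$, directedness of $A'$ yields some $v'\in A'$ with $v'\succeq v$ and $v'\succeq t'$, and antitonicity of $r$ in each argument (a consequence of P1 and $\leq_r\;=\;\preceq$) gives $r(v',u)\leq r(t',u)\leq\epsilon$; the other component of the $\max$ is handled by taking $w'=u$ and $v=t'$. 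So one pair of witnesses bounds $H^*_r(A',B)$ over the whole ideal, and no monotone-selection lemma is needed. I recommend replacing your family-of-witnesses construction by this one-witness argument. The only residual point --- which the paper itself also leaves implicit --- is the existence of an ideal $A'$ of $\mathcal T(M)$ containing both $A$ and the chosen $t'$; if you want a fully watertight write-up, that is the place to spend your effort, not on the monotone selection.
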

\longversion{

\begin{proof}
Suppose that $H^*_r(\C T(M), \C T(N))\leq \epsilon$. In other words, that 
\begin{align*}
\sup_{t\in \C T(M)}\inf_{t'\geq t\in \C T(M)}\inf_{u\in \C T(N)}r(t',u)& \leq \epsilon,\\
\sup_{u\in \C T(N)}\inf_{u'\geq u\in \C T(N)}\inf_{t\in \C T(M)}r(t,u')& \leq \epsilon.
\end{align*}
Let $A\in \mathrm{Ide}(\C T(M))$ and let $t\in A$; then there exists $t'\geq t$ and $u\in \C T(N)$ such that $r(t',u)\leq \epsilon$; let $A'\supseteq A$ be such that $A'\subseteq \C T(M)$ and $A'$ contains $t'$; moreover, let $B=u^{\downarrow}\in \mathrm{Ide}(\C T(N))$; we can check that 
$H_r^*(A',B)\leq \epsilon$: on the one side we have 
$$
\forall v\in A' \exists v'\geq v\in A \exists w\in B \ r(v',w)\leq \epsilon,
$$
and
$$
\forall w\in B \exists w'\geq w\in B \exists v\in A' \ r(v,w')\leq \epsilon.
$$
We have thus proved that 
$$\sup_{A\in \C T^*(M)}\inf_{A'\supseteq A\in \C T^*(M)}\inf_{B\in \C T^*(N)}H_r^*(A',B)\leq \epsilon.$$ 
By a dual argument we can then show
$$\sup_{B\in \C T^*(N)}\inf_{B'\supseteq B\in \C T^*(N)}\inf_{A\in \C T^*(M)}H_r^*(A,B')\leq \epsilon.$$
We can thus conclude that $H^*_{H_r^*}(\C T(M), \C T(N))\leq \epsilon$ and thus that 
$H^*_r(\C T(M), \C T(N))\geq H_{H_r^*}(\C T^*(M), \C T^*(N))$.

Suppose now $ H_{H_r^*}(\C T^*(M), \C T^*(N))\leq \epsilon$, that is 
\begin{align*}
\sup_{A\in \C T^*(M)}\inf_{A'\supseteq A\in \C T^*(M)}\inf_{B\in \C T^*(N)}H_r^*(A',B)&\leq \epsilon,\\
\sup_{B\in \C T^*(N)}\inf_{B'\supseteq B\in \C T^*(N)}\inf_{A\in \C T^*(M)}H_r^*(A,B')&\leq \epsilon.
\end{align*}
Let $t\in \C T(M)$, letting $A=t^\downarrow$, there exists $A'\supseteq A\in \C T^*(M)$ and 
$B\in \C T^*(N)$ such that 
$$
\forall v\in A'\exists v'\geq v\in A' \exists w\in B \ r(v',w)\leq \epsilon.
$$
Since $t\in A'$ we deduce then 
$$
\exists t'\geq t\in \C T(M) \exists u\in \C T(N) \ r(t',u)\leq \epsilon,
$$
that is
$$
\inf_{t'\geq t\in \C T(M)} \inf_{u\in \C T(N)}  r(t',u)\leq \epsilon.
$$
By a dual argument, we can show that for all $u\in \C T(N)$,
$$
\inf_{u'\geq u\in \C T(N)} \inf_{t\in \C T(M)}  r(t,u')\leq \epsilon.
$$
and we can conclude $H_r^*(\C T(M), \C T(N))\leq \epsilon$, whence \\
$H^*_r(\C T(M), \C T(N))\leq H_{H_r^*}(\C T^*(M), \C T^*(N))$.
\end{proof}

}

Thanks to Proposition \ref{prop:isometry}, this immediately produces:
\begin{restatable}{theorem}{isometrytwo}
$\C T^*:(\mathrm{Ide}(\C A),p_{\mathrm{tree}})\longrightarrow (\C P(\mathrm{Ide}(\Lambda_r)), H^*_{H^*_r})$ is an isometry.
\end{restatable}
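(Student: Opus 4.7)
The plan is to observe that the isometry follows by chaining the preceding lemma with Proposition \ref{prop:isometry}, once one notices that both results, although stated for $\lambda$-terms $M,N$, actually hold verbatim for arbitrary B\"ohm trees $\alpha,\beta \in \mathrm{Ide}(\C A)$.

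First, I would re-examine the proof of Proposition \ref{prop:isometry}. Although its statement is about $\lambda$-terms, the argument is conducted entirely at the level of the associated B\"ohm trees $\alpha = \C B(M)$ and $\beta = \C B(N)$: the case analysis on the value of $p_{\mathrm{tree}}(\alpha,\beta)$, together with the supporting Lemmas \ref{lemma:zero}--\ref{lemma:erre}, is already formulated for arbitrary B\"ohm trees, using only $\C T(\alpha) = \bigcup\{\C T(A)\mid A\in \alpha\}$ and $\C T(\bot)=\emptyset$. Hence the same argument delivers, for all $\alpha,\beta \in \mathrm{Ide}(\C A)$,
\[
H^*_r(\C T(\alpha), \C T(\beta)) = p_{\mathrm{tree}}(\alpha, \beta).
\]

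Second, I would re-read the preceding lemma with the same eye: its proof manipulates $\C T(M)$ as an arbitrary set of resource terms and $\C T^*(M)=\mathrm{Ide}(\C T(M))$ as its ideal completion, never invoking anything specific to $\lambda$-terms. The same manipulations therefore yield, for all $\alpha,\beta \in \mathrm{Ide}(\C A)$,
\[
H^*_r(\C T(\alpha), \C T(\beta)) = H^*_{H^*_r}(\C T^*(\alpha), \C T^*(\beta)).
\]
Chaining the two displayed equalities produces the desired isometry,
\[
H^*_{H^*_r}(\C T^*(\alpha), \C T^*(\beta)) = p_{\mathrm{tree}}(\alpha, \beta).
\]

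I do not anticipate any substantial obstacle: the only task is to check that neither argument relies on a property specific to $\lambda$-terms that could fail for a general ideal of partial terms. Since the Taylor expansion is extended to partial terms by $\C T(\bot)=\emptyset$ and then to B\"ohm trees pointwise as a union, and since $\C T^*$ is defined as an ideal completion of that union, both extensions are purely formal. The only mild care needed is to note that when $\alpha$ is infinite the self-distance $H_r^*(\C T(\alpha),\C T(\alpha))$ vanishes (as in Lemma \ref{lemma:zero}), which matches $p_{\mathrm{tree}}(\alpha,\alpha)=0$, so the boundary cases of $p_{\mathrm{tree}}$ are handled correctly.
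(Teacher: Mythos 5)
Your proposal is correct and matches the paper's own proof, which is exactly the one-line combination of Proposition \ref{prop:isometry} with the preceding lemma. Your additional care in checking that both arguments apply verbatim to arbitrary ideals $\alpha,\beta\in\mathrm{Ide}(\C A)$ (not only to B\"ohm trees of actual $\lambda$-terms) is a point the paper leaves implicit, and it is the right thing to verify.
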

\longversion{
\begin{proof}
Follows from Proposition \ref{prop:isometry} and the Lemma above.
\end{proof}
}
 
\shortversion{
\begin{remark}
As shown in detail in the long version, we can obtain an isometry also if we choose to measure B\"ohm trees and Taylor expansions using the PMs from Examples \ref{ex:altbohm} and \ref{ex:altpower}. Indeed, for any enumeration $(A_n)_n$ of partial terms, one can define an enumeration $(r_n)_n$ of resource terms and weights $\theta_n$ such that
$\C T:(\C B,p^{\C B}_{(A_n)_n,\frac{1}{2^n}})\longrightarrow (\C P(\Lambda_r), p^{\C P}_{(r_n)_n,\theta_n})$ is an isometry.
\end{remark}
}

\longversion{
We can obtain an isometry also if we choose to measure B\"ohm trees and Taylor expansions using the PMs from Examples \ref{ex:altbohm} and \ref{ex:altpower}. Indeed, for any enumeration $(A_n)_n$ of partial terms, one can define an enumeration $(r_n)_n$ of resource terms and weights $\theta_n$ such that
$\C T:(\C B,p^{\C B}_{(A_n)_n,\frac{1}{2^n}})\longrightarrow (\C P(\Lambda_r), p^{\C P}_{(r_n)_n,\theta_n})$ is an isometry.

\begin{restatable}{theorem}{isometrythree}
For any enumeration $(A_n)_n$ of partial terms there exists an enumeration $(r_n)_n$ of resource terms such that
$\C T:(\mathrm{Ide}(\C A),p^{\C B}_{(A_n)_n})\longrightarrow (\C P(\Lambda_r), p^{\C P}_{(r_n)_n})$ is an isometry.
\end{restatable}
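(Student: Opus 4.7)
The plan is to align, one partial term at a time, the contribution of each $A_n$ in the definition of $p^{\C B}_{(A_n)_n}$ with the contributions of a family of resource terms whose \emph{skeletons} equal $A_n$, yielding an exact equality of distances.

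First, I would define, for each resource term $r \in \Lambda_r$, its skeleton $\hat r$ as the minimum (with respect to $\preceq$) of the set $\{A \in \C A \mid r \in \C T(A)\}$, when this minimum exists. Since Taylor expansion is monotone on partial terms ($A \preceq A'$ implies $\C T(A) \subseteq \C T(A')$, as empty multisets are always permitted), this set is upward-closed in $\C A$. A direct induction on $r$ shows that the minimum exists exactly when each multiset argument occurring in $r$ is either empty or composed of resource terms that all share the same (recursively defined) skeleton; call such $r$ \emph{uniform}. Whenever $\hat r$ is defined, one has $\hat r \neq \bot$, since $\C T(\bot) = \emptyset$. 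From the identity $\C T(\alpha) = \bigcup\{\C T(A) \mid A \in \alpha\}$ together with the downward-closure of ideals, one then derives the pivotal equivalence
\[
 r \in \C T(\alpha) \iff r \text{ is uniform and } \hat r \in \alpha.
\]

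Next, I would construct the enumeration $(r_n)_n$ and weights $\theta_n$ as follows. For each index $n$ with $A_n \neq \bot$, the fibre $S_{A_n} := \{r \in \Lambda_r \mid \hat r = A_n\}$ is countable and non-empty (a canonical element is obtained by traversing $A_n$ and replacing each argument position by a singleton multiset, and each $\bot$-argument by $\langle\rangle$). I would enumerate each $S_{A_n}$ and distribute the weight $\frac{1}{2^n}$ across its elements via strictly positive weights summing to $\frac{1}{2^n}$ (say $\theta^{(n)}_k := \frac{1}{2^{n+k}}$ if $S_{A_n}$ is infinite, or a uniform distribution if finite). Concatenating these families diagonally yields the desired enumeration $(r_n)_n$. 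This list covers exactly the uniform resource terms; non-uniform ones, which by the equivalence above never belong to any $\C T(\alpha)$, are simply omitted, which is legitimate since the statement only requires an indexed sequence of resource terms rather than surjectivity onto $\Lambda_r$.

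The isometry then reduces to a direct computation, obtained by grouping the defining sum of $p^{\C P}$ according to skeletons:
\begin{align*}
p^{\C P}_{(r_n)_n,\theta_n}(\C T(\alpha),\C T(\beta))
&= \sum_{n} \theta_n\,[r_n \notin \C T(\alpha) \ \text{or}\  r_n \notin \C T(\beta)] \\
&= \sum_{n : A_n \neq \bot} \Big(\sum_{r \in S_{A_n}}\theta_r\Big)\,[A_n \notin \alpha \ \text{or}\  A_n \notin \beta] \\
&= \sum_{n} \tfrac{1}{2^n}\,[A_n \notin \alpha \ \text{or}\  A_n \notin \beta]
\ =\ p^{\C B}_{(A_n)_n}(\alpha,\beta),
\end{align*}
where the penultimate equality uses $\sum_{r \in S_{A_n}}\theta_r = \frac{1}{2^n}$, and the final one uses that $\bot$ belongs to every ideal, so indices with $A_n = \bot$ contribute zero on both sides. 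The main subtle point is precisely the treatment of non-uniform resource terms: assigning them any positive weight would yield a constant offset in all resource-side distances with no analogue on the B\"ohm side; the construction circumvents this by leaving them out of the enumeration entirely.
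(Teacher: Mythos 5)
Your proof is correct, and it follows the same overall strategy as the paper's own argument --- fibre the resource terms over the partial terms, spread the weight $\frac{1}{2^n}$ of $A_n$ over its fibre, and regroup the defining sum of $p^{\C P}$ --- but the way you choose the fibres is a genuine, and welcome, refinement. The paper assigns to $A_n$ the whole of $\C T(A_n)=\{t\mid t\vartriangleleft A_n\}$, and a resource term with empty multiset arguments lies in $\C T(A)$ for many different $A$ (e.g.\ $x\langle\rangle\in\C T(xM)$ for every partial term $M$); consequently the paper's regrouping step, which replaces the condition $t_{A_n,m}\notin\C T(\alpha)$ by $A_n\not\leq\alpha$, is not literally valid as stated: for $\alpha=\{\bot,x\bot\}$ and $A_n=xy$, the term $x\langle\rangle$ does belong to $\C T(\alpha)$ even though $A_n\notin\alpha$. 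By instead attaching each resource term to the \emph{minimum} partial term whose expansion contains it, you make the correspondence functional, and your pivotal equivalence $r\in\C T(\alpha)\Leftrightarrow(\text{$r$ uniform and }\hat r\in\alpha)$ --- which crucially uses that ideals are downward closed --- is exactly what turns the regrouping into an identity. Two minor remarks. First, your syntactic characterisation of uniformity is too strong: the components of a multiset argument need not share the same skeleton, it suffices that their skeletons be compatible, in which case the minimum is their join (e.g.\ $x\langle y\langle\rangle,\,y\langle z\rangle\rangle$ has skeleton $x(yz)$); since your construction only ever uses the order-theoretic definition of $\hat r$, this does not damage the argument, but the side claim should be corrected. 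Second, your $(r_n)_n$ omits the non-uniform resource terms, so it is not an enumeration of all of $\Lambda^r$, and $p^{\C P}_{(r_n)_n}$ is then, strictly speaking, only a partial pseudo-metric on $\C P(\Lambda^r)$; the paper's own enumeration has the same feature, so this is a shared infelicity rather than a gap specific to your proof.
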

%
\begin{proof}
Let us define a relation $t \vartriangleleft A$ between resource $\lambda$-terms and partial terms via the following rules:
$$
\AXC{}
\UIC{$x\vartriangleleft x$}
\DP
\qquad 
\AXC{$t\vartriangleleft A$}
\UIC{$\lambda x.t \vartriangleleft \lambda x. A$}
\DP
\qquad
\AXC{$\Big( t^i_1,\dots, t^i_{m_i}\vartriangleleft M_i \Big )_{i=1,\dots, n}$}
\UIC{$x[\vec t^1]\dots [\vec t^n]\vartriangleleft xM_1\dots M_n $}
\DP
$$
One can then check, by induction on partial terms, that $\C T(A)=\{t\mid t\vartriangleleft A\}$.

For any term $A$, let us fix an enumeration $t_{A,n}$ of the resource terms $t$ such that $t\vartriangleleft A$. 
Given a bijection $q: \BB N\to \BB N^2$, define an enumeration $v_n:= t_{q_1(n), q_2(n)}$ and a weighting $\theta_n:= \frac{1}{2^{q_1(n)+q_2(n)}}$. 
Indeed, we have that 
\begin{align*}
\sum_{n=1}^\infty\theta_n& =
\sum_{n=1}^\infty \frac{1}{2^{q_1(n)+q_2(n)}}=
\sum_{n=1}^\infty \frac{1}{2^{q_1(n)}}\cdot \frac{1}{2^{q_2(n)}}
= \sum_{m,n=1}^\infty\frac{1}{2^{m}}\cdot \frac{1}{2^{n}}
\\
&=
\sum_{m=1}^\infty  \frac{1}{2^m}\sum_{n=1}^\infty \frac{1}{2^n}=\sum_{m=1}^\infty  \frac{1}{2^m}=1.
\end{align*}

Recalling that $\C T(\alpha)=\bigcup_{A\leq \alpha}\C T(A)=\bigcup_{A\leq \alpha}\{ t \mid t \vartriangleleft A\}$, we have that $t\in \C T(\alpha)$ iff $t \vartriangleleft A$ for some $A\leq \alpha$. We thus have 
\begin{align*}
p_{v_n, \theta_n}^{\C P}(\C T(\alpha),\C T(\beta))&=
\sum\left \{
\frac{1}{2^n}\cdot \frac{1}{2^m} \ \Big \vert \ 
t_{A_n, m}\notin \C T(\alpha) \text{ or } t_{A_n, m}\notin \C T(\beta)
\right\}\\
&=
\sum\left \{
\frac{1}{2^n}\cdot \frac{1}{2^m} \ \Big \vert \ 
A_n\not\leq \alpha \text{ or } A_n\not\leq \beta
\right\}\\
&=
\sum\left \{
\frac{1}{2^n}\cdot\sum_{m=1}^\infty \frac{1}{2^m} \ \Big \vert \ 
A_n\not\leq \alpha \text{ or } A_n\not\leq \beta
\right\}\\
&=
\sum\left \{
\frac{1}{2^n} \ \Big \vert \ 
A_n\not\leq \alpha \text{ or } A_n\not\leq \beta
\right\}= p^{\C B}_{A_n,\frac{1}{2^n}}(\alpha,\beta).
\end{align*}
\end{proof}

}

\section{Conclusions}\label{sec7}

\subparagraph{Related Work}

Since their introduction in \cite{matthews}, the literature on partial metrics has grown vast, and comprises both theoretical investigations \cite{Schellekens2004, Myronyk2022, Aydi2012, Jager2018} and connections with theoretical computer science \cite{Valero2011}, notably domain theory \cite{Bukatin1997, ONeill, Schellekens2003, Smyth2006}. Recently, an elegant categorical description of partial metric spaces as quantaloid-enriched categories has been proposed \cite{Stubbe2018}, as well as  a characterization of the partial metric spaces that are \emph{exponentiable} (in a category whose morphisms are the non-expansive - or 1-Lipschitz - functions and not, as in this paper, all continuous functions).
While, as we have said, the metrizability of Scott domains via partial metrics has been well known since \cite{Bukatin1997, ONeill}, not much is found in this vast literature about the specific use of partial metrics for studying the topological semantics of the $\lambda$-calculus or, more generally, of higher-order programming languages. 

Beyond partial metrics, the literature on higher-order program metrics has been growing vast as well. As the category Met of metric spaces and non-expansive functions is \emph{not} cartesian closed, the literature has focused on two complementary directions: on the one hand, restrict to cartesian closed \emph{sub}-categories of Met, like \emph{ultra}-metric spaces \cite{Escardo1999}, or \emph{injective} metric spaces \cite{Clementino2006}; \cite{Honsell2022} adapts Mardare's et al.'s quantitative equational theories \cite{Plotk} to higher-order languages, introducing a notion of \emph{quantitative $\lambda$-theory} (which, contrarily to $\lambda$-PPMs, require contexts to be non-expansive).

On the other hand, restrict attention to \emph{linear} \cite{Dahlqvist2022, Hoshino2023} or \emph{graded} \cite{Reed_2010, Gaboardi2017} $\lambda$-calculi, which can be modeled in Met. Notably, \cite{Gaboardi2017} introduces \emph{metric CPO}s, that is CPOs endowed with \emph{sub}-continuous metrics (i.e.~satisfying $d(\lim_n x_n,\lim_n y_n)\leq\epsilon$ whenever $d(x_n,y_n)\leq \epsilon$ holds for all $n$). This is a weaker condition than quantifiability, since the limits in the metric need not coincide with the CPO limits. 

Differential logical relations \cite{dallago, dallago2} have been recently introduced as a generalized approach to program metrics, relaxing usual Lipschitz, and even continuity, conditions. Notably, related models based on \emph{generalized} partial metric spaces are studied in \cite{Geoffroy2020, PistoneLICS}. In such models distances need not be positive reals but are computed on an arbitrary \emph{quantale}.

Finally, several works have investigated infinitary $\lambda$-calculi defined via a \emph{metric completion} of ordinary terms \cite{Kennaway, Mazza2013}.
These approaches are based on ultrametrics akin to the tree metric considered in this paper for B\"ohm trees. Recall that ordinary metric spaces are topologically Hausdorff, contrarily to the spaces considered in this paper. 
The metric completion of partial metric spaces is discussed in \cite{Ge2015, Stubbe2018}.

%
%
%
%
%
%
%
%
%
%

\subparagraph{Future Work}

While this paper focuses on metric counterparts for well-known techniques, our results suggest several potential developments.

The metrizability of Scott domains suggests to study models based on Lipschitz-continuous, rather than just continuous, functions, as is standard in the literature on linear $\lambda$-calculi. For instance, considering the B\"ohm metric, a non-expansive context should respect {depth}: if two terms $M,N$ coincide up to depth $n$, then $\TT C[M]$ and $\TT C[N]$ must also coincide up to depth $n$. This suggests connections with recent work on \emph{stratified} notions of program equivalence \cite{Arrial2024}.

Sections 4 and 6 introduced several methods to lift a partial metric to the powerset; using such liftings, as we suggest at several places, our results based on Scott domains could be adapted to the relational model, in which $\lambda$-terms are interpreted via relations $R\in\C P(A\times B)$.

While we here just considered the untyped $\lambda$-calculus and basic cartesian closed structure (i.e.~finite products and exponentials), the applicative distances introduced in this paper should adapt well also to dependently typed languages; moreover, our results on the Hausdorff lifting suggests that other monadic liftings (e.g.~the probability monad) could be considered. 
At the same time, the metric account of RealPCF suggested at in Example \ref{ex:realpcf} could be explored in more depth, for instance considering the behavior of operators like the parallel if or even program derivatives.

Finally, the fact that several partial metrics considered in this paper produce computable distances between finite approximants 
 suggests to explore potential connections with quantitative type systems related to the relational and topological semantics, like those based on non-idempotent intersection types \cite{Bucciarelli2017}.

\bibliography{main.bib}

\begin{thebibliography}{10}

\bibitem{Amadio1998}
Roberto~M. Amadio and Pierre-Louis Curien.
\newblock {\em Domains and Lambda-Calculi}.
\newblock Cambridge Tracts in Theoretical Computer Science. Cambridge
  University Press, 1998.

\bibitem{Arrial2024}
Victor Arrial, Giulio Guerrieri, and Delia Kesner.
\newblock Genericity through stratification.
\newblock In {\em Proceedings of the 39th Annual ACM/IEEE Symposium on Logic in
  Computer Science}, LICS '24, New York, NY, USA, 2024. Association for
  Computing Machinery.
\newblock \href {https://doi.org/10.1145/3661814.3662113}
  {\path{doi:10.1145/3661814.3662113}}.

\bibitem{Aydi2012}
Hassen Aydi, Mujahid Abbas, and Calogero Vetro.
\newblock {Partial Hausdorff metric and Nadler's fixed point theorem on partial
  metric spaces}.
\newblock {\em Topology and its Applications}, 159(14):3234--3242, 2012.
\newblock URL:
  \url{https://www.sciencedirect.com/science/article/pii/S0166864112002787},
  \href {https://doi.org/10.1016/j.topol.2012.06.012}
  {\path{doi:10.1016/j.topol.2012.06.012}}.

\bibitem{Gaboardi2017}
Arthur Azevedo~de Amorim, Marco Gaboardi, Justin Hsu, Shin-ya Katsumata, and
  Ikram Cherigui.
\newblock A semantic account of metric preservation.
\newblock In {\em Proceedings of the 44th ACM SIGPLAN Symposium on Principles
  of Programming Languages}, POPL 2017, page 545?556, New York, NY, USA, 2017.
  Association for Computing Machinery.
\newblock \href {https://doi.org/10.1145/3009837.3009890}
  {\path{doi:10.1145/3009837.3009890}}.

\bibitem{Barbarossa2019}
Davide Barbarossa and Giulio Manzonetto.
\newblock Taylor subsumes scott, berry, kahn and plotkin.
\newblock {\em Proc. ACM Program. Lang.}, 4(POPL), dec 2019.
\newblock \href {https://doi.org/10.1145/3371069} {\path{doi:10.1145/3371069}}.

\bibitem{Baren95}
Henk Barendregt.
\newblock {\em Lambda calculus, its syntax and semantics}.
\newblock North-Holland, 1985.

\bibitem{Manzo2007}
Antonio Bucciarelli, Thomas Ehrhard, and Giulio Manzonetto.
\newblock Not enough points is enough.
\newblock In Jacques Duparc and Thomas~A. Henzinger, editors, {\em Computer
  Science Logic}, pages 298--312, Berlin, Heidelberg, 2007. Springer Berlin
  Heidelberg.

\bibitem{Bucciarelli2017}
Antonio Bucciarelli, Delia Kesner, and Daniel Ventura.
\newblock Non-idempotent intersection types for the lambda-calculus.
\newblock {\em Logic Journal of the IGPL}, 25(4):431--464, 2017.
\newblock \href {https://doi.org/10.1093/jigpal/jzx018}
  {\path{doi:10.1093/jigpal/jzx018}}.

\bibitem{matthews}
Michael Bukatin, Ralph Kopperman, Steve Matthews, and Homeira Pajoohesh.
\newblock Partial metric spaces.
\newblock {\em American Mathematical Monthly}, 116:708--718, 10 2009.
\newblock \href {https://doi.org/10.4169/193009709X460831}
  {\path{doi:10.4169/193009709X460831}}.

\bibitem{Bukatin1997}
Michael~A. Bukatin and Joshua~S. Scott.
\newblock Towards computing distances between programs via {S}cott domains.
\newblock In Sergei Adian and Anil Nerode, editors, {\em Logical Foundations of
  Computer Science}, pages 33--43, Berlin, Heidelberg, 1997. Springer Berlin
  Heidelberg.

\bibitem{Clementino2006}
Maria~Manuel Clementino and Dirk Hofmann.
\newblock Exponentiation in v-categories.
\newblock {\em Topology and its Applications}, 153(16):3113 -- 3128, 2006.
\newblock Special Issue: Aspects of Contemporary Topology.
\newblock URL:
  \url{http://www.sciencedirect.com/science/article/pii/S0166864105001902},
  \href {https://doi.org/10.1016/j.topol.2005.01.038}
  {\path{doi:10.1016/j.topol.2005.01.038}}.

\bibitem{DalLago2017}
Rapha{\"e}lle Crubill{\'e} and Ugo Dal~Lago.
\newblock Metric reasoning about $\lambda$-terms: The general case.
\newblock In Hongseok Yang, editor, {\em Programming Languages and Systems},
  pages 341--367, Berlin, Heidelberg, 2017. Springer Berlin Heidelberg.

\bibitem{Dahlqvist2022}
Fredrik Dahlqvist and Renato Neves.
\newblock {An Internal Language for Categories Enriched over Generalised Metric
  Spaces}.
\newblock In Florin Manea and Alex Simpson, editors, {\em 30th EACSL Annual
  Conference on Computer Science Logic (CSL 2022)}, volume 216 of {\em Leibniz
  International Proceedings in Informatics (LIPIcs)}, pages 16:1--16:18,
  Dagstuhl, Germany, 2022. Schloss Dagstuhl -- Leibniz-Zentrum f{\"u}r
  Informatik.
\newblock URL:
  \url{https://drops.dagstuhl.de/entities/document/10.4230/LIPIcs.CSL.2022.16},
  \href {https://doi.org/10.4230/LIPIcs.CSL.2022.16}
  {\path{doi:10.4230/LIPIcs.CSL.2022.16}}.

\bibitem{dallago2}
Ugo Dal~Lago and Francesco Gavazzo.
\newblock Differential logical relations part {II:} increments and derivatives.
\newblock In Gennaro Cordasco, Luisa Gargano, and Adele~A. Rescigno, editors,
  {\em Proceedings of the 21st Italian Conference on Theoretical Computer
  Science, Ischia, Italy, September 14-16, 2020}, volume 2756 of {\em {CEUR}
  Workshop Proceedings}, pages 101--114. CEUR-WS.org, 2020.
\newblock URL: \url{http://ceur-ws.org/Vol-2756/paper\_10.pdf}.

\bibitem{dallago}
Ugo Dal~Lago, Francesco Gavazzo, and Akira Yoshimizu.
\newblock Differential logical relations, part {I:} the simply-typed case.
\newblock In {\em 46th International Colloquium on Automata, Languages, and
  Programming, {ICALP} 2019, July 9-12, 2019, Patras, Greece}, pages
  111:1--111:14, 2019.
\newblock \href {https://doi.org/10.4230/LIPIcs.ICALP.2019.111}
  {\path{doi:10.4230/LIPIcs.ICALP.2019.111}}.

\bibitem{Honsell2022}
Ugo Dal~Lago, Furio Honsell, Marina Lenisa, and Paolo Pistone.
\newblock {On Quantitative Algebraic Higher-Order Theories}.
\newblock In Amy~P. Felty, editor, {\em 7th International Conference on Formal
  Structures for Computation and Deduction (FSCD 2022)}, volume 228 of {\em
  Leibniz International Proceedings in Informatics (LIPIcs)}, pages 4:1--4:18,
  Dagstuhl, Germany, 2022. Schloss Dagstuhl -- Leibniz-Zentrum f{\"u}r
  Informatik.
\newblock URL:
  \url{https://drops.dagstuhl.de/entities/document/10.4230/LIPIcs.FSCD.2022.4},
  \href {https://doi.org/10.4230/LIPIcs.FSCD.2022.4}
  {\path{doi:10.4230/LIPIcs.FSCD.2022.4}}.

\bibitem{Hoshino2023}
Ugo Dal~Lago, Naohiko Hoshino, and Paolo Pistone.
\newblock {On the Lattice of Program Metrics}.
\newblock In Marco Gaboardi and Femke van Raamsdonk, editors, {\em 8th
  International Conference on Formal Structures for Computation and Deduction
  (FSCD 2023)}, volume 260 of {\em Leibniz International Proceedings in
  Informatics (LIPIcs)}, pages 20:1--20:19, Dagstuhl, Germany, 2023. Schloss
  Dagstuhl -- Leibniz-Zentrum f{\"u}r Informatik.
\newblock URL:
  \url{https://drops.dagstuhl.de/entities/document/10.4230/LIPIcs.FSCD.2023.20},
  \href {https://doi.org/10.4230/LIPIcs.FSCD.2023.20}
  {\path{doi:10.4230/LIPIcs.FSCD.2023.20}}.

\bibitem{ER}
Thomas Ehrhard and Laurent Regnier.
\newblock The differential lambda-calculus.
\newblock {\em Theoretical Computer Science}, 309:1--41, 2003.

\bibitem{Regnier2006}
Thomas Ehrhard and Laurent Regnier.
\newblock B{\"o}hm trees, krivine's machine and the taylor expansion of
  lambda-terms.
\newblock In Arnold Beckmann, Ulrich Berger, Benedikt L{\"o}we, and John~V.
  Tucker, editors, {\em Logical Approaches to Computational Barriers}, pages
  186--197, Berlin, Heidelberg, 2006. Springer Berlin Heidelberg.

\bibitem{Regnier2008}
Thomas Ehrhard and Laurent Regnier.
\newblock Uniformity and the taylor expansion of ordinary lambda-terms.
\newblock {\em Theoretical Computer Science}, 403(2):347--372, 2008.
\newblock URL:
  \url{https://www.sciencedirect.com/science/article/pii/S0304397508004064},
  \href {https://doi.org/10.1016/j.tcs.2008.06.001}
  {\path{doi:10.1016/j.tcs.2008.06.001}}.

\bibitem{Escardo1998}
Thomas Erker, Mart{\'\i}n~H{\"o}tzel Escard{\'o}, and Klaus Keimel.
\newblock The way-below relation of function spaces over semantic domains.
\newblock {\em Topology and its Applications}, 89(1):61--74, 1998.
\newblock Domain Theory.
\newblock URL:
  \url{https://www.sciencedirect.com/science/article/pii/S0166864197002265},
  \href {https://doi.org/10.1016/S0166-8641(97)00226-5}
  {\path{doi:10.1016/S0166-8641(97)00226-5}}.

\bibitem{Escardo1996}
Martin~H{\"o}tzel Escard{\'o}.
\newblock Pcf extended with real numbers.
\newblock {\em Theoretical Computer Science}, 162(1):79--115, 1996.
\newblock URL:
  \url{https://www.sciencedirect.com/science/article/pii/0304397595002502},
  \href {https://doi.org/10.1016/0304-3975(95)00250-2}
  {\path{doi:10.1016/0304-3975(95)00250-2}}.

\bibitem{Escardo1999}
Mart\'in~H\"otzen Escard\'o.
\newblock A metric model of {PCF}.
\newblock Unpublished note presented at the Workshop on Realizability Semantics
  and Applications, June 1999. Available at the author's webpage., 1999.

\bibitem{Escardo2001}
M.H. Escardo and R.~Heckmann.
\newblock Topologies on spaces of continuous functions.
\newblock {\em Topology Proceedings}, 26(2):545--564, 2001-2002.

\bibitem{Gavazzo2018}
Francesco Gavazzo.
\newblock Quantitative behavioural reasoning for higher-order effectful
  programs: Applicative distances.
\newblock In {\em Proceedings of the 33rd Annual ACM/IEEE Symposium on Logic in
  Computer Science}, LICS '18, pages 452--461, New York, NY, USA, 2018.
\newblock \href {https://doi.org/10.1145/3209108.3209149}
  {\path{doi:10.1145/3209108.3209149}}.

\bibitem{Ge2015}
Xun Ge and Shou Lin.
\newblock Completions of partial metric spaces.
\newblock {\em Topology and its Applications}, 182:16--23, 2015.
\newblock URL:
  \url{https://www.sciencedirect.com/science/article/pii/S0166864114004647},
  \href {https://doi.org/10.1016/j.topol.2014.12.013}
  {\path{doi:10.1016/j.topol.2014.12.013}}.

\bibitem{Geoffroy2020}
Guillaume Geoffroy and Paolo Pistone.
\newblock A partial metric semantics of higher-order types and approximate
  program transformations.
\newblock In {\em Computer Science Logic 2021 (CSL 2021)}, volume 183 of {\em
  LIPIcs--Leibniz International Proceedings in Informatics}, pages 35:1--35:18,
  2021.

\bibitem{Stubbe2018}
Dirk Hofmann and Isar Stubbe.
\newblock Topology from enrichment: the curious case of partial metrics.
\newblock {\em Cahiers de Topologie et G\'eom\'etrie DIff\'erentielle
  Cat\'egorique, LIX}, 4:307--353, 2018.

\bibitem{Jager2018}
Gunther J\"ager and T.~M.~G. Ahsanullah.
\newblock Characterization of quantale-valued metric spaces and quantale-valued
  partial metric spaces by convergence.
\newblock {\em Applied General Topology}, 19(1):129--144, 2018.

\bibitem{Kennaway}
J.R. Kennaway, J.W. Klop, M.R. Sleep, and F.J. de~Vries.
\newblock Infinitary lambda calculus.
\newblock {\em Theoretical Computer Science}, 175, 1997.

\bibitem{ManzoTesi}
Giulio Manzonetto.
\newblock {\em Models and theories of lambda calculus}.
\newblock PhD thesis, Univ.~Ca'Foscari and Univ.~Paris Diderot, 2008.

\bibitem{Plotk}
Radu Mardare, Prakash Panangaden, and Gordon Plotkin.
\newblock Quantitative algebraic reasoning.
\newblock In {\em Proceedings of the 31st Annual ACM/IEEE Symposium on Logic in
  Computer Science (LICS 2016)}. IEEE Computer Society, 2016.

\bibitem{Mazza2013}
Damiano Mazza.
\newblock Non-linearity as the metric completion of linearity.
\newblock In Masahito Hasegawa, editor, {\em Typed Lambda Calculi and
  Applications}, pages 3--14, Berlin, Heidelberg, 2013. Springer Berlin
  Heidelberg.

\bibitem{Myronyk2022}
Volodymyr Mykhaylyuk and Vadym Myronyk.
\newblock Metrizability of partial metric spaces.
\newblock {\em Topology and its Applications}, 308:107949, 2022.
\newblock URL:
  \url{https://www.sciencedirect.com/science/article/pii/S0166864121003667},
  \href {https://doi.org/10.1016/j.topol.2021.107949}
  {\path{doi:10.1016/j.topol.2021.107949}}.

\bibitem{ONeill}
S.J. O'Neill.
\newblock Partial metrics, valuations and domain theory.
\newblock {\em Annals of the New York Academy of Sciences}, 806:304--315, 1996.

\bibitem{PistoneLICS}
Paolo Pistone.
\newblock On generalized metric spaces for the simply typed $\lambda$-calculus.
\newblock In {\em Proceedings LICS 2021}, pages 1--14. IEEE Computer Society,
  2021.

\bibitem{Reed_2010}
Jason Reed and Benjamin~C. Pierce.
\newblock Distance makes the types grow stronger.
\newblock {\em Proceedings of the 15th ACM SIGPLAN international conference on
  Functional programming - ICFP '10}, 2010.
\newblock URL: \url{http://dx.doi.org/10.1145/1863543.1863568}, \href
  {https://doi.org/10.1145/1863543.1863568}
  {\path{doi:10.1145/1863543.1863568}}.

\bibitem{Schellekens2004}
M.~P. Schellekens.
\newblock The correspondence between partial metrics and semivaluations.
\newblock {\em Theoretical Computer Science}, 315(1):135--149, May 2004.
\newblock \href {https://doi.org/10.1016/j.tcs.2003.11.016}
  {\path{doi:10.1016/j.tcs.2003.11.016}}.

\bibitem{Schellekens2003}
M.P. Schellekens.
\newblock A characterization of partial metrizability: domains are
  quantifiable.
\newblock {\em Theoretical Computer Science}, 305(1):409--432, 2003.
\newblock Topology in Computer Science.
\newblock URL:
  \url{https://www.sciencedirect.com/science/article/pii/S0304397502007053},
  \href {https://doi.org/10.1016/S0304-3975(02)00705-3}
  {\path{doi:10.1016/S0304-3975(02)00705-3}}.

\bibitem{Smyth2006}
Michael~B. Smyth.
\newblock The constructive maximal point space and partial metrizability.
\newblock {\em Annals of Pure and Applied Logic}, 137(1):360--379, 2006.
\newblock URL:
  \url{https://www.sciencedirect.com/science/article/pii/S0168007205000771},
  \href {https://doi.org/10.1016/j.apal.2005.05.032}
  {\path{doi:10.1016/j.apal.2005.05.032}}.

\bibitem{Valero2011}
Oscar Valero, Salvador Romaguera, and Pedro Tirado.
\newblock Complete partial metric spaces have partially metrizable
  computational models.
\newblock {\em International Journal of Computer Mathematics}, 89(3), 2011.

\bibitem{Waskiewicz2001}
Pawel Waszkiewicz.
\newblock Distance and measurement in domain theory.
\newblock {\em Electronic Notes in Theoretical Computer Science}, 45, 2001.

\bibitem{chaudhuri}
Edwin Westbrook and Swarat Chaudhuri.
\newblock A semantics for approximate program transformations.
\newblock \url{https://arxiv.org/abs/1304.5531}, 2013.

\end{thebibliography}

\longversion{
\appendix

\section{The Model $D_\infty$}
In this appendix we shortly recall the construction of the domain $D_\infty$ and we establish a few properties which are used in the proof of Theorem \ref{th:dscott}.

Let us fix some domain $D$.
For each $n$, let the domain $D_{n}$ be defined by $D_{0}:= D$  and $D_{n+1}:= \C C(D_{n}, D_{n})$.
We define injection-projection pairs $(i_{n},j_{n}): D_{n} \to_{ip}D_{n+1}$ by
$i_{0}(x):= \lambda y.x $, $ j_{0}(f)= f(\bot) $, $i_{n+1}(f)= i_{n}\circ f\circ j_{n}$ and $j_{n+1}(g)= j_{n}\circ g\circ i_{n}$.

The domain $D_{\infty}$ is defined as 
$
D_{\infty}=\left \{ x \in \prod_{n = 0}^{\infty} D_{n} \  \Big \vert \  \forall n\in \BB N, x_{n}=j_{n}(x_{n+1})\right\}
$

For all $m<n$ (say $n=m+q$) we define maps $i_{mn}:D_m\to D_n$ and
$j_{nm}:D_n\to D_m$ via
\begin{align*}
	i_{m(m+1)}&:= i_m &   j_{(m+1)m}&:= j_m \\
	i_{m(m+q+2)}&:= i_{(m+1)(m+1+q)}\circ i_m &
	j_{(m+q+2)m}&:=
	j_{(m+1)m}\circ j_{(m+q+2)(m+1)}
\end{align*}
The behavior of the maps $i_{nm}$ and $j_{mn}$ is illustrated in Fig.~\ref{fig:diagram}. 
Moreover, we can define maps
$i_{n\infty}:D_n \to D_\infty$ and 
$j_{\infty n}:D_\infty \to D_n$ via
\begin{align*}
	(i_{n\infty}(x))_k&:=
	\begin{cases}
		j_{nk}(x)
		&
		n=k+q
		\\
		x
		&
		k=n
		\\
		i_{nk}(x)
		& k= n+q
	\end{cases}
\end{align*}
and
$
j_{\infty n}(x):= x_n.
$
Fig.~\ref{fig:diagram2} illustrates the behavior of the maps $i_{kn},j_{kn}$ and $i_{k\infty}, j_{\infty k}$ with respect to some element $x\in D_k$. 

$D_\infty$, together with the maps $i_{n\infty}$, can be shown to be the direct limit of the system generated by the maps $i_{mn}$.  
The diagram in Fig.~\ref{fig:diagram} illustrates the situation.
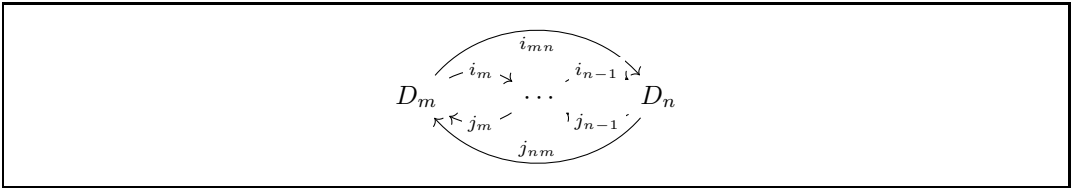
\begin{figure}
\fbox{
\begin{minipage}{0.98\textwidth}
$$
\begin{tikzcd}[ampersand replacement=\&]
	D_m \arrow[r, "i_{m}" description, bend left] \arrow[rr, "i_{mn}"', bend left=49] \& \cdots \arrow[r, "i_{n-1}" description, bend left] \arrow[l, "j_{m}" description, bend left] \& D_n \arrow[l, "j_{n-1}" description, bend left] \arrow[ll, "j_{nm}"', bend left=49]
\end{tikzcd}
$$
\end{minipage}
}
\caption{Illustration of the maps $i_{nm}$ and $j_{mn}$. Recall that a $i$ arrow followed by a $j$ arrow produces the identity, while a $j$ arrow followed by a $i$ arrow produces something smaller than the identity.}
\label{fig:diagram}
\end{figure}

\begin{figure}
\fbox{
\begin{minipage}{0.98\textwidth}
$$
\begin{tikzcd}[ampersand replacement=\&, scale cd=0.7]
	x_\infty = \& \cdots\& j_{km}(x) \in D_{m} \& \cdots \& x \in D_k \arrow[d, "i_{k\infty}", bend left] \& \cdots \& i_{kn}(x) \in D_{n}\& \cdots \\
	\&        \&                     \&        \& D_{\infty} \arrow[u, "j_{\infty k}", bend left]           \&        \&                     \&       
\end{tikzcd}
$$
\end{minipage}
}
\caption{The maps $i_{kn}, j_{kn}, i_{k\infty}, j_{\infty k}$ can be used to move an element $x\in D_k$ across all spaces $D_n$ for $n\in \BB N\cup\{\infty\}$.}
\label{fig:diagram2}
\end{figure}
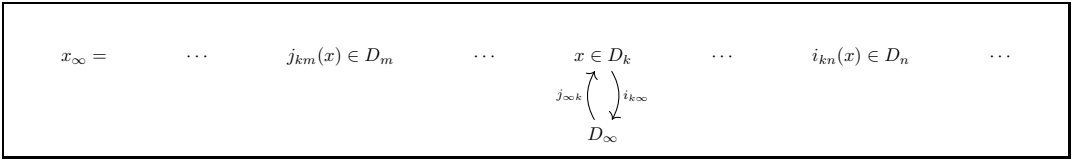

Observe that in $D_{\infty}$ we have that $x\ll y$ iff there exists $N\in \BB N$ such that 
$
x \leq \bigvee_{n\leq N}i_{n\infty}(y_{n})\ll y$.
This is an immediate consequence of the remark that $y=\bigvee_{n=0}^{\infty}i_{n\infty}(y_{n})$.

We now establish a few properties of this construction.

\begin{lemma}\label{lemmall1}
	For all $n\in \BB N$ and $x,y\in X_n$, $x\ll y$ implies $i_n(x)\ll i_n(y)$.
\end{lemma}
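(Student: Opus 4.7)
The plan is to use the defining properties of the injection-projection pair $(i_n,j_n)$, namely $j_n\circ i_n = \mathrm{id}_{D_n}$ and $i_n\circ j_n\leq \mathrm{id}_{D_{n+1}}$, together with the continuity of $j_n$. Assume $x\ll y$ in $D_n$, and take any directed subset $\Delta\subseteq D_{n+1}$ with $i_n(y)\leq \bigvee \Delta$. We want to find some $d\in \Delta$ with $i_n(x)\leq d$.

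First I would push the inequality down to $D_n$ by applying $j_n$. Since $j_n$ is Scott-continuous (being a morphism between Scott domains), $j_n(\Delta)$ is directed and $j_n(\bigvee \Delta)=\bigvee j_n(\Delta)$. Combined with $j_n\circ i_n=\mathrm{id}_{D_n}$, this yields
\[
y \;=\; j_n(i_n(y)) \;\leq\; j_n\!\left(\bigvee \Delta\right) \;=\; \bigvee j_n(\Delta).
\]
Now the hypothesis $x\ll y$ produces some $d\in \Delta$ with $x\leq j_n(d)$.

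Finally, I would push this back up by applying the monotone map $i_n$ and using $i_n\circ j_n\leq \mathrm{id}_{D_{n+1}}$:
\[
i_n(x) \;\leq\; i_n(j_n(d)) \;\leq\; d,
\]
which exhibits the required $d\in\Delta$ and proves $i_n(x)\ll i_n(y)$.

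There is no real obstacle here: the argument is a routine consequence of the ip-pair equations plus Scott-continuity of $j_n$, and it is a general fact that the embedding component of any embedding/projection pair between dcpos preserves $\ll$. The only point to keep in mind is to use continuity of $j_n$ (not of $i_n$) to commute the supremum with the function, which is what makes the asymmetry between $i_n$ and $j_n$ harmless for this statement.
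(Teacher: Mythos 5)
Your argument is correct and is essentially identical to the paper's own proof: both push the supremum down via $j_n\circ i_n=\mathrm{id}$ and continuity of $j_n$, extract $d\in\Delta$ with $x\leq j_n(d)$ from $x\ll y$, and conclude with $i_n(x)\leq i_n(j_n(d))\leq d$. The only cosmetic difference is that the paper verifies directedness of $j_n(\Delta)$ by an explicit (slightly roundabout) argument, whereas you correctly attribute it to monotonicity of $j_n$.
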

\begin{proof}
	Let $D\subseteq X_{n+1}$ be a directed set such that 
	$i_n(y)\leq \bigvee D$. 
	Let $jD=\{ j_n(d)\mid d\in D\}\subseteq X_n$. Let us show that $jD$ is directed: 
	given $j_n(d_1), j_n(d_2)\in jD$, $i_n(j_n(d_1))\leq d_1$ and 
	$i_n(j_n(d_2))\leq d_2$; now, since $D$ is directed, there exists $d_3\in D$ with 
	$d_1,d_2\leq d_3$; observe that this implies
	$j_n(d_1), j_n(d_2)\leq j_n(d_3)$. Since $j_n(d_3)\in jD$ we can conclude that $jD$ is directed.
	
	From $i_n(y)\leq \bigvee D$ we deduce
	that 
	$y=j_n(i_n(y))\leq j_n(\bigvee D)= \bigvee jD$, where in the last step we use the fact that $ j_n$ is continuous. Now, from $x\ll y$ it follows that $ x\leq j_n(d)$ holds for some $d\in D$, and thus
	$i_n(x)\leq i_n(j_n(d))\leq d$. We can thus conclude that 
	$i_n(x)\ll i_n(y)$.
\end{proof}

\begin{lemma}\label{lemmall22}
	For all $n\in \BB N$, $x\in X_n$ and $y\in X_\infty$, 
	$x \ll y_n$ $\To$ $i_{n\infty}(x)\ll y$.
\end{lemma}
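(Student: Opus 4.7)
The plan is to mimic the structure of the proof of Lemma \ref{lemmall1}, but transported to the infinite limit, using the injection-projection pair $(i_{n\infty}, j_{\infty n})$ in place of $(i_n,j_n)$. Concretely, I will take an arbitrary directed set $D \subseteq X_\infty$ with $y \leq \bigvee D$ and produce some $d \in D$ above $i_{n\infty}(x)$.

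First, I would apply $j_{\infty n}$ to the inequality $y \leq \bigvee D$. Since $j_{\infty n}$ is continuous (it is just the $n$-th projection on the product), this yields $y_n = j_{\infty n}(y) \leq j_{\infty n}(\bigvee D) = \bigvee j_{\infty n}(D)$. The set $j_{\infty n}(D) = \{ d_n \mid d \in D\}$ is directed in $X_n$, by exactly the same argument as in Lemma \ref{lemmall1}: given $d,d' \in D$, pick $d'' \in D$ with $d,d' \leq d''$ and use monotonicity to get $d_n, d'_n \leq d''_n$.

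Now the hypothesis $x \ll y_n$ kicks in: from $y_n \leq \bigvee j_{\infty n}(D)$ we obtain some $d \in D$ with $x \leq d_n = j_{\infty n}(d)$. Applying the monotone map $i_{n\infty}$ gives $i_{n\infty}(x) \leq i_{n\infty}(j_{\infty n}(d))$. The final step is to invoke the inequality $i_{n\infty} \circ j_{\infty n} \leq \mathrm{id}_{X_\infty}$, which is the direct-limit extension of the injection-projection identities $i_k \circ j_k \leq \mathrm{id}$ (and which can be verified pointwise from the definitions of $i_{n\infty}$ and $j_{\infty n}$ recalled above Fig.~\ref{fig:diagram2}). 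This yields $i_{n\infty}(j_{\infty n}(d)) \leq d$, whence $i_{n\infty}(x) \leq d$, proving $i_{n\infty}(x) \ll y$.

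The only non-routine point is the retraction inequality $i_{n\infty} \circ j_{\infty n} \leq \mathrm{id}_{X_\infty}$, which however follows directly from the definitions: for $z \in X_\infty$ and $k \geq n$, $(i_{n\infty}(j_{\infty n}(z)))_k = i_{nk}(z_n) = i_{nk}(j_{kn}(z_k)) \leq z_k$ by iterating the finite-stage retraction inequality, and for $k \leq n$ we get equality since $j_{kn} \circ i_{nk} = \mathrm{id}$. So the argument is essentially mechanical once the pointwise formula for $i_{n\infty}$ is unfolded.
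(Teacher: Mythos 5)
Your proof is correct and follows essentially the same route as the paper's: project the directed set to level $n$ via the continuous map $j_{\infty n}$, use $x\ll y_n$ to find $d\in D$ with $x\leq d_n$, and lift back up -- the paper just verifies $i_{n\infty}(x)\leq d$ componentwise where you package the same computation as the retraction inequality $i_{n\infty}\circ j_{\infty n}\leq \mathrm{id}$. (One cosmetic nit: for $k\leq n$ the equality $(i_{n\infty}(z_n))_k=z_k$ follows from the coherence condition $z_k=j_{nk}(z_n)$ defining $X_\infty$, not from $j_{kn}\circ i_{nk}=\mathrm{id}$, but the claim itself is right.)
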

\begin{proof}
	
	Suppose $x\ll y_n$. Let us first show that 
	$i_{n\infty}(x)\leq y$.
	First, from $x\leq y_n$ we deduce $i_{n(n+q)}(x)\leq i_{n(n+q)}(y_n)=
	i_{n(n+q)}(j_{(n+q)n}(y_{n+q}))
	\leq y_{n+q}$.
	Moreover, from $x\leq y_n$ we also deduce
	$j_{n(n-q)}(x)\leq j_{n(n-q)}(y_n)=y_{n-q}$. Hence, for all $q\in \BB N$, 
	$(i_{n\infty}(x))_q\leq y_q$, which implies $i_{n\infty}(x)\leq y$.

	Suppose now $D\subseteq X_\infty$ is a directed set such that $y\leq \bigvee D$. Let $D_n=\{d_n\mid d\in D\}$. Recall that $d_n=j_{\infty n}(d)$. 
	Let us show that $D_n$ is directed: let $j_{\infty n}(d_1),j_{\infty n}(d_2)\in D_n$, then 
	$i_{n\infty}(j_{\infty n}(d_1))\leq d_1$ and 
	$i_{n\infty}(j_{\infty n}(d_2))\leq d_2$ and since $D$ is directed there exists $d_3\in D$ such that $d_1,d_2\leq d_3$. Then we have that 
	$j_{\infty n}(d_1),j_{\infty n}(d_2)\leq j_{\infty n}(d_3)$ and, since 
	$ j_{\infty n}(d_3)\in D_n$, this shows that $D_n$ is directed as well.

	Now, since $d_n=j_{\infty n}(d)$ and $j_{\infty n}$ is continuous, 
	from $y\leq \bigvee D$ we deduce
	$y_n \leq j_{\infty n}(\bigvee D)= \bigvee D_n$. 
	From $x\ll y_n$ we conclude then that there exists $ j_n(d)\in D_n$ such that 
	$x\leq j_n(d)=d_n$. 
	
	To prove that $  i_{n\infty}(x)\leq d$, let us show that for all $k\in \BB N$, 
	$( i_{n\infty}(x))_k\leq d_k$: if $k=n$,
	$( i_{n\infty}(x))_k = x\leq d_n=d_k$; if $k< n$, then 
	$( i_{n\infty}(x))_k = j_{nk}(x)$ and from 
	$x\leq d_n$ we deduce 
	$( i_{n\infty}(x))_k=j_{nk}(x)\leq j_{nk}(d_n)=d_k$; finally, if $k>n$, 
	$( i_{n\infty}(x))_k = i_{nk}(x)$ and from 
	$x\leq d_n$ we deduce 
	$( i_{n\infty}(x))_k=i_{nk}(x)\leq i_{nk}(d_n)=i_{nk}(j_{kn}(d_k))\leq d_k$.
\end{proof}

\begin{lemma} \label{lemmall3}
	For all $x,y\in X_\infty$, $x\ll y$ iff there exists $N\in \BB N$ and 
	$w_1,\dots,w_k\in X_{N}$ such that
	$
	w_1,\dots, w_k \ll y_N
	$
	 and
	$
	x \leq  i_{N\infty}(w_1\vee\dots\vee w_k)
	$.
\end{lemma}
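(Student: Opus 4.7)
The plan is to prove the two directions separately, with the forward direction being the more delicate one and the backward direction being an easy consequence of Lemma \ref{lemmall22}.

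For the backward direction, suppose we are given $w_{1},\dots,w_{k}\in X_{N}$ with $w_{i}\ll y_{N}$ and $x\leq i_{N\infty}(w_{1}\vee\dots\vee w_{k})$. First I would note that the join $w:=w_{1}\vee\dots\vee w_{k}$ exists in the Scott domain $X_{N}$ by bounded completeness (since every $w_{i}\leq y_{N}$), and that the finite join of elements way-below $y_{N}$ is again way-below $y_{N}$ (a standard fact in continuous dcpos: any directed $D$ with $\bigvee D\geq y_{N}$ eventually dominates every $w_{i}$ simultaneously by directedness). Lemma \ref{lemmall22} then gives $i_{N\infty}(w)\ll y$, and since $x\leq i_{N\infty}(w)$ we conclude $x\ll y$.

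For the forward direction, I would exploit the representation $y=\bigvee_{n=0}^{\infty}i_{n\infty}(y_{n})$ recalled just before Lemma \ref{lemmall1}. First I would verify that this is a chain (equivalently directed), checking coordinate-wise that $(i_{n\infty}(y_{n}))_{k}\leq (i_{(n+1)\infty}(y_{n+1}))_{k}$ for each $k$, using $y_{n}=j_{n}(y_{n+1})$ and $i_{n}\circ j_{n}\leq\mathrm{id}$. Next, since each $X_{n}$ is a continuous domain, $y_{n}=\bigvee\{w\in X_{n}\mid w\ll y_{n}\}$, and by continuity of $i_{n\infty}$ we get $i_{n\infty}(y_{n})=\bigvee\{i_{n\infty}(w)\mid w\ll y_{n}\}$.

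The key step, which I expect to be the main technical obstacle, is to combine all these into a single directed family $T:=\{i_{n\infty}(w)\mid n\in\mathbb{N},\ w\ll y_{n}\}$ with $\bigvee T=y$. To show directedness, given $i_{n\infty}(w)$ and $i_{m\infty}(w')$ with $n\leq m$, I would transport $w$ up to level $m$ via $i_{nm}$: by iterating Lemma \ref{lemmall1}, $i_{nm}(w)\ll i_{nm}(y_{n})\leq y_{m}$, so $i_{nm}(w)\ll y_{m}$; then $i_{nm}(w)\vee w'$ is well-defined (bounded by $y_{m}$) and still way-below $y_{m}$; finally $i_{m\infty}(i_{nm}(w)\vee w')\in T$ dominates both original elements using $i_{n\infty}=i_{m\infty}\circ i_{nm}$. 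Once directedness is in hand, $\bigvee T=y$ follows from the two layers of sups above.

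With $T$ directed and $\bigvee T=y$, the assumption $x\ll y$ yields some single $i_{N\infty}(w)\in T$ with $x\leq i_{N\infty}(w)$ and $w\ll y_{N}$; taking $k=1$ and $w_{1}=w$ finishes the forward direction. The main obstacle in this plan is really the directedness argument, as it is the one place where one has to juggle the transition maps $i_{nm}$ together with preservation of $\ll$, bounded completeness of $X_{m}$, and the identities $i_{n\infty}=i_{m\infty}\circ i_{nm}$ and $i_{nm}\circ j_{mn}\leq\mathrm{id}$; everything else is either a direct computation or an invocation of Lemma \ref{lemmall22}.
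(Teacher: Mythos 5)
Your proof is correct and follows essentially the same route as the paper's: both express $y$ as a (directed) supremum of elements $i_{n\infty}(w)$ with $w\ll y_n$, obtain directedness by transporting approximants to a common level via $i_{nm}$ together with Lemma \ref{lemmall1} and the inequality $i_{nm}\circ j_{mn}\leq\mathrm{id}$, and then invoke the definition of $\ll$. The only differences are organizational: you make the family directed \emph{before} applying $x\ll y$ (the paper instead closes the family under finite joins and pushes everything to a common level $N$ only afterwards), which even lets you conclude with $k=1$, and you spell out the easy backward direction via Lemma \ref{lemmall22}, which the paper leaves implicit.
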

\begin{proof}
	Let $D$ be the set of all $i_{n\infty}(w_n)\in X_n$, for some $n\in \BB N$, such that $
	w_n \ll y_n$.
	Observe that, by Lemma \ref{lemmall22}, any $i_{n\infty}(w_n)\in D$ satisfies $i_{n\infty}(w_n)\leq y$. This implies that all finite joins of elements of $D$ exist in $X_\infty$.
	Let $D^*$ be the set of all such finite joins. Notice that $D^*$ is directed since $d,d'\in D^*$ implies $d\vee d^*\in D^*$. 
	
	Since a base for $X_\infty$ is formed by elements of the form $i_{n\infty}(b)$, here $b$ is in the base of $X_n$, and since $X_\infty$ is a continuous domain, we deduce that 
	$y\leq \bigvee D\leq \bigvee D^*$. 
	
	Now, from $x\ll y\leq \bigvee D^*$, since the $D^*$ is directed, we deduce that there exists finitely many elements $u_1, \dots, u_k$ of $D$, where $u_j\in X_{n_j}$, such that 
	$x \leq i_{n_1\infty}(u_1)\vee\dots\vee i_{n_k\infty}(u_k)$. 
	To conclude it suffices to let $N=\max\{n_j\mid j=1,\dots,k\}$ and let
	$w_j:= i_{n_jN}(u_j)$, by observing that 
	(1) $u_j\ll y_{n_j}$ implies $w_j\leq y_N$ by Lemma \ref{lemmall1} and (2) 
	$ i_{n_j\infty}(u_j)=
	i_{N\infty}(i_{n_jN}(u_j))=i_{N\infty}(w_j)$.
\end{proof}

\begin{lemma}\label{lemmall2}
	For all $n\in \BB N$, $x\in X_n$ and $y\in X_\infty$, 
	$i_{n\infty}(x)\ll y$ $\To$ $\exists N\forall k\geq N, i_{n(n+k)}(x)\ll y_{n+k}$;
\end{lemma}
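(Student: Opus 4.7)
The plan is to combine the interpolation property of the continuous domain $X_\infty$ with the adjunction structure between the embeddings $i$ and projections $j$, so as to transport the way-below relation $i_{n\infty}(x) \ll y$ down to each finite level $X_{n+k}$.

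First, I would use interpolation in the continuous domain $X_\infty$ to pick some $z \in X_\infty$ with $i_{n\infty}(x) \ll z \ll y$. Recall from the construction that $y = \bigvee_{m \geq n} i_{m\infty}(y_m)$ is a directed supremum, the chain being increasing because $y_m = j_m(y_{m+1})$ forces $i_{m\infty}(y_m) = i_{(m+1)\infty}(i_m(j_m(y_{m+1}))) \leq i_{(m+1)\infty}(y_{m+1})$. Applying the definition of $\ll$ to $z \ll y$ yields some $M \geq n$ with $z \leq i_{M\infty}(y_M)$, and transitivity of $\ll$ with $\leq$ gives $i_{n\infty}(x) \ll i_{M\infty}(y_M)$ in $X_\infty$.

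Next, I would reflect this relation back into $X_M$. Writing $i_{n\infty}(x) = i_{M\infty}(i_{nM}(x))$ (valid since $M \geq n$), the key reflection principle is: if $i_{M\infty}(u) \ll i_{M\infty}(v)$ in $X_\infty$, then $u \ll v$ in $X_M$. Indeed, for any directed $D \subseteq X_M$ with $v \leq \bigvee D$, continuity of $i_{M\infty}$ gives $i_{M\infty}(v) \leq \bigvee i_{M\infty}(D)$, so by the way-below assumption there exists $d \in D$ with $i_{M\infty}(u) \leq i_{M\infty}(d)$, and applying the left inverse $j_{\infty M}$ produces $u \leq d$. This reflection delivers $i_{nM}(x) \ll y_M$ in $X_M$.

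Finally, I would propagate this relation upward through the tower. Since $y \in X_\infty$, we have $y_m = j_m(y_{m+1})$ for every $m$, so in particular $i_{nM}(x) \ll j_M(y_{M+1})$. The elementary transfer rule for the adjunction $i_M \dashv j_M$, namely $a \ll j_M(b) \Rightarrow i_M(a) \ll b$ (proved by pushing a directed set $D \subseteq X_{M+1}$ with $b \leq \bigvee D$ through the continuous map $j_M$ and then using $i_M \circ j_M \leq \mathrm{id}$), then gives $i_{n(M+1)}(x) = i_M(i_{nM}(x)) \ll y_{M+1}$. Iterating this step yields $i_{n(M+k)}(x) \ll y_{M+k}$ for every $k \geq 0$, and choosing $N := M - n$ concludes the proof. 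The step demanding the most care is the reflection from $X_\infty$ down to $X_M$: the directed-sup characterization of $\ll$ only yields the inequality $i_{nM}(x) \leq y_M$ directly, and interpolation is precisely what lets us upgrade the intermediate $\leq$ into a genuine $\ll$ before reflecting.
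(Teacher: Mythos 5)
Your proof is correct, and it shares the paper's overall skeleton --- first establish $i_{nM}(x)\ll y_M$ at some finite level $M$, then push the relation up the tower by induction --- but it reaches the base case by a genuinely different route. The paper invokes its characterization of the way-below relation in $X_\infty$ (Lemma \ref{lemmall3}): from $i_{n\infty}(x)\ll y$ it extracts finitely many $w_1,\dots,w_k\in X_M$ with $w_j\ll y_M$ and $i_{n\infty}(x)\leq i_{M\infty}(w_1\vee\dots\vee w_k)$, and then simply takes the $M$-th component to get $i_{nM}(x)\leq w_1\vee\dots\vee w_k\ll y_M$. You instead interpolate $i_{n\infty}(x)\ll z\ll y$ in the continuous domain $X_\infty$, use the increasing chain $y=\bigvee_m i_{m\infty}(y_m)$ to place $z$ below some $i_{M\infty}(y_M)$, and then reflect $i_{M\infty}(i_{nM}(x))\ll i_{M\infty}(y_M)$ down to $X_M$ via continuity of $i_{M\infty}$ and the section--retraction identity $j_{\infty M}\circ i_{M\infty}=\mathrm{id}_{X_M}$; your closing remark is exactly the right diagnosis of why interpolation is indispensable here, since the directed-sup argument alone only yields $i_{nM}(x)\leq y_M$. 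The upward propagation is essentially the paper's: your transfer rule $a\ll j_M(b)\Rightarrow i_M(a)\ll b$ packages Lemma \ref{lemmall1} together with $i_M\circ j_M\leq \mathrm{id}$ into a single step. What your version buys is independence from Lemma \ref{lemmall3} and from the explicit description of the basis of $X_\infty$, at the cost of invoking the interpolation property; what the paper's version buys is the reuse of a lemma it needs elsewhere anyway.
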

\begin{proof}
	
	From $i_{n\infty}(x)\ll y$ via Lemma \ref{lemmall3} we deduce that there exists 
	$M\in \BB N$ and $w_1, \dots, w_k\in X_{N}$ such that 
	$w_j\ll y_M$ and 
	$i_{n\infty}(x) \leq  i_{M\infty}(w_1\vee\dots\vee w_k)$, where we can suppose that 
	$n\leq M$.

	Let $N=M-n$. 
	Since $i_{n\infty}(x)=i_{M\infty}(i_{nM}(x))$, from 
	$i_{n\infty}(x) \leq  i_{M\infty}(w_1\vee\dots\vee w_k)$ we deduce
	$i_{n(n+N)}(x)= (i_{n\infty}(x))_M\leq w_1\vee\dots\vee w_k\ll y_{M}$.

	Finally, to conclude that $i_{n(n+k)}(x)\ll y_{n+k}$ holds for all $k\geq N$ we can argue by induction, using Lemma \ref{lemmall1} and the fact that $i_{m}(y_m)\leq y_{m+1}$. 
\end{proof}

}
\end{document}